\documentclass[sigconf, nonacm]{acmart}
\AtBeginDocument{%
  }

\setcopyright{acmlicensed}
\copyrightyear{2018}
\acmYear{2018}
\acmDOI{XXXXXXX.XXXXXXX}
\acmConference[Conference acronym 'XX]{Make sure to enter the correct
  conference title from your rights confirmation email}{June 03--05,
  2018}{Woodstock, NY}
\acmISBN{978-1-4503-XXXX-X/2018/06}

\usepackage[utf8]{inputenc} 
\usepackage[T1]{fontenc}    
\usepackage{hyperref}       
 
\usepackage{cuted}
\usepackage{url}            
\usepackage{booktabs}       
\usepackage{amsfonts}       
\usepackage{nicefrac}       
\usepackage{microtype}      
\usepackage[dvipsnames]{xcolor}         
\definecolor{mypurple}{HTML}{6663A1}
\definecolor{myred}{HTML}{ED7D8D}
\definecolor{myblue}{HTML}{7CC9ED}
\definecolor{mygray}{HTML}{D9D9D9}
\usepackage{enumitem}
\usepackage{amsthm}
\usepackage{amsmath}
\usepackage{bm}

\usepackage{amssymb}
\usepackage{mathtools}
\usepackage[linesnumbered,ruled]{algorithm2e}
\usepackage{makecell}
\usepackage{multirow}
\usepackage{subfigure} 
\usepackage{booktabs}
\usepackage{wrapfig}
\usepackage{extarrows}
\usepackage{colortbl}
\usepackage{tikz}
\usepackage{pifont}
\usepackage[most]{tcolorbox}
\usepackage{setspace}
\usepackage{adjustbox}
\usepackage{fvextra}
\tcbuselibrary{breakable}
\usepackage{longtable}
\usepackage{wasysym}

\newcommand{\modelname}{\textsc{Gfm}-Retriever}
\newcommand{\ie}{\textit{i.e.}}
\newcommand{\eg}{\textit{e.g.}}

\newcommand{\etc}{\textit{etc.}}

\newcommand{\mycite}[1]{\hyperlink{cite.#1}{$\rhd$}}

\renewcommand\eqref[1]{\textup{(\ref{#1})}}

\theoremstyle{definition}

\renewcommand\footnotetextcopyrightpermission[1]{} 
\newtcolorbox{mathbox}{
  colframe=black,
  colback=white,
  rounded corners,
  boxrule=0.5pt,
  breakable,
  left=0.5pt, right=0.5pt, top=0.5pt, bottom=0.5pt
}

\begin{document}

\title[Retrieving Minimal and Sufficient Reasoning Subgraphs with Graph Foundation Models for Path-aware GraphRAG]{Retrieving Minimal and Sufficient Reasoning Subgraphs\\with Graph Foundation Models for Path-aware GraphRAG}

\author{Haonan Yuan\textsuperscript{1}, Qingyun Sun\textsuperscript{1}, Junhua Shi\textsuperscript{1}, Mingjun Liu\textsuperscript{2}, Jiaqi Yuan\textsuperscript{1}, Ziwei Zhang\textsuperscript{1},\\Xingcheng Fu\textsuperscript{3}, Jianxin Li\textsuperscript{1}}
\affiliation{%
  \institution{\textsuperscript{1}Beihang University \quad \textsuperscript{2}University of Electronic Science and Technology of China \quad \textsuperscript{3}Guangxi Normal University}
}

\renewcommand{\shortauthors}{Preprint. Under review.}

\begin{abstract}
Graph-based retrieval-augmented generation (GraphRAG) exploits structured knowledge to support knowledge-intensive reasoning. 
However, most existing methods treat graphs as intermediate artifacts, and the few subgraph-based retrieval methods depend on heuristic rules coupled with domain-specific distributions. They fail in typical cold-start scenarios where data in target domains is scarce, thus yielding reasoning contexts that are either informationally incomplete or structurally redundant.
In this work, we revisit retrieval from a structural perspective, and propose \textbf{\modelname} that directly responds to user queries with a subgraph, where a pre-trained \underline{\textbf{G}}raph \underline{\textbf{F}}oundation \underline{\textbf{M}}odel acts as a cross-domain \underline{\textbf{Retriever}} for multi-hop path-aware reasoning.
Building on this perspective, we repurpose a pre-trained GFM from an entity ranking function into a generalized retriever to support cross-domain retrieval.
On top of the retrieved graph, we further derive a label-free subgraph selector optimized by a principled Information Bottleneck objective to identify the 
query-conditioned subgraph, which contains informationally sufficient and structurally minimal golden evidence in a self-contained ``core set''.
To connect structure with generation, we explicitly extract and reorganize relational paths as in-context prompts, enabling interpretable reasoning.
Extensive experiments on multi-hop question answering benchmarks demonstrate that \modelname~achieves state-of-the-art performance in both retrieval quality and answer generation, while maintaining efficiency.
\end{abstract}

\maketitle

\section{Introduction}
\label{sec:intro}
Retrieval-augmented generation (RAG) has emerged as a central paradigm for enabling large language models (LLMs) to access and reason over external knowledge in a wide range of real-world applications~\cite{fan2024survey, xu2024ram, wiratunga2024cbr, wang2025omnieval}. To better facilitate multi-hop reasoning in knowledge-intensive question answering, recent works extend the RAG paradigm by incorporating knowledge graphs, giving rise to graph-based retrieval-augmented generation (GraphRAG)~\cite{zhang2025survey}. Despite this motivation, most GraphRAG systems ultimately reduce retrieval outputs to ranked lists of entities and documents~\cite{peng2024graph, han2024retrieval}. Such ranking-based retrieval abstracts away relational dependencies, forcing downstream answer generation to implicitly reconstruct reasoning paths from isolated evidence.

We demonstrate a case study in Figure~\ref{fig:intro}, which presents a user question whose answer depends on tracing multiple relational dependencies across entities~\cite{ghassel2025hierarchical}. While the entity ranking can traverse nodes that are individually relevant, it does not indicate how these entities interact or which relational paths are critical for inference. As a result, the reasoning process remains implicit and fragile. In contrast, a query-specific subgraph explicitly preserves relational structure, making multi-hop dependencies observable and providing the core structural context for generation~\cite{li2025disentangling}. This comparison suggests that responding to a subgraph rather than a flat ranking offers a more informative retrieval interface for complex reasoning.

\begin{figure}[!t]
    \centering
    \includegraphics[width=\linewidth]{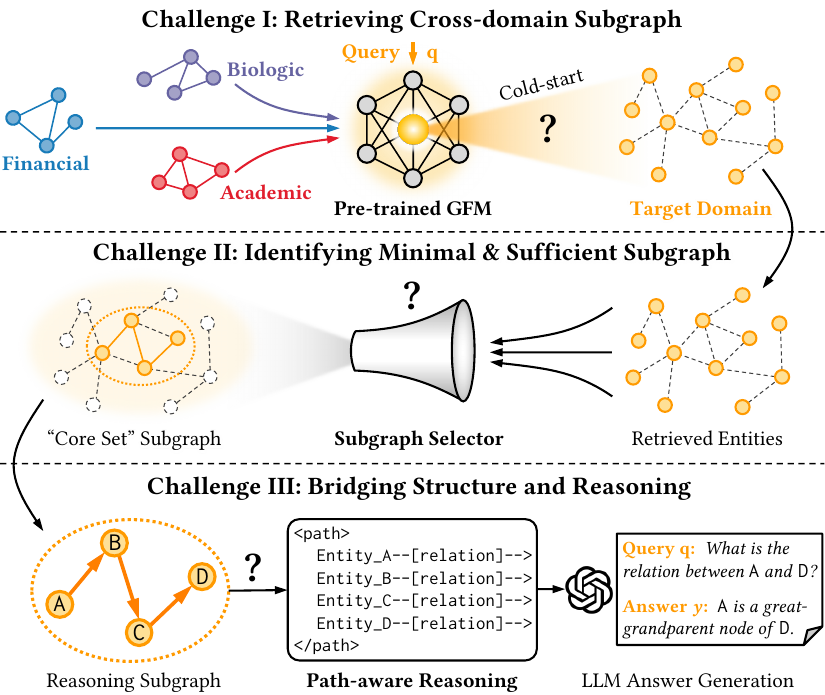}
    \vspace{-0.6cm}
    \caption{Challenges in subgraph-based RAG.}
    \label{fig:intro}
\vspace{-0.35cm}
\end{figure}

Motivated by this key observation, a growing line of GraphRAG methods has started to incorporate subgraph-level retrieval to better support complex reasoning~\cite{li2025simple, gao2025xdgnn}. Instead of returning isolated entities or documents, these works aim to construct query-relevant substructures that expose relational dependencies.
However, they typically rely on heuristic rules that determine which entities and relations to include~\cite{zhang2022grasp}. Crucially, these heuristics are often implicitly coupled to domain-specific distributions, making them fragile in typical cold-start scenarios where target-domain data are scarce. Without sufficient supervision, they lack a principled mechanism to balance informative sufficiency with structural minimality. Consequently, they risk producing reasoning contexts that are either informationally incomplete or structurally redundant.

The aforementioned observations suggest that: although subgraphs offer a more faithful intermediate representation than ranked entities, several fundamental challenges remain unaddressed:

\textbf{Challenge I: Retrieving Cross-domain subgraph.}
In existing GraphRAG systems, retrieval is tightly coupled with query-specific propagation on a fixed knowledge graph~\cite{he2024g, mavromatis2025gnn}. While effective in closed settings, such designs struggle to generalize to unseen domains characterized by distinct distributions. In typical cold-start scenarios, training a domain-specific retriever is often infeasible due to data scarcity, and directly applying a pre-trained retriever tends to fail because of distribution misalignment. A key challenge is therefore to establish a generalized retrieval paradigm that decouples learned retrieval patterns from specific domains.

\textbf{Challenge II: Identifying the minimal yet sufficient subgraph.}
Retrieving subgraphs introduces a fundamental trade-off between sufficiency and redundancy. Larger subgraphs increase the chance of containing all relevant evidence, but also introduce substantial redundancy, whereas smaller subgraphs risk missing critical reasoning paths. Existing methods typically constrain this trade-off through heuristic budgets, fixed hop limits, or manually tuned pruning strategies~\cite{li2025simple}, which are insensitive to query difficulty and hard to generalize. Without explicit supervision on what constitutes a ``ground-truth subgraph'', it remains challenging to learn a principled, query-adaptive notion of sufficiency.

\textbf{Challenge III: Bridging structure and reasoning.}
Even when a query-relevant subgraph is retrieved, its structural information is often underutilized by downstream generators. Standard practices frequently collapse the retrieved subgraph into document sets or flattened triple lists, which inadvertently discard the relational dependencies essential for multi-hop inference~\cite{zhang2023graph}. Consequently, LLM is forced to implicitly reconstruct reasoning chains from a fragmented context, rendering the process uninterpretable. A practical challenge is to preserve and reorganize these relational paths to explicitly guide the reasoning process of the LLM.
To address the challenges identified above, we propose \textbf{\modelname}, a subgraph-based RAG framework that revisits retrieval from a structural perspective by directly responding to user queries with subgraphs. In contrast to existing methods that tightly couple retrieval with query-time graph traversal or heuristic expansion, \modelname~elevates subgraphs to the retrieval interface itself and learns to retrieve them in a principled manner.
At the core of \modelname~is a pre-trained \underline{\textbf{G}}raph \underline{\textbf{F}}oundation \underline{\textbf{M}}odel acting as a generalized \underline{\textbf{Retriever}}, which enables efficient and transferable subgraph retrieval across domains.
To address the lack of principled subgraph selection, we derive a label-free subgraph selector optimized under an Information Bottleneck objective, explicitly balancing informative sufficiency and structural minimality.
Finally, to make retrieved structures actionable for answer generation, \modelname~extracts and reorganizes relational paths from subgraph into structured in-context prompts, allowing the graph structure to directly guide multi-hop reasoning.
\textbf{Our contributions:}
\begin{itemize}[leftmargin=*]
    \item We propose \modelname, a subgraph-based RAG that retrieves query-specific subgraphs. To the best of our knowledge, \modelname~is the first work that leverages a cross-domain GFM to learn minimal yet sufficient subgraphs for answer generation.
    \item We introduce a label-free Information Bottleneck formulation for subgraph selection and derive a tractable optimization objective, enabling principled identification of query-relevant structure that explicitly supports multi-hop, path-aware reasoning.
    \item Extensive experiments demonstrate \modelname~consistently improves both retrieval quality and end-to-end answer generation against state-of-the-art baselines.
\end{itemize}
\section{Related Work}
\label{sec:related_work}
\subsection{Graph-based RAG}
Retrieval-augmented generation (RAG) grounds the large language models in externally retrieved knowledge to improve the factuality in knowledge-intensive tasks~\cite{zhao2023survey, huang2023survey}. Most methods follow a ``retrieve-then-generate'' paradigm that incorporates retrieved text as context~\cite{fan2024survey}, but typically treat evidence as unstructured. Graph-based retrieval-augmented generation (GraphRAG) introduces graphs as knowledge structures that explicitly encode relational information~\cite{zhang2025survey}. By organizing entities as nodes connected by edges, GraphRAG enables relation-aware retrieval and reasoning and has shown benefits for multi-step reasoning~\cite{edge2024local, peng2024graph, han2025reasoning}. However, existing GraphRAG methods rely on large or fixed structures that introduce redundant or task-irrelevant information, highlighting the challenge of constructing task-relevant graphs.

\subsection{Graph Foundation Models}
Graph foundation models (GFMs) aim to learn transferable graph representations via large-scale self-supervised pre-training~\cite{liu2022graph, xie2022self}. While effective within the limited domains, most methods underestimate distribution shifts between pre-training and downstream, leading to degraded transferability~\cite{huang2022few}. Recent work explores cross-domain graph learning and LLM-enhanced alignment to improve generalization~\cite{zhao2024all, he2024unigraph, tang2024higpt}, but often relies on implicit representation similarity or restricted graph assumptions. Though some studies apply graph foundation models as retrievers for RAG systems~\cite{luo2025gfm}, their pre-training remains a semantic ``black box'' without explicit cross-domain alignment, limiting its generalizability.

\subsection{Critical Subgraph Mining}
Although graph structure encodes rich relational semantics, only a small part of subgraphs are truly informative, which can be decisive for downstream tasks~\cite{yang2018node}. Early work relies on pre-defined structural patterns, such as kernels or motifs, which provide strong inductive bias but remain heuristic, and difficult to scale or adapt across tasks~\cite{kriege2020survey, monti2018motifnet}. More recent approaches explore learnable or information-theoretic criteria to identify task-relevant subgraphs~\cite{yu2020graph, sui2022causal}, yet often lack explicit semantic preservation or minimality guarantees. In GraphRAG, subgraphs are rarely used as retrieval units, and existing attempts focus on coverage rather than task sufficiency, leaving the problem of identifying compact and critical subgraphs largely unresolved~\cite{li2025simple}.
\section{Preliminary}
\label{sec:preliminary}
\textbf{Notation.}
A knowledge graph is denoted as $\mathcal{G}=(\mathcal{V},\mathcal{R},\mathcal{E})$,
where $\mathcal{V}$ is entity set, $\mathcal{R}$ denotes relations, and $\mathcal{E}\!\subseteq\! \mathcal{V}\!\times\!\mathcal{R}\!\times\!\mathcal{V}$ denotes the relational triples.
$\mathcal{G}$ is indexed from a corpus of documents $\mathcal{C}=\{c_i\}$ from different knowledge domains $\{\mathcal{D}\}$. Each $c_i$ links to a set of entities $\phi(c_i)\subseteq\mathcal{V}$.
Denote the Graph Foundation Model as $f_{\boldsymbol{\theta}}(\cdot)$, where the parameter $\boldsymbol{\theta}$ is pre-trained on $\mathcal{G}$ with the KG completion task, and then fine-tuned with the subgraph retrieval task.
Given the query $\mathbf{q}$, $f_{\boldsymbol{\theta}}^\star(\cdot)$ acts as a query-dependent retriever for a subgraph $\mathcal{G}_{\mathbf{q}}=f_{\boldsymbol{\theta}}^\star(\mathcal{G}^T,\mathbf{q})$ on the target knowledge graph $\mathcal{G}^T$.




\textbf{Subgraph-based RAG.}
Given a query $\mathbf{q}$, the retrieved subgraph $\mathcal{G}_{\mathbf{q}}$ is denoted as $\mathcal{G}_{\mathbf{q}}=(\mathcal{V}_{\mathbf{q}},\mathcal{R}_{\mathbf{q}},\mathcal{E}_{\mathbf{q}})$, which specifies both the relevant entities and their dependencies. To ground the retrieved entities in text, we derive a supporting document set $\mathcal{C}_{\mathbf{q}}\subseteq\mathcal{C}$ by mapping entities back to their corpus as $\mathcal{C}_{\mathbf{q}}=\bigcup_{e\in\mathcal{V}_{\mathbf{q}}}\phi^{-1}(e)\!=\!\{c\!\in\!\mathcal{C}\mid \phi(c)\cap \mathcal{V}_{\mathbf{q}}\neq \emptyset\}$, where $\phi^{-1}(e)$ denotes the chunks mentioning entity $e$. The downstream answer generator is an LLM that conditions on the query together with the retrieved context to produce an answer $y=\mathrm{LLM}(\mathbf{q},\mathcal{G}_{\mathbf{q}},\mathcal{C}_{\mathbf{q}})$.

\begin{figure*}[!t]
    \centering
        \begin{minipage}{\textwidth}
            \hspace{-0.1cm}
            \begin{adjustbox}{width=1.01\linewidth}
            \input{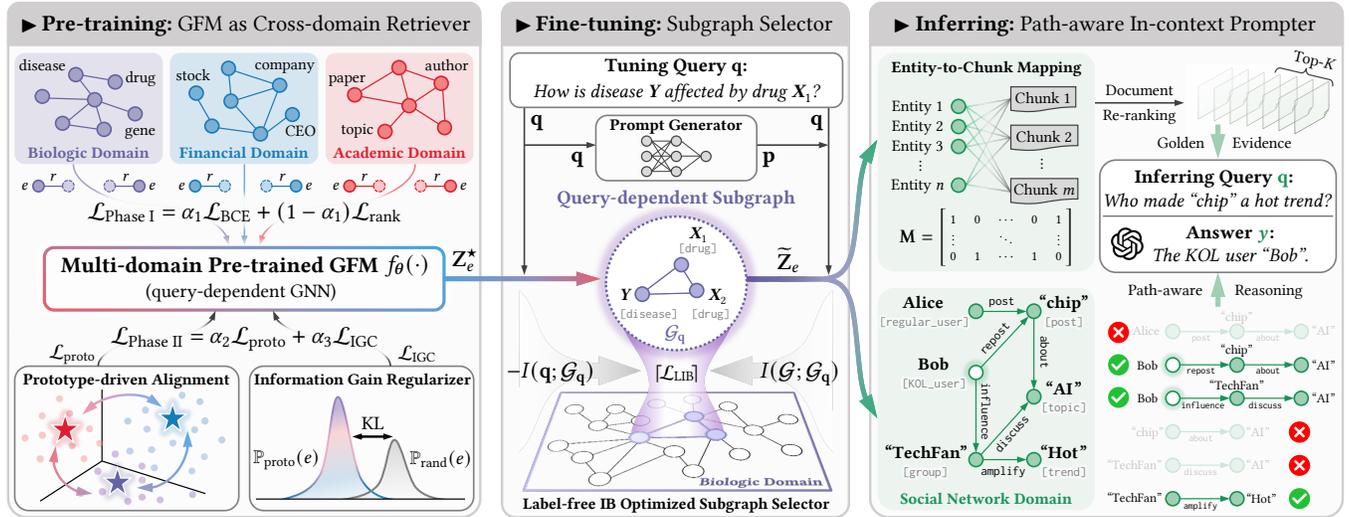}
            \end{adjustbox}
        \end{minipage}
    \vspace{-0.3cm}
    \caption{An overview of the \modelname~framework. \textbf{(1)} Pre-training: A query-conditioned GFM is trained as a cross-domain retriever. \textbf{(2)} Fine-tuning: A label-free IB-optimized selector identifies minimal sufficient query-specific subgraphs. \textbf{(3)} Inferring: Retrieved subgraphs are transformed into path-aware in-context prompts to guide multi-hop reasoning. Click ``$\blacktriangleright$'' to navigate.}
\label{fig:framework}
\end{figure*}
\section{Proposed Framework: \modelname}
\label{sec:method}
In this section, we introduce the proposed \modelname.

\subsection{Generalized GFM as Cross-domain Retriever}
\label{sec:method_gfm}
To address cross-domain retrieval under data-scarce and cold-start scenarios, we establish a GFM-based retriever that is capable of generalizing across different knowledge domains by learning a unified yet discriminative representation space across multi-domain KGs and supports consistent retrieval.

\textbf{Query-dependent GFM Message-Passing.}
Different from traditional GNN message-passing, which aggregates independently of task~\cite{kipf2017semi}, the GFM retriever employs a query-dependent message-passing mechanism that dynamically adapts to input queries~\cite{luo2025gfm}.
Given a query $\mathbf{q}$, the GFM retriever $f_{\boldsymbol{\theta}}(\cdot)$ first initialize entity ($e$) and relation ($r$) embeddings for all $e\in\mathcal{V}$, $r\in\mathcal{R}$:
\begin{equation}
    \mathbf{Z}_{e}^{(0)}=
    \begin{cases}
      \mathbf{q},  & e\in\{\mathcal{V}\mid \mathbf{q}\}, \\
      \textbf{0},  & \text{otherwise},
    \end{cases}
    ~\text{and}~
    ~\mathbf{Z}_{r}^{(0)}=\textsc{TextEmb}(r),
\label{eq:init}
\end{equation}
where $\mathbf{Z}_e$ and $\mathbf{Z}_r$ denote the embeddings of entities and relations, respectively, and $e \in \{\mathcal{V} \mid \mathbf{q}\}$ indicates the set of entities mentioned in the query $\mathbf{q}$. This initialization assigns higher activation to query-relevant entities, providing a contextualized starting point for subsequent propagation.
The message passing at the $(l+1)$-th layer then proceeds through relation-aware triple encoding, entity-wise aggregation, and representation update.
\begin{align}
    &\mathbf{m}_v^{(l+1)}=\textsc{Msg}\big( \mathbf{Z}_{v}^{(l)}, g^{(l+1)}(\mathbf{Z}_r^{(l)}), \mathbf{Z}_{u}^{(l)} \big),~ (u,r,v)\in \mathcal{E},\label{eq:message}\\
    &\mathbf{Z}_{v}^{(l+1)}=\textsc{Update}\big( \mathbf{Z}_{v}^{(l)}, \textsc{Agg}(\{\mathbf{m}_u^{(l+1)}\mid u\in\mathcal{N}(v)\}) \big),\label{eq:update}
\end{align}
where the message $\mathbf{m}$ captures triple-level interactions between the neighboring entities and relations. Specifically, $\textsc{Msg}(\cdot)$, the projection function $g(\cdot)$, $\textsc{Update}(\cdot)$, and $\textsc{Agg}(\cdot)$ are learnable components of the GFM $f_{\boldsymbol{\theta}}(\cdot)$ and are jointly pre-trained over the indexed multi-domain knowledge graphs.

Query-dependent GNNs are theoretically capable of multi-hop reasoning, but they assume the semantic consistency of the KG~\cite{yasunaga2021qa, huang2023theory}. Nevertheless, as the knowledge graph $\mathcal{G}$ in our setting is indexed from the multi-domain documents with varying semantic distributions, it potentially breaks the assumption. We next bridge this alignment gap by deriving from \citet{qiu2024understanding}.

    \begin{proposition}[\textbf{Multi-domain logical expressivity of query-conditioned GFM}]
    \label{prop:expressivity}
    Let $\mathcal{G}$ be a knowledge graph indexed from multiple domains. $\{\operatorname{Dom}_d(e)\}$ denotes unary predicates indicate domain attributes of entity $e$. For a query $q$, the query-conditioned GFM can learn a rule $\textnormal{\textsf{R}}(\textnormal{$\mathbf{q}$},e)$ if and only if $\textnormal{\textsf{R}}(\textnormal{$\mathbf{q}$},e)$ is a formula in graded modal logic over $\mathcal{G}$ with constant \textnormal{$\mathbf{q}$}, \ie, $\exists^{\geqslant N} e^\prime(\textnormal{\textsf{R}}(e^\prime,e)\wedge\operatorname{Dom}_d(e^\prime)\wedge\psi(e^\prime))$, where $\psi(\cdot)$ a subformula. \hfill\hyperref[app:proof_expressivity]{\textnormal{[$\rhd$Proof]}}
    \end{proposition}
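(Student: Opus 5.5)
The plan is to reduce the statement to the characterization of \citet{qiu2024understanding}: query-conditioned (labeling-trick) relational message-passing networks, of the form \eqref{eq:message}--\eqref{eq:update} and initialized as in \eqref{eq:init}, capture exactly the binary rules $\textsf{R}(\mathbf{q},e)$ that are expressible in graded modal logic over the relational vocabulary with the query entity as a constant. The key step is to treat the domain attributes $\operatorname{Dom}_d(\cdot)$ as additional unary predicates and to fold them into the initial entity features. Concretely, I will augment \eqref{eq:init} so that $\mathbf{Z}_e^{(0)}$ also carries a (one-hot or text-derived) domain indicator $\mathbf{1}[\operatorname{Dom}_d(e)]$. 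For example, I can concatenate it to the query-indicator feature; the text-embedded relation initialization already distinguishes relations across domains. Once this is done, $\mathcal{G}$ becomes a single relational structure over the signature $\mathcal{R}\cup\{\operatorname{Dom}_d\}_d\cup\{\mathbf{q}\}$, which is exactly the setting of the cited theorem.

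\textbf{``If'' direction.} I will argue by induction on the modal depth of the formula. Each unary predicate is read off from $\mathbf{Z}^{(0)}$: the constant $\mathbf{q}$ via the query-indicator feature, and $\operatorname{Dom}_d$ via the new domain coordinate. Boolean connectives are handled by the \textsc{Update} MLP, which acts as in standard ACR-GNN constructions with ReLU and truncation. For the graded modality $\exists^{\geqslant N} e'(\textsf{R}(e',e)\wedge\operatorname{Dom}_d(e')\wedge\psi(e'))$, I use the relation-specific message $g^{(l+1)}(\mathbf{Z}_r)$. I choose \textsc{Msg} so that it emits a coordinate equal to $1$ exactly when the relation is $\textsf{R}$ and the sender satisfies the conjunction $\operatorname{Dom}_d\wedge\psi$. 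This is a conjunction of already-computed coordinates, so it is realizable by a single thresholded linear map. Sum aggregation then counts such neighbors, and the update thresholds the count at $N$. One layer is used per modal depth.

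\textbf{``Only if'' direction.} I will follow the Weisfeiler--Leman style argument of \citet{qiu2024understanding}. Query-conditioned message passing cannot distinguish two pointed structures $(\mathcal{G},\mathbf{q},e)$ and $(\mathcal{G}',\mathbf{q},e')$ that are graded-bisimilar with respect to the augmented signature. Therefore any rule it computes is bisimulation invariant. For any finite bound on depth, invariance yields definability in graded modal logic, by the graded van Benthem/Otto-type result that the cited work already invokes. The domain predicates enter only as extra atomic propositions, so that argument goes through unchanged.

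The main obstacle is justifying that the domain predicates are genuinely available to the model, since the paper's initialization \eqref{eq:init} has no explicit domain feature. My first attempt is to argue that the text embeddings, together with the entity-to-document map $\phi$, make $\operatorname{Dom}_d$ a learnable function of the layer-0 or layer-1 features. If that fails, I will state the result for the augmented initialization and note that it coincides with \eqref{eq:init} up to feature concatenation.
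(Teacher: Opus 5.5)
Your proposal is essentially the paper's proof. Sufficiency goes by induction on the formula: a message fires exactly on witnesses of $\textsf{R}\wedge\operatorname{Dom}_d\wedge\psi$, sum aggregation counts them, and a shifted step activation realizes $\exists^{\geqslant N}$; necessity goes via 1-WL/graded-bisimulation invariance with the domain predicates added to the initial colouring, and your use of $g(\mathbf{Z}_r)$ for relation-specificity and your explicit Otto-type step (which needs $\textsf{R}$ to be first-order, as Qiu et al.'s rule formulas are) are, if anything, more precise than the paper's sketch. The paper simply assumes $\operatorname{Dom}_d(e)$ is encoded in $\mathbf{Z}_e^{(0)}$, i.e.\ it uses your augmented initialization, so go straight there: your first attempt cannot succeed, because under Eq.~\eqref{eq:init} every non-query entity starts at $\mathbf{0}$ and $\phi$ never enters message passing, so entities from different domains with isomorphic relation-labelled neighbourhoods receive identical embeddings at every layer.
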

In plain terms, Proposition~\ref{prop:expressivity} shows that a query-conditioned GFM over a multi-domain knowledge graph can represent exactly those query-entity relations definable by graded modal logic, with the query $\textsf{q}$ acting as a constant. This captures the multi-hop relational and domain-aware reasoning required for complex queries.

\textbf{Prototype-driven Multi-Domain Pre-training.}
An ideal GFM should exhibit homogenization across tasks and emergence of transferable knowledge~\cite{liu2025graph}, enabling it as a cross-domain retriever under distribution shifts. To achieve this, we pre-train $f_{\boldsymbol{\theta}}(\cdot)$ with KG completion task.
For each masked triple $(e,r,?)$ or $(?,r,e)$ randomly sampled from $\mathcal{G}$ over domains, $f_{\boldsymbol{\theta}}(\cdot)$ receives a pseudo-query $\mathbf{q}$ constructed from the known elements and predicts the missing entity with relevance score~\cite{luo2025gfm}:
\begin{equation}
    \mathbb{P}_\mathbf{q}(e)=\sigma\big(\textsc{MLP}(\mathbf{Z}^{(L)}_e)\big)\in\mathbb{R}^{|\mathcal{V}|\times 1},
\label{eq:relevance}
\end{equation}
which yields a binary cross-entropy (BCE) loss:
\begin{equation}
    \mathcal{L}_\text{BCE}=-\frac{1}{|\mathcal{V}_\mathbf{q}^+|}\sum_{e\in \mathcal{V}_\mathbf{q}^+}\log\mathbb{P}_\mathbf{q}(e)-\frac{1}{|\mathcal{V}_\mathbf{q}^-|}\sum_{e\in \mathcal{V}_\mathbf{q}^-}\log(1-\mathbb{P}_\mathbf{q}(e)),
\label{eq:bce}
\end{equation}
where $\mathcal{V}_\mathbf{q}^+$ denotes the set of entities relevant to the query $\mathbf{q}$, and $\mathcal{V}_\mathbf{q}^-$ denotes the remaining entities. To further alleviate the vanishing-gradient issue caused by the sparsity of positive entities, we additionally incorporate a ranking loss to enforce a margin between positive and negative entities~\cite{bai2023regression,luo2025gfm}:
\begin{equation}
    \mathcal{L}_\text{rank}=-\frac{1}{|\mathcal{V}_\mathbf{q}^+|}\sum \nolimits_{e\in \mathcal{V}_\mathbf{q}^+}\frac{\mathbb{P}_\mathbf{q}(e)}{\sum_{e^\prime\in \mathcal{V}_\mathbf{q}^-}}\mathbb{P}_\mathbf{q}(e^\prime).
\label{eq:rank}
\end{equation}
To improve cross-domain semantic consistency, we introduce a lightweight prototype-based contrastive continuous pre-training mechanism. For each pre-training domain $d \in \{\mathcal{D}\}$, we uniformly sample a subset of query-entity pairs $(\mathbf{q}i, e_i) \in \mathcal{Q}_d$, and define the semantic prototype $\mathbf{c}_d$ of domain $d$ as the mean embedding of the corresponding entities $\{e_i\}$.
The prototype-driven objective is:
\begin{equation}
    \mathcal{L}_{\text{proto}} = \sum_{(\mathbf{q}_i, e_i, d^+) \in \bigcup_d\{\mathcal{Q}_d\}} \log \frac{\exp\big(\mathrm{sim}(\mathbf{Z}_{e_i}^{(L)}, \mathbf{c}_{d^+}) / \tau\big)}{\sum_{d \in \mathcal{D}} \exp\big(\mathrm{sim}(\mathbf{Z}_{e_i}^{(L)}, \mathbf{c}_d) / \tau\big)},
\label{eq:proto}
\end{equation}
where $\{\mathbf{c}_d\}_{d \in \{\mathcal{D}\}}$ denote the prototypes of all domains, $\mathbf{c}_{d^+}$ corresponds to the ground-truth domain of $(\mathbf{q}_i, e_i)$, and $\tau$ is a temperature hyperparameter. This loss encourages retrieved entity representations to align with their corresponding domain prototypes while remaining separated from unrelated domain centers. The optimization is performed over a small sampled subset, ensuring efficiency during continuous pre-training.

\textbf{Regularization via Information Gain.}
While $\mathcal{L}_{\text{proto}}$ encourages semantic discriminability across domains, it does not explicitly distinguish informative semantic alignment from spurious correlations or random perturbations. To further emphasize domain-relevant information, we introduce an information gain-based regularizer~\cite{kent1983information, zhang2022information}.
Given the predicted distribution $\mathbb{P}_\mathbf{q}(e)$, the Information Gain Contrast (IGC) regularizer is defined as:
\begin{equation}
    \mathcal{L}_\text{IGC}=\operatorname{KL}(\mathbb{P}_\mathbf{q}(e)~\|~\mathbb{P}_\text{proto}(e))-\operatorname{KL}(\mathbb{P}_\mathbf{q}(e)~\|~\mathbb{P}_\text{rand}(e)),
\label{eq:igc}
\end{equation}
where $\operatorname{KL}(\cdot)$ denotes the Kullback-Leibler divergence.
$\mathbb{P}_\text{proto}(e)\propto\exp(\operatorname{sim}(\mathbf{Z}_e,\mathbf{c}_{d^+}))$ is a prototype-induced distribution reflecting domain-consistent semantics, while $\mathbb{P}_\text{rand}(e)$ is a randomized reference distribution constructed by shuffling domain prototypes.
By explicitly contrasting structured semantic alignment against randomness, $\mathcal{L}_\text{IGC}$ encourages $\mathbb{P}_\mathbf{q}(e)$ to assign higher confidence to entities grounded in domain-relevant semantics, thereby strengthening cross-domain consistency.

\textbf{Overall Pre-training Objective.}
The full pre-training loss combines retrieval supervision with semantic regularization:
\begin{equation}
    \mathcal{L}_{\text{pre}} = \underbrace{\alpha_1\mathcal{L}_{\text{BCE}} + (1-\alpha_1)\mathcal{L}_{\text{rank}}}_{\text{Phase I}} + \underbrace{\alpha_2 \mathcal{L}_{\text{proto}} + \alpha_3 \mathcal{L}_{\text{IGC}}}_{\text{Phase II}},
\label{eq:pre}
\end{equation}
where $\alpha_1$ is the trade-off hyper-parameter, $\alpha_2$ and $\alpha_3$ are the Lagrange multipliers.
Regarding the diversity of objective terms, we propose a two-phase pre-training strategy to enhance the stability.
During Phase I, we pre-train $f_{\boldsymbol{\theta}}(\cdot)$ for a fixed number of epochs with $\mathcal{L}_{\text{BCE}}$ and $\mathcal{L}_{\text{rank}}$ on KG completion task over mixed-domain triples to establish its general retrieval capability.
During Phase II, we apply continuous pre-training mechanism with $\mathcal{L}_{\text{proto}}$ and $\mathcal{L}_{\text{IGC}}$ on a subset for the fine-grained semantic alignment. This staged training scheme stabilizes optimization and avoids conflicts among heterogeneous loss signals introduced by prototype-level regularization.
We will specify details in Appendix~\ref{app:complexity_pretrain}.

\subsection{Label-free IB Optimized Subgraph Selector}
\label{sec:method_subgraph}
While the pre-trained retriever $f_{\boldsymbol{\theta}}^\star(\cdot)$ can identify relevant entities across domains, directly reasoning over the full graph is computationally expensive and offers limited structural interpretability.
Instead of returning a ranked list of entities, we learn a query-conditioned subgraph selector that identifies a minimal yet sufficient $\mathcal{G}_\mathbf{q}$ for reasoning.

\textbf{Query-conditioned Subgraph Selection.}
To efficiently adapt the pre-trained retriever $f_{\boldsymbol{\theta}}^\star(\cdot)$ to query-specific subgraph selection, we adopt a parameter-efficient fine-tuning (PEFT) strategy that freezes the backbone GFM parameters and introduces lightweight, query-conditioned adaptations for subgraph selection.
Given query $\mathbf{q}$ on the knowledge graph $\mathcal{G}$, we derive a shared query prompt $\mathbf{p} = \operatorname{MLP}(\mathbf{q})$, which is broadcast to construct query-conditioned entity embedding:
\begin{equation}
    \widetilde{\mathbf{Z}}_e = \big[\mathbf{Z}_e^\star~\|~\mathbf{W}_1 \mathbf{p}~\|~\mathbf{W}_2 \mathbf{q}\big],
\label{eq:query_e}
\end{equation}
where $\mathbf{Z}_e^\star$ is the initialized and then freezed entity embedding from $f_{\boldsymbol{\theta}}^\star(\cdot)$. $\mathbf{W}_1$, $\mathbf{W}_2$ are learnable weights. $\|$ denotes the concatenation.

To enable differentiable and optimizable subgraph sampling during fine-tuning, we employ the Gumbel-Sigmoid relaxation to approximate binary selection gates. Specifically, for each entity $e$, the selection probability $S_e(\mathbf{q})\in [0,1]$ is sampled as:
\begin{equation}
    S_e(\mathbf{q})=\left[ 1+\exp\left( -\sigma\big(\big(\operatorname{MLP}(\widetilde{\mathbf{Z}}_e)\big)+g_e\big)/\tau \right) \right]^{-1},
\label{eq:prob}
\end{equation}
where the variable $g_e$ is sampled from the standard Gumbel distribution, and $\tau$ is the temperature parameter.
This formulation provides a continuous and differentiable surrogate for discrete node selection, enabling stochastic yet trainable subgraph sampling within an end-to-end optimization framework.
Given a threshold $\epsilon$, the query-specific subgraph $\mathcal{G}_\mathbf{q}$ is instantiated with entity set $\mathcal{V}_\mathbf{q}=\{e\in\mathcal{V}^T\mid S_e(\mathbf{q}) > \epsilon\}$, from which the corresponding relation set $\mathcal{R}_{\mathbf{q}}$ and edge set $\mathcal{E}_{\mathbf{q}}$ are induced accordingly.
Importantly, the use of Gumbel-Sigmoid ensures that gradients can propagate through the selection process, making the subgraph construction fully differentiable and amenable to optimization under the downstream objectives. We next discuss its optimization.

\textbf{Label-free Information Bottleneck.}
We formulate the selector from an information-theoretic perspective. The Information Bottleneck (IB) principle~\cite{tishby2000information, tishby2015deep} provides an ideal way to optimize the subgraph selector that compresses full KG (\textit{minimal}) while preserving the query-relevant subgraph (\textit{sufficient}):
\begin{equation}
    \mathcal{L}_{\text{IB}} = -I(y;\mathcal{G}_\mathbf{q}) + \beta I(\mathcal{G};\mathcal{G}_\mathbf{q}),
\label{eq:ib}
\end{equation}
where $I(\cdot)$ is mutual information, and $\beta$ is Lagrange multiplier for trading off between compressing (the first term) and preserving (the second term). Minimizing $\mathcal{L}_{\text{IB}}$ encourages the selector to find a minimal yet sufficient ``coreset'' in $\mathcal{G}$.

However, as generating the answer $y$ is non-differentiable, directly optimizing Eq.~\eqref{eq:ib} is intractable, which causes the major challenge in finding $\mathcal{G}_\mathbf{q}$.
To address this dilemma, we establish a surrogate objective for label-free IB optimization by replacing the label $y$ with the query itself:
\begin{equation}
    \mathcal{L}_{\text{LIB}} = -I(\mathbf{q};\mathcal{G}_\mathbf{q}) + \beta I(\mathcal{G};\mathcal{G}_\mathbf{q}).
\label{eq:label_free}
\end{equation}
We next show $\mathcal{L}_{\text{LIB}}$ is a faithful proxy of $\mathcal{L}_{\text{IB}}$.
    \begin{proposition}[\textbf{Error Bound of $\mathcal{L}_{\textnormal{LIB}}$ Approximation}]
    \label{prop:error}
    Let $\mathcal{G}_\mathbf{q}$ be the subgraph, $y$ the ground-truth answer, and $\mathbf{q}$ the query. Assume the Markov property $\langle y \leftrightarrow \mathbf{q} \leftrightarrow \mathcal{G}_\mathbf{q} \rangle$ exists, the approximation error of $\mathcal{L}_{\textnormal{IB}}$ by $\mathcal{L}_{\textnormal{LIB}}$ is upper-bounded by the conditional entropy of $\mathbf{q}$ given $y$:
    \begin{equation}
        \left | I(y;\mathcal{G}_\mathbf{q}) - I(\mathbf{q};\mathcal{G}_\mathbf{q}) \right| \leqslant H(\mathbf{q}\mid y),
    \end{equation}
    where $H(\mathbf{q}\mid y)$ is a data-dependent constant.\hfill\hyperref[app:proof_error]{\textnormal{[$\rhd$Proof]}}
    \end{proposition}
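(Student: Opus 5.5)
The plan is to use the Markov chain $y \leftrightarrow \mathbf{q} \leftrightarrow \mathcal{G}_\mathbf{q}$ together with the data processing inequality to fix the sign of $I(\mathbf{q};\mathcal{G}_\mathbf{q})-I(y;\mathcal{G}_\mathbf{q})$. Then I will bound the size of that difference by expanding $I(\mathbf{q},y;\mathcal{G}_\mathbf{q})$ with the chain rule in two orders. Throughout I treat $y$, $\mathbf{q}$ and $\mathcal{G}_\mathbf{q}$ as discrete random variables, so that all entropies are nonnegative; $\mathbf{q}$ ranges over tokenized queries.

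\textbf{Step 1: remove the absolute value.} By the Markov property, $y$ and $\mathcal{G}_\mathbf{q}$ are conditionally independent given $\mathbf{q}$. Hence $I(y;\mathcal{G}_\mathbf{q}\mid \mathbf{q})=0$, and the data processing inequality gives $I(y;\mathcal{G}_\mathbf{q})\leqslant I(\mathbf{q};\mathcal{G}_\mathbf{q})$. It therefore suffices to show
\begin{equation*}
I(\mathbf{q};\mathcal{G}_\mathbf{q})-I(y;\mathcal{G}_\mathbf{q})\leqslant H(\mathbf{q}\mid y).
\end{equation*}

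\textbf{Step 2: chain rule in two orders.} Expanding the joint mutual information gives
\begin{equation*}
I(\mathbf{q},y;\mathcal{G}_\mathbf{q}) = I(\mathbf{q};\mathcal{G}_\mathbf{q}) + I(y;\mathcal{G}_\mathbf{q}\mid\mathbf{q}) = I(y;\mathcal{G}_\mathbf{q}) + I(\mathbf{q};\mathcal{G}_\mathbf{q}\mid y).
\end{equation*}
The middle term vanishes by Step 1. Rearranging yields
\begin{equation*}
I(\mathbf{q};\mathcal{G}_\mathbf{q})-I(y;\mathcal{G}_\mathbf{q}) = I(\mathbf{q};\mathcal{G}_\mathbf{q}\mid y).
\end{equation*}

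\textbf{Step 3: bound the conditional mutual information.} Write $I(\mathbf{q};\mathcal{G}_\mathbf{q}\mid y)=H(\mathbf{q}\mid y)-H(\mathbf{q}\mid \mathcal{G}_\mathbf{q},y)$. Since conditional entropy is nonnegative for discrete variables, this is at most $H(\mathbf{q}\mid y)$. Combining Steps 1 through 3 gives the two-sided bound of Proposition~\ref{prop:error}. The right-hand side depends only on the joint law of $(\mathbf{q},y)$, not on the selector, which justifies calling it a data-dependent constant.

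\textbf{Main obstacle.} The argument is routine once the Markov chain is granted. The delicate point is that the nonnegativity of $H(\mathbf{q}\mid\mathcal{G}_\mathbf{q},y)$ in Step 3 requires discreteness. If $\mathbf{q}$ is a continuous embedding, differential entropy can be negative and the bound can fail, so I would state the discreteness of $\mathbf{q}$ explicitly as an assumption. I would also remark that the Markov assumption is natural here: $\mathcal{G}_\mathbf{q}$ is generated from $\mathbf{q}$ and the fixed $\mathcal{G}$ by the selector, together with Gumbel noise independent of $y$.
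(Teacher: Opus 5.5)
Your argument is correct and is essentially the paper's proof: both expand $I(y,\mathbf{q};\mathcal{G}_\mathbf{q})$ by the chain rule in two orders and then bound the surviving conditional mutual information $I(\mathbf{q};\mathcal{G}_\mathbf{q}\mid y)$ by $H(\mathbf{q}\mid y)$. The difference is one of ordering and explicitness. The paper first derives the cruder bound $H(y\mid\mathbf{q})+H(\mathbf{q}\mid y)$ and only afterwards invokes the Markov property to ``tighten'' it. You instead use the Markov property up front to get $I(y;\mathcal{G}_\mathbf{q}\mid\mathbf{q})=0$ explicitly, which is cleaner. Your discreteness caveat is also needed by the paper's own appeal to nonnegativity of entropy, which relies on it implicitly.
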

Proposition~\ref{prop:error} declares that the approximation error is upper-bounded by the conditional entropy of the true label $y$ given the query $\mathbf{q}$. Once the data distribution is given and fixed, the approximation gap remains controllable within a constant bound by the informativeness of the query, which provides a guarantee that the quality of approximation is theoretically controlled by the informativeness of the query. 
Since Eq.~\eqref{eq:label_free} provides a theoretical form of fine-tuning objective, both mutual information terms are computationally infeasible. We next demonstrate their tractable derivations.

\begin{proposition}[\textbf{Lower Bound of $I(\mathbf{q};\mathcal{G}_\mathbf{q})$}]
\label{prop:lower}
    Denote the average-pooled subgraph representation as $\mathbf{G}_{\mathbf{q}}=\sum_{e\in\mathcal{V}_\mathbf{q}} S_e(\mathbf{q})\cdot \widetilde{\mathbf{Z}}_e$. Deriving from~\citet{alemi2017deep}, the lower bound is:
    \begin{equation}
        I(\mathbf{q};\mathcal{G}_\mathbf{q}) \geqslant -\mathcal{L}_\textnormal{NCE}=\log\frac{\exp\big( \operatorname{sim}(\mathbf{G}_{\mathbf{q}},\mathbf{q})/\tau \big)}{\sum_{\mathbf{q}^\prime \in\mathcal{B}}\exp\big( \operatorname{sim}(\mathbf{G}_{\mathbf{q}^\prime},\mathbf{q}^\prime)/\tau \big)},
    \label{eq:bound_1}
    \end{equation}
    where $(\mathbf{G}_{\mathbf{q}^\prime},\mathbf{q}^\prime)$ is the negative sample pair.\hfill\hyperref[app:proof_lower]{\textnormal{[$\rhd$Proof]}}
\end{proposition}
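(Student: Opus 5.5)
We prove the bound by the standard variational (Barber--Agakov / InfoNCE) argument, with the GFM readout $\mathbf{G}_{\mathbf{q}}$ as the representation of $\mathcal{G}_\mathbf{q}$. Since $\mathbf{G}_{\mathbf{q}}$ is a deterministic function of $\mathcal{G}_\mathbf{q}$ (given the frozen $\widetilde{\mathbf{Z}}_e$ and the gates $S_e(\mathbf{q})$), the data-processing inequality gives $I(\mathbf{q};\mathcal{G}_\mathbf{q}) \geqslant I(\mathbf{q};\mathbf{G}_{\mathbf{q}})$. It therefore suffices to lower-bound $I(\mathbf{q};\mathbf{G}_{\mathbf{q}})$.

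Following \citet{alemi2017deep}, we write
\begin{equation*}
I(\mathbf{q};\mathbf{G}_{\mathbf{q}})=\mathbb{E}_{p(\mathbf{q},\mathbf{G})}\left[\log\frac{p(\mathbf{q}\mid\mathbf{G})}{p(\mathbf{q})}\right].
\end{equation*}
We replace the intractable posterior $p(\mathbf{q}\mid\mathbf{G})$ by any variational distribution $r(\mathbf{q}\mid\mathbf{G})$. Non-negativity of $\operatorname{KL}(p(\mathbf{q}\mid\mathbf{G})\,\|\,r(\mathbf{q}\mid\mathbf{G}))$ then yields
\begin{equation*}
I(\mathbf{q};\mathbf{G}_{\mathbf{q}})\geqslant H(\mathbf{q})+\mathbb{E}\log r(\mathbf{q}\mid\mathbf{G}_{\mathbf{q}}).
\end{equation*}
Next we choose an energy-based critic $r(\mathbf{q}\mid\mathbf{G})\propto p(\mathbf{q})\exp(\operatorname{sim}(\mathbf{G},\mathbf{q})/\tau)$. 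Its normalizer over the query distribution is $\mathbb{E}_{p(\mathbf{q}')}\exp(\operatorname{sim}(\mathbf{G},\mathbf{q}')/\tau)$. The $H(\mathbf{q})$ and $\log p(\mathbf{q})$ terms then cancel, leaving
\begin{equation*}
I \geqslant \mathbb{E}\left[\operatorname{sim}(\mathbf{G}_\mathbf{q},\mathbf{q})/\tau - \log \mathbb{E}_{\mathbf{q}'}\exp(\operatorname{sim}(\cdot)/\tau)\right].
\end{equation*}

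The final step replaces the population normalizer by a Monte Carlo estimate over the mini-batch $\mathcal{B}$, which contains the positive and the $|\mathcal{B}|-1$ negatives. This is where the usual InfoNCE analysis (van den Oord et al.; Poole et al.) enters: the batch-averaged estimator is a valid lower bound up to the additive constant $\log|\mathcal{B}|$, since $I\geqslant \log|\mathcal{B}|-\mathcal{L}_{\text{NCE}}$. Because $\log|\mathcal{B}|\geqslant 0$, dropping it keeps a (looser) lower bound and gives the displayed $-\mathcal{L}_\text{NCE}$.

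\textbf{Main obstacle.} I expect the main obstacle to be reconciling this standard estimator with the exact form in Eq.~\eqref{eq:bound_1}. The denominator there pairs $\mathbf{G}_{\mathbf{q}'}$ with $\mathbf{q}'$ rather than $\mathbf{G}_{\mathbf{q}}$ with $\mathbf{q}'$. I would first treat this as notational: read the negatives as mismatched pairs drawn from the product of marginals, so that each term in the sum is $\exp(\operatorname{sim}(\mathbf{G}_\mathbf{q},\mathbf{q}')/\tau)$ or its symmetric counterpart, and the Jensen/importance-sampling step of the InfoNCE proof applies unchanged. If a literal reading is required, I would instead argue via the symmetric (two-sided) InfoNCE bound, averaging both directions. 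I would also state clearly that the bound holds in expectation over batches and the Gumbel noise $g_e$.
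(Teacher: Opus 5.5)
Your route is the paper's own: a Barber--Agakov bound as in Alemi et al., an energy-based variational posterior built from $\operatorname{sim}(\mathbf{G}_\mathbf{q},\mathbf{q})/\tau$, and an in-batch estimate of its partition function. In two places you are more careful than the paper, which drops $H(\mathbf{q})$ from its identity and closes with an unquantified ``$\doteq$''. Your $p(\mathbf{q})$-tilted critic cancels $H(\mathbf{q})$ correctly, and quoting $I\geqslant\log|\mathcal{B}|-\mathcal{L}_{\text{NCE}}$ makes the last step a genuine inequality.

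The step that fails is your first one. $\mathbf{G}_\mathbf{q}$ is not a function of $\mathcal{G}_\mathbf{q}=(\mathcal{V}_\mathbf{q},\mathcal{R}_\mathbf{q},\mathcal{E}_\mathbf{q})$. Only $\mathbf{Z}^\star_e$ is frozen; $\widetilde{\mathbf{Z}}_e$ contains $\mathbf{W}_1\operatorname{MLP}(\mathbf{q})$ and $\mathbf{W}_2\mathbf{q}$, so the last block of $\mathbf{G}_\mathbf{q}$ is $\bigl(\sum_{e\in\mathcal{V}_\mathbf{q}}S_e(\mathbf{q})\bigr)\mathbf{W}_2\mathbf{q}$. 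Also, the soft values $S_e(\mathbf{q})$ cannot be recovered from the thresholded node set. The gates and $\widetilde{\mathbf{Z}}_e$ you condition on are functions of $\mathbf{q}$, so $\mathbf{q}\to\mathcal{G}_\mathbf{q}\to\mathbf{G}_\mathbf{q}$ is not a Markov chain, and $I(\mathbf{q};\mathcal{G}_\mathbf{q})\geqslant I(\mathbf{q};\mathbf{G}_\mathbf{q})$ does not follow.

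Concretely, suppose the selector returns the same node set for every query. Then $I(\mathbf{q};\mathcal{G}_\mathbf{q})=0$, yet $\mathbf{G}_\mathbf{q}$ still carries $\mathbf{q}$. A similarity that ranks the true query first then gives $\mathbb{E}[\log|\mathcal{B}|-\mathcal{L}_{\text{NCE}}]>0$ with the standard denominator $\sum_{\mathbf{q}'}\exp(\operatorname{sim}(\mathbf{G}_\mathbf{q},\mathbf{q}')/\tau)$. So the intermediate bound your chain yields is false for $\mathcal{G}_\mathbf{q}$ as defined. The paper makes the same identification silently when it writes $\mathbb{Q}(\mathbf{q}\mid\mathcal{G}_\mathbf{q})$ through $\mathbf{G}_\mathbf{q}$. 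To repair it you have two options:
\begin{itemize}
\item Take the random variable $\mathcal{G}_\mathbf{q}$ to be the soft, query-conditioned subgraph, with gates and $\widetilde{\mathbf{Z}}_e$ included. This is valid, but that object then contains $\mathbf{W}_2\mathbf{q}$, so the sufficiency term is nearly saturated by construction.
\item Use a critic that sees only $\mathcal{G}_\mathbf{q}$, e.g.\ a pooling of $\mathbf{Z}^\star_e$ over $\mathcal{V}_\mathbf{q}$.
\end{itemize}

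For the displayed inequality itself none of this machinery is needed, and your ``main obstacle'' disappears. Since $\mathbf{q}\in\mathcal{B}$ (as in Algorithm~\ref{alg:finetune}, and as you assume), the numerator $\exp(\operatorname{sim}(\mathbf{G}_\mathbf{q},\mathbf{q})/\tau)$ is one of the summands of the denominator under either reading. Hence $-\mathcal{L}_{\text{NCE}}\leqslant 0\leqslant I(\mathbf{q};\mathcal{G}_\mathbf{q})$ for every batch and every Gumbel draw, by nonnegativity of mutual information alone. Once $\log|\mathcal{B}|$ is dropped, the statement is trivial.

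Your two-sided fallback would not have produced the literal form anyway. The literal denominator sums the diagonal of $\bigl[\exp(\operatorname{sim}(\mathbf{G}_{\mathbf{q}_i},\mathbf{q}_j)/\tau)\bigr]_{i,j}$, whereas one- or two-sided InfoNCE uses its row or column sums. In fact, averaged over the batch, the literal expression is at most $-\log|\mathcal{B}|$ by Jensen, so under that reading nothing beyond the trivial bound is available. The variational argument has content only for the standard denominator with $\log|\mathcal{B}|$ kept, and that is exactly the version that needs the repaired definition of $\mathcal{G}_\mathbf{q}$.
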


\begin{proposition}[\textbf{Upper Bound of $I(\mathcal{G};\mathcal{G}_\mathbf{q})$}]
\label{prop:upper}
    Denote the subgraph size and connectivity penalizers as:
    \begin{equation}
        \mathcal{L}_\textnormal{size}=\sum \nolimits_{e\in\mathcal{V}}S_e(\mathbf{q}), \;\;\mathcal{L}_\textnormal{con}=\operatorname{Tr}\big( \mathbf{S}_e(\mathbf{q})^\top\mathbf{L}\mathbf{S}_e(\mathbf{q}) \big),
    \label{eq:size_con}
    \end{equation}
    where $S_e(\mathbf{q})\in[\textnormal{0},\textnormal{1}]$ is entity selection probability, $\mathbf{S}_e(\mathbf{q}) \in \mathbb{R}^{|\mathcal{V}|}$ is the soft selection vector, $\mathbf{L}$ is the normalized graph Laplacian, and $\operatorname{Tr}(\cdot)$ calculates the trace of a matrix.
    The upper bound is:
    \begin{equation}
        I(\mathcal{G};\mathcal{G}_\mathbf{q}) \leqslant \beta_{\textnormal{1}} \mathcal{L}_\textnormal{size} + \beta_{\textnormal{2}}\mathcal{L}_\textnormal{con},
    \label{eq:bound_2}
    \end{equation}
    where $\beta_{\textnormal{1}}$ and $\beta_{\textnormal{2}}$ are hyper-parameters.\hfill\hyperref[app:proof_upper]{\textnormal{[$\rhd$Proof]}}
\end{proposition}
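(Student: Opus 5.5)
We will use the standard variational upper bound on mutual information from the deep variational IB literature. The Markov chain is $\mathcal{G}\to\mathcal{G}_\mathbf{q}$: the subgraph is obtained from $\mathcal{G}$ by the stochastic gates $S_e(\mathbf{q})$ of Eq.~\eqref{eq:prob}. For any prior $r(\mathcal{G}_\mathbf{q})$,
\begin{equation*}
I(\mathcal{G};\mathcal{G}_\mathbf{q})=\mathbb{E}_{\mathcal{G}}\big[\operatorname{KL}(p(\mathcal{G}_\mathbf{q}\mid\mathcal{G})\,\|\,p(\mathcal{G}_\mathbf{q}))\big]\leqslant\mathbb{E}_{\mathcal{G}}\big[\operatorname{KL}(p(\mathcal{G}_\mathbf{q}\mid\mathcal{G})\,\|\,r(\mathcal{G}_\mathbf{q}))\big].
\end{equation*}
This inequality holds because $\mathbb{E}[\operatorname{KL}(p\|r)]-I=\operatorname{KL}(p(\mathcal{G}_\mathbf{q})\|r)\geqslant 0$. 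The whole proof then consists of choosing $p$ and $r$ so that this KL is dominated by $\beta_1\mathcal{L}_\text{size}+\beta_2\mathcal{L}_\text{con}$.

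\textbf{Step 1 (posterior).} I will model the selector as a distribution over binary gates $\mathbf{s}\in\{0,1\}^{|\mathcal{V}|}$. Its density has the form $p(\mathbf{s}\mid\mathcal{G})\propto\prod_e S_e^{s_e}(1-S_e)^{1-s_e}$, which is the $\tau\to0$ limit of the Gumbel-Sigmoid.

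\textbf{Step 2 (prior).} I will take $r$ to be a Gibbs prior that favors small, well-connected selections:
\begin{equation*}
r(\mathbf{s})\propto\exp\big(-\lambda_1\mathbf{1}^\top\mathbf{s}-\lambda_2\,\mathbf{s}^\top\mathbf{L}\mathbf{s}\big).
\end{equation*}
With this choice, $\operatorname{KL}(p\|r)=-H(p)+\lambda_1\mathbb{E}[\mathbf{1}^\top\mathbf{s}]+\lambda_2\mathbb{E}[\mathbf{s}^\top\mathbf{L}\mathbf{s}]+\log Z_r$.

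\textbf{Step 3 (bounding terms).} Each term is handled separately.
\begin{itemize}
\item The entropy term satisfies $-H(p)\leqslant 0$, so it can be dropped in an upper bound.
\item The constant $\log Z_r$ is also bounded. Since $\mathbf{L}\succeq0$ and the exponent is therefore $\leqslant0$, we get $Z_r\leqslant 2^{|\mathcal{V}|}$. This is a data-independent additive constant that does not affect optimization. I will either absorb it into the bound, as is customary, or fix $\lambda$ so that it is normalized.
\item The size term is linear, so $\mathbb{E}[\mathbf{1}^\top\mathbf{s}]=\sum_e S_e(\mathbf{q})=\mathcal{L}_\text{size}$.
\item The connectivity term is handled by independence of the gates: $\mathbb{E}[\mathbf{s}^\top\mathbf{L}\mathbf{s}]=\mathbf{S}^\top\mathbf{L}\mathbf{S}+\sum_e L_{ee}S_e(1-S_e)$. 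For the normalized Laplacian, $L_{ee}\leqslant1$, so the extra variance term is at most $\mathcal{L}_\text{size}$. This gives $\mathbb{E}[\mathbf{s}^\top\mathbf{L}\mathbf{s}]\leqslant\mathcal{L}_\text{con}+\mathcal{L}_\text{size}$.
\end{itemize}
Setting $\beta_1=\lambda_1+\lambda_2$ and $\beta_2=\lambda_2$ yields Eq.~\eqref{eq:bound_2}, up to the additive constant.

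\textbf{Main obstacle.} The additive constant $\log Z_r$ is not exactly zero, and the statement as written has no constant. My first approach is to note that the entropy term $-H(p)$ can be paired with it. Because $r$ and $p$ live on the same finite space, $\log Z_r\leqslant |\mathcal{V}|\log 2$. Separately, $H(p)=\sum_e h(S_e)$, where $h$ is the binary entropy. That alone does not cancel the constant exactly, so I expect to state the bound modulo a $\mathbf{q}$-independent constant. If exactness is demanded, the alternative is to define $r$ as a properly normalized factorized Bernoulli($\rho$) prior for the size term and treat the Laplacian term as the expected smoothness regularizer. The final bound would then follow with the constant folded into the choice of $\beta_1$, valid whenever $\mathcal{L}_\text{size}\geqslant1$.
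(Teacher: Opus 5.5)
Your route is the paper's route: the variational bound $I(\mathcal{G};\mathcal{G}_\mathbf{q})\leqslant\mathbb{E}_{\mathcal{G}}\operatorname{KL}(p(\mathcal{G}_\mathbf{q}\mid\mathcal{G})\,\|\,r)$ with a Gibbs prior whose energy is the $L_1$ size term plus the Laplacian quadratic form. Your bookkeeping is tighter than the paper's in two places. You drop $-H(p)\leqslant 0$ rather than assuming a deterministic encoder whose entropy is ``a constant''. You also keep the variance term $\sum_e L_{ee}S_e(1-S_e)$, which the paper loses by identifying $\mathbb{E}[\mathbf{s}^\top\mathbf{L}\mathbf{s}]$ with $\mathbf{S}^\top\mathbf{L}\mathbf{S}$.

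The obstacle you flag, however, is a genuine gap that no argument can close for fixed hyper-parameters, because the constant-free inequality is false. Let $\mathcal{G}$ equal $\mathcal{G}_A$ with probability $p$ and $\mathcal{G}_B$ otherwise, with a common topology. Let the selector choose exactly one non-isolated entity $e$ on $\mathcal{G}_A$ and nothing on $\mathcal{G}_B$. Then $I(\mathcal{G};\mathcal{G}_\mathbf{q})=H(\mathcal{G}_\mathbf{q})=h(p)\geqslant p\log(1/p)$. On the other side, $\mathcal{L}_\text{size}=\mathcal{L}_\text{con}=1$ on $\mathcal{G}_A$ (since $L_{ee}=1$) and both vanish on $\mathcal{G}_B$. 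So the pointwise right-hand side fails on $\mathcal{G}_B$, and even its average $(\beta_1+\beta_2)p$ is exceeded once $p<e^{-(\beta_1+\beta_2)}$. This is consistent with $\log Z_r>0$: the empty selection alone contributes $1$ to $Z_r$. In fact, since $\mathbf{s}^\top\mathbf{L}\mathbf{s}\leqslant 2\,\mathbf{1}^\top\mathbf{s}$ for binary $\mathbf{s}$, you get $\log Z_r\geqslant|\mathcal{V}|\log(1+e^{-\lambda_1-2\lambda_2})$, which is of order $|\mathcal{V}|$.

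The paper's proof does not escape this either. It discards the encoder entropy and $\log Z$ as constants and concludes only $\min\operatorname{KL}\Leftrightarrow\min(\beta_1\mathcal{L}_\text{size}+\beta_2\mathcal{L}_\text{con})$, which is exactly your ``modulo a constant'' version. So state what you actually prove:
$I(\mathcal{G};\mathcal{G}_\mathbf{q})\leqslant\mathbb{E}_{\mathcal{G}}[\beta_1\mathcal{L}_\text{size}+\beta_2\mathcal{L}_\text{con}]+\log Z_r$,
noting that the statement suppresses the expectation over $\mathcal{G}$.

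Drop the fallback. Folding the constant into $\beta_1$ under $\mathcal{L}_\text{size}\geqslant 1$ adds a hypothesis the statement lacks. It also cannot work with graph-independent hyper-parameters. Take $\mathcal{G}$ uniform over $n$ same-topology graphs, the $i$-th selecting only a distinct non-isolated node $e_i$. Then $I=\log n$ while $\mathcal{L}_\text{size}=\mathcal{L}_\text{con}=1$, so $\beta_1+\beta_2$ must grow with the graph.

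Finally, the step $I\leqslant\mathbb{E}_{\mathcal{G}}\operatorname{KL}(\cdot\,\|\,r)$ requires $r$ to be independent of $\mathcal{G}$, but your $r$ uses $\mathbf{L}$. This is legitimate only if the topology is fixed across realizations of $\mathcal{G}$. Otherwise, use a $\mathcal{G}$-independent prior (for example your factorized Bernoulli) for the size term, and add $\beta_2\mathcal{L}_\text{con}\geqslant 0$ for free since $\mathbf{L}\succeq 0$.
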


Proposition~\ref{prop:lower} and Proposition~\ref{prop:upper} provide a principled method for approximating $\mathcal{L}_\text{LIB}$ using tractable optimization objectives, while deriving such bounds under label-free settings is non-trivial. By establishing this connection, we ensure both theoretical soundness and practical guidance for subgraph identification in multi-domain knowledge graph retrieval.

\textbf{Overall Fine-tuning Objective.}
Following the pre-training stage, we fine-tune the GFM retriever on a supervised document retrieval task, where each query $\mathbf{q}$ is associated with a labeled target entity set $\mathcal{V}_{\mathbf{q}}^{+}$. While standard retrieval objectives encourage accurate ranking with respect to labeled entities, they do not explicitly constrain the structural properties of the retrieved subgraphs. To address this limitation, we incorporate subgraph-level regularization into the fine-tuning objective, resulting in the following:
\begin{equation}
    \mathcal{L}_\text{ftn}=\alpha_1\mathcal{L}_\text{BCE} + (1-\alpha_1)\mathcal{L}_\text{rank} + \left \lceil \mathcal{L}_\text{LIB} \right \rceil,
\label{eq:ftn}
\end{equation}
where $\mathcal{L}_\text{BCE}$ and $\mathcal{L}_\text{rank}$ provide complementary supervision for entity-level relevance and relative ordering. The additional term $\lceil \mathcal{L}_\text{LIB} \rceil$ acts as a structural regularizer that promotes the query-dependent minimal and sufficient subgraphs.
$\lceil \mathcal{L}_\text{LIB} \rceil$ is constructed by substituting the RHS of Eq.~\eqref{eq:bound_1} and Eq.~\eqref{eq:bound_2} into Eq.~\eqref{eq:label_free}:
\begin{equation}
    \left \lceil \mathcal{L}_\text{LIB} \right \rceil = \mathcal{L}_\text{NCE} + \beta_{\textnormal{1}} \mathcal{L}_\textnormal{size} + \beta_{\textnormal{2}}\mathcal{L}_\textnormal{con}.
\label{eq:lib_t}
\end{equation}

\newpage
\subsection{Relation Paths Induced In-context Prompter}
\label{sec:method_prompt}
Existing methods mainly retrieve relevant documents while leaving relations implicit. To expose key reasoning paths encoded in $\mathcal{G}_{\mathbf{q}}$, we treat the retrieved subgraph not only as an entity ``coreset'' but also as a structured reasoning scaffold for generation.

\textbf{Entity-to-Document Mapping.}
Following inverted index paradigm~\cite{luo2025gfm}, we construct a binary mapping matrix $\mathbf{M} \in \{0,1\}^{|\mathcal{E}_{\mathbf{q}}| \times |\mathcal{C}_{\mathbf{q}}|}$, where $\mathcal{C}_{\mathbf{q}}$ denotes the set of supporting document chunks. An entry $\mathbf{M}[e, c] = 1$ indicates that entity $e$ appears in document chunk $c$.
At inference time, rather than relying on heuristic top-$K$ entity selection, we restrict retrieval to the subgraph entities $\mathcal{V}_{\mathbf{q}}$ and assign each entity a composite importance weight defined as:
\begin{equation}
    w_e=S_e(\mathbf{q}) \cdot \frac{1}{\sum_c \mathbf{M}[e,c]}\cdot\left|\phi_\text{path}(e)\right|,
\label{eq:weight}
\end{equation}
where $S_e(\mathbf{q})$ denotes the soft selection probability introduced in Eq.~\eqref{eq:prob}, and $|\phi_\text{path}(e)|$ represents the number of shortest reasoning paths that involve entity $e$. This weighting scheme emphasizes entities that are both highly relevant to the query and structurally central within the reasoning subgraph.
Document relevance is subsequently computed by aggregating entity weights through the mapping matrix, given by $R_c = \mathbf{M}^\top w_e$, and the top-$K$ documents are selected accordingly.

\textbf{Reasoning Path Extraction.}
Enumerating all possible paths in a dense subgraph is computationally infeasible due to exponential growth in the number of candidate paths. To explicitly expose interpretable relational structures, we extract a compact set of reasoning paths $\Pi_{\mathbf{q}} = \{\pi_1, \ldots, \pi_l\}$ from the query-specific subgraph $\mathcal{G}_{\mathbf{q}}$ using an $l$-hop truncated depth-first search (DFS). Each extracted path takes the form $\pi = (e_0 \xrightarrow{r_1} e_1 \xrightarrow{r_2} \cdots \xrightarrow{r_k} e_k)$ with $k \leqslant l$.
To quantify the relevance of each reasoning path with respect to relation awareness, we define the following scoring function:
\begin{equation}
    S_r(\pi) = \sum_{i=0}^{k} S_{e_i}(\mathbf{q}) + \sum_{j=1}^{k} \omega(r_j),
\label{eq:sr}
\end{equation}
where $\omega(r_j)$ penalizes long-range dependencies via relation depth. This scoring favors paths traversing high-confidence nodes via semantically concise relations.

\textbf{In-context Prompter Construction.}
To prevent the subgraph from degenerating into an unordered set of documents and entities in the answer generation, we construct an in-context prompter for explicit path-aware LLM reasoning:
\begin{tcolorbox}[
    enhanced,
    colback=white,
    colframe=black,
    boxrule=0.7pt,
    rounded corners,
    fonttitle=\bfseries,
    title=Schema \hspace{0.55cm} Value,
    sidebyside,
    sidebyside align=top seam,
    lefthand width=1.5cm,
    boxsep=2pt,
    left=2pt,
    right=2pt,
    top=3pt,
    bottom=-8pt,
    sidebyside gap=10pt
]

\begin{spacing}{0.91}
{
\small 
\texttt{<system>}\\
\\
\\
\texttt{<doc>}\\
\texttt{<path>}\\
\\
\\
\\
\\
\\
\texttt{<query>}
}
\end{spacing}

\tcblower

\begin{spacing}{0.91}
{
\small
As an advanced reading comprehension assistant, your task is to analyze text passages and corresponding questions meticulously. Your responses start after ...\\
Document: \verb|{title} \n {content} \n|\\
Reasoning Path \verb|{path_id}| (score: \verb|{path_score}|):\verb| \n|\\
\verb|    |Entity \verb|{entity}| (type: \verb|{entity_type}|):\verb| \n|\\
\verb|        ---| (relation: \verb|{relation}|) \verb|---> \n|\\
\verb|    |Entity \verb|{entity}| (type: \verb|{entity_type}|):\verb| \n|\\
\verb|        ---| (relation: \verb|{relation}|) \verb|---> \n|\\
\verb|    |Entity \verb|{entity}| (type: \verb|{entity_type}|):\verb| \n|\\
Question: \verb|{query} \n|
}
\end{spacing}
\end{tcolorbox}
We provide the detailed in-context prompter configurations and case studies in Appendix~\ref{app:exp_detail_setting}.
\newpage
\section{Experiments}
\label{sec:experiment}

In this section, we conduct extensive experiments to evaluate \modelname\footnote{Codes will be made public after the paper is formally published.} on the following research questions (\textit{\textbf{RQ}s}):
\begin{itemize}[leftmargin=*, itemsep=0pt, parsep=0pt]
    \item \textbf{\textit{RQ1:}} Can \modelname~improve retrieval performance?
    \item \textbf{\textit{RQ2:}} How does \modelname~perform in QA task?
    \item \textbf{\textit{RQ3:}} Can \modelname~generalize across different domains?
    \item \textbf{\textit{RQ4:}} Is \modelname~more time-efficient and effective?
    \item \textbf{\textit{RQ5:}} How do the subgraphs support multi-hop QA reasoning?
    \item \textbf{\textit{RQ6:}} How does each component and hyper-parameter affect the overall performance of \modelname?
\end{itemize}


\subsection{Experiment Settings}
We introduce the brief experiment settings of datasets, baselines, and metrics. Detailed settings are provided in Appendix~\ref{app:exp_detail}.

\textbf{Datasets.}
Main experiments are conducted on the \textbf{three} representative multi-hop question answering benchmarks: \texttt{HotpotQA}~\cite{yang2018hotpotqa}, \texttt{MuSiQue}~\cite{trivedi2022musique}, and \texttt{2WikiMultiHopQA}~\cite{ho2020constructing}, which are widely adopted to assess multi-hop retrieving and reasoning capabilities, covering diverse domains and complex inference patterns.
To further evaluate the generalizability of \modelname~across different domains, we further incorporate \textbf{seven} benchmarks, including \texttt{PubMedQA}~\cite{jin2019pubmedqa}, \texttt{DelucionQA}~\cite{sadat2023delucionqa}, \texttt{TechQA}~\cite{castelli2020techqa}, \texttt{ExpertQA}~\cite{malaviya2024expertqa}, \texttt{EManual}~\cite{nandy2021question}, \texttt{MS Marco}~\cite{nguyen2016ms}, and \texttt{HAGRID}~\cite{kamalloo2023hagrid}.
Statistics in Table~\ref{tab:dataset_general_qa} and Table~\ref{tab:dataset_domain_qa}.

\textbf{Baselines.}
We compare \modelname~against \textbf{18} state-of-the-art baselines (and their combinations), categorized into \textbf{four} groups:
\begin{itemize}[leftmargin=*]
    \item \textbf{\textit{Base LLM:}} \texttt{GPT-4o-mini}~\cite{hurst2024gpt}.
    \item \textbf{\textit{Single-step RAGs:}} including \texttt{BM25} \cite{robertson1994some}, \texttt{Contriever} \cite{izacard2022unsupervised}, \texttt{GTR} \cite{ni2022large}, \texttt{ColBERTv2} \cite{santhanam2022colbertv2}, \texttt{RAPTOR} \cite{sarthi2024raptor}, and \texttt{Proposition} \cite{chen2024dense}.
    \item \textbf{\textit{Graph-enhanced RAGs:}} including \texttt{GraphRAG} \cite{edge2024local}, \texttt{G-Retriever} \cite{he2024g}, \texttt{LightRAG} \cite{guo2024lightrag}, \texttt{HippoRAG} \cite{jimenez2024hipporag}, \texttt{HippoRAG} \texttt{2} \cite{gutierrez2025rag}, \texttt{SubgraphRAG} \cite{li2025simple}, \texttt{PropRAG} \cite{wang2025proprag}, and the most relevant \texttt{GFM-RAG} \cite{luo2025gfm}.
    \item \textbf{\textit{Multi-step RAGs:}} \texttt{IRCoT} \cite{trivedi2023interleaving}, \texttt{FLARE} \cite{jiang2023active}, \texttt{Adaptive-RAG} \cite{jeong2024adaptive}.
\end{itemize}
Notably, \texttt{IRCoT}~\cite{trivedi2023interleaving} serves as a general multi-step reasoning wrapper that can be combined with non-iterative retrievers, enabling single-step RAG and graph-based methods to perform multi-hop reasoning via interleaved retrieval and generation. We make a thorough comparison for all baselines and \modelname~in Table~\ref{tab:compare}.

\textbf{Metrics.}
To assess the retrieval quality, we report Recall@2 and Recall@5 for both the retrieved entities and documents (R@2/5$_\textsf{E}$, R@2/5$_\textsf{D}$).
In addition, we report Mean Reciprocal Rank (MRR) in selected retrieval experiments to further reflect ranking quality.
For end-to-end QA, we adopt standard metrics, including Exact Match (EM), F1 score, Precision (P), and Recall (R) in main experiments, which thoroughly analyze the answer coverage and correctness~\cite{jimenez2024hipporag}. We additionally report ROUGE-L in some cases to assess semantic overlap between generated answers and references.

\textbf{Implementation Details.}
The retriever backbone is a 6-layer query-conditioned GNN with a hidden dimension of 512, initialized using \texttt{all-mpnet-v2} sentence encoder~\cite{allmpnet2021sbert}. Pre-training is conducted on a large-scale corpus of 700K documents, from which 60 domain-specific knowledge graphs comprising over 14M triples are constructed~\cite{luo2025gfm}. The model is trained on 8 NVIDIA A100 GPUs (80GB) with a batch size of 4. Additional hyper-parameters and implementation details are reported in Appendix~\ref{app:imp_detail_ours}.

\subsection{\textit{RQ1:} Entity Retrieval Performance}
\label{rq1:retrieve}
We evaluate entity-level and document-level retrieval performance on standard multi-hop QA benchmarks.

\textbf{Results.}
As shown in Table~\ref{tab:res_retrieve}, \modelname~consistently outperforms all baselines across most metrics, achieving the highest average rank. It ranks first on R@2$_\textsf{E}$ and R@5$_\textsf{E}$ scores, which validate the effectiveness of the subgraph selector in extracting the compact yet informative entity sets, and reveal the cross-domain generalizability that the GFM retriever provides. In contrast, dense retrievers such as \texttt{Contriever} and \texttt{GTR} ignore the inter-entity structure, while graph-based methods like \texttt{GraphRAG} and \texttt{G-Retriever} tend to over-select irrelevant neighbors. \texttt{SubgraphRAG} and \texttt{GFM-RAG} leverage graph structure but lack query-specific filtering. \modelname~improves the aforementioned drawbacks by jointly modeling semantic relevance and structural minimality.

\subsection{\textit{RQ2:} Question Answering Performance}
\label{rq2:qa}
We evaluate the end-to-end multi-hop QA by feeding the retrieved evidence into the same base LLM.

\textbf{Results.}
As shown in Table~\ref{tab:res_qa}, our \modelname~already shows strong single-step QA capability, outperforming most single-step baselines and surpassing several multi-step methods, which indicates retrieving a minimal yet sufficient subgraph provides informative evidence for multi-hop answering even without iterative retrieval. When augmented with \texttt{IRCoT}, \modelname~achieves the best overall performance with consistent gains across datasets and metrics, suggesting that multi-step reasoning benefits substantially from structured evidence. The comparison between the \modelname~and \modelname+\texttt{IRCoT} further reveals that subgraph complements iterative reasoning: it provides a compact reasoning scaffold, while \texttt{IRCoT} refines evidence selection step by step, leading to higher answer coverage (Recall) without sacrificing correctness (Precision). Overall, the results confirm that our subgraph-centric retrieval is effective as a standalone single-step QA method and serves as a strong substrate for multi-step reasoning frameworks.

\subsection{\textit{RQ3:} Cross-domain Generalizability}
\label{rq3:domain}
We evaluate cross-domain generalization on \textbf{seven} QA datasets spanning biomedical, customer support, and general knowledge domains, all disjoint from pre-training. \modelname~is compared with \textbf{three} strong baselines identified in \textit{\textbf{RQ1}} and \textit{\textbf{RQ2}}. We report retrieval performance in R@5$_\textsf{D}$ on all datasets and QA performance in ROUGE-L on three datasets that support answer generation, under a zero-shot setting without domain-specific fine-tuning.

\begin{table*}[t]
\renewcommand{\arraystretch}{0.85}
\setlength{\tabcolsep}{3.8pt}
  \centering
  \caption{Results of retrieval performance. We report Recall (\%) at top-2 and top-5 for both entity-level (R@2$_\textsf{E}$, R@5$_\textsf{E}$) and document-level (R@2$_\textsf{D}$, R@5$_\textsf{D}$). Best results are in \textbf{bold} and runner-ups are \underline{underlined}. The darker the cell, the better.}
  \vspace{-0.3cm}
  \resizebox{\textwidth}{!}{
    \begin{tabular}{lccccccccccccr}
    \toprule
    \textbf{Dataset} & \multicolumn{4}{c}{\texttt{HotpotQA}}  & \multicolumn{4}{c}{\texttt{MuSiQue}}   & \multicolumn{4}{c}{\texttt{2WikiMultiHopQA}} & \multirow{2}[2]{*}{\textbf{Avg. Rank}} \\
    \cmidrule(r{1mm}){1-1}  \cmidrule(l{0.5mm}r{1mm}){2-5} \cmidrule(l{0.5mm}r{1mm}){6-9} \cmidrule(l{0.5mm}r{1mm}){10-13} \textbf{Method} & R@2$_\textsf{E}$ & R@5$_\textsf{E}$ & R@2$_\textsf{D}$ & R@5$_\textsf{D}$ & R@2$_\textsf{E}$ & R@5$_\textsf{E}$ & R@2$_\textsf{D}$ & R@5$_\textsf{D}$ & R@2$_\textsf{E}$ & R@5$_\textsf{E}$ & R@2$_\textsf{D}$ & R@5$_\textsf{D}$ &        \\
    \cmidrule(r{1mm}){1-1}  \cmidrule(l{0.5mm}r{1mm}){2-5} \cmidrule(l{0.5mm}r{1mm}){6-9} \cmidrule(l{0.5mm}r{1mm}){10-13} \cmidrule(l{0.5mm}){14-14}
    \texttt{BM25} \scalebox{0.68}{(\mycite{robertson1994some}~\textit{SIGIR'94})} & \cellcolor{mygray!2.78}{40.\scalebox{0.75}{0}} & \cellcolor{mygray!26.33}{53.\scalebox{0.75}{2}} & \cellcolor{mygray!27.31}{55.\scalebox{0.75}{4}} & \cellcolor{mygray!35.24}{72.\scalebox{0.75}{2}} & \cellcolor{mygray!26.17}{19.\scalebox{0.75}{5}} & \cellcolor{mygray!13.86}{23.\scalebox{0.75}{6}} & \cellcolor{mygray!21.25}{32.\scalebox{0.75}{3}} & \cellcolor{mygray!19.61}{41.\scalebox{0.75}{2}} & \cellcolor{mygray!12.81}{46.\scalebox{0.75}{9}} & \cellcolor{mygray!16.03}{57.\scalebox{0.75}{9}} & \cellcolor{mygray!12.36}{51.\scalebox{0.75}{8}} & \cellcolor{mygray!15.74}{61.\scalebox{0.75}{9}} & \cellcolor{myred!9.59}{18.\scalebox{0.75}{3}} \\
 \texttt{Contriever} \scalebox{0.68}{(\mycite{izacard2022unsupervised}~\textit{TMLR'22})} & \cellcolor{mygray!7.19}{41.\scalebox{0.75}{7}} & \cellcolor{mygray!32.7}{56.\scalebox{0.75}{0}} & \cellcolor{mygray!30.05}{57.\scalebox{0.75}{2}} & \cellcolor{mygray!41.51}{75.\scalebox{0.75}{5}} & \cellcolor{mygray!27.14}{19.\scalebox{0.75}{8}} & \cellcolor{mygray!34.42}{28.\scalebox{0.75}{8}} & \cellcolor{mygray!27.66}{34.\scalebox{0.75}{8}} & \cellcolor{mygray!34.24}{46.\scalebox{0.75}{6}} & \cellcolor{mygray!2}{39.\scalebox{0.75}{7}} & \cellcolor{mygray!5.02}{51.\scalebox{0.75}{7}} & \cellcolor{mygray!4.77}{46.\scalebox{0.75}{6}} & \cellcolor{mygray!8.27}{57.\scalebox{0.75}{5}} & \cellcolor{myred!11.92}{17.\scalebox{0.75}{7}} \\
 \texttt{GTR} \scalebox{0.68}{(\mycite{ni2022large}~\textit{EMNLP'22})} & \cellcolor{mygray!19.65}{46.\scalebox{0.75}{5}} & \cellcolor{mygray!35.2}{57.\scalebox{0.75}{1}} & \cellcolor{mygray!33.41}{59.\scalebox{0.75}{4}} & \cellcolor{mygray!37.33}{73.\scalebox{0.75}{3}} & \cellcolor{mygray!28.75}{20.\scalebox{0.75}{3}} & \cellcolor{mygray!30.07}{27.\scalebox{0.75}{7}} & \cellcolor{mygray!34.33}{37.\scalebox{0.75}{4}} & \cellcolor{mygray!41.01}{49.\scalebox{0.75}{1}} & \cellcolor{mygray!27.52}{56.\scalebox{0.75}{7}} & \cellcolor{mygray!29.87}{65.\scalebox{0.75}{7}} & \cellcolor{mygray!24.62}{60.\scalebox{0.75}{2}} & \cellcolor{mygray!25.91}{67.\scalebox{0.75}{9}} & \cellcolor{myred!24.47}{14.\scalebox{0.75}{1}} \\
 \texttt{ColBERTv2} \scalebox{0.68}{(\mycite{santhanam2022colbertv2}~\textit{NAACL'22})} & \cellcolor{mygray!26.92}{49.\scalebox{0.75}{3}} & \cellcolor{mygray!47.03}{62.\scalebox{0.75}{3}} & \cellcolor{mygray!41.49}{64.\scalebox{0.75}{7}} & \cellcolor{mygray!48.73}{79.\scalebox{0.75}{3}} & \cellcolor{mygray!41.96}{24.\scalebox{0.75}{4}} & \cellcolor{mygray!31.65}{28.\scalebox{0.75}{1}} & \cellcolor{mygray!35.62}{37.\scalebox{0.75}{9}} & \cellcolor{mygray!41.28}{49.\scalebox{0.75}{2}} & \cellcolor{mygray!24.52}{54.\scalebox{0.75}{7}} & \cellcolor{mygray!28.1}{64.\scalebox{0.75}{7}} & \cellcolor{mygray!23.16}{59.\scalebox{0.75}{2}} & \cellcolor{mygray!26.42}{68.\scalebox{0.75}{2}} & \cellcolor{myred!33.23}{11.\scalebox{0.75}{6}} \\
 \texttt{RAPTOR} \scalebox{0.68}{(\mycite{sarthi2024raptor}~\textit{ICLR'24})} & \cellcolor{mygray!14.72}{44.\scalebox{0.75}{6}} & \cellcolor{mygray!34.29}{56.\scalebox{0.75}{7}} & \cellcolor{mygray!31.43}{58.\scalebox{0.75}{1}} & \cellcolor{mygray!33.34}{71.\scalebox{0.75}{2}} & \cellcolor{mygray!35.19}{22.\scalebox{0.75}{3}} & \cellcolor{mygray!33.63}{28.\scalebox{0.75}{6}} & \cellcolor{mygray!29.97}{35.\scalebox{0.75}{7}} & \cellcolor{mygray!30.72}{45.\scalebox{0.75}{3}} & \cellcolor{mygray!3.95}{41.\scalebox{0.75}{0}} & \cellcolor{mygray!2}{50.\scalebox{0.75}{0}} & \cellcolor{mygray!4.33}{46.\scalebox{0.75}{3}} & \cellcolor{mygray!2}{53.\scalebox{0.75}{8}} & \cellcolor{myred!13.67}{17.\scalebox{0.75}{2}} \\
 \texttt{Proposition} \scalebox{0.68}{(\mycite{chen2024dense}~\textit{EMNLP'24})} & \cellcolor{mygray!17.05}{45.\scalebox{0.75}{5}} & \cellcolor{mygray!33.61}{56.\scalebox{0.75}{4}} & \cellcolor{mygray!32.34}{58.\scalebox{0.75}{7}} & \cellcolor{mygray!33.15}{71.\scalebox{0.75}{1}} & \cellcolor{mygray!34.87}{22.\scalebox{0.75}{2}} & \cellcolor{mygray!39.56}{30.\scalebox{0.75}{1}} & \cellcolor{mygray!34.85}{37.\scalebox{0.75}{6}} & \cellcolor{mygray!41.55}{49.\scalebox{0.75}{3}} & \cellcolor{mygray!18.51}{50.\scalebox{0.75}{7}} & \cellcolor{mygray!17.45}{58.\scalebox{0.75}{7}} & \cellcolor{mygray!19.07}{56.\scalebox{0.75}{4}} & \cellcolor{mygray!17.77}{63.\scalebox{0.75}{1}} & \cellcolor{myred!22.72}{14.\scalebox{0.75}{6}} \\
    \cmidrule(r{1mm}){1-1}  \cmidrule(l{0.5mm}r{1mm}){2-5} \cmidrule(l{0.5mm}r{1mm}){6-9} \cmidrule(l{0.5mm}r{1mm}){10-13} \cmidrule(l{0.5mm}){14-14}
    \texttt{GraphRAG} \scalebox{0.68}{(\mycite{edge2024local}~\textit{arXiv'24})} & \cellcolor{mygray!19.91}{46.\scalebox{0.75}{6}} & \cellcolor{mygray!43.16}{60.\scalebox{0.75}{6}} & \cellcolor{mygray!31.73}{58.\scalebox{0.75}{3}} & \cellcolor{mygray!43.6}{76.\scalebox{0.75}{6}} & \cellcolor{mygray!31.33}{21.\scalebox{0.75}{1}} & \cellcolor{mygray!39.56}{30.\scalebox{0.75}{1}} & \cellcolor{mygray!29.2}{35.\scalebox{0.75}{4}} & \cellcolor{mygray!41.55}{49.\scalebox{0.75}{3}} & \cellcolor{mygray!27.97}{57.\scalebox{0.75}{0}} & \cellcolor{mygray!41.06}{72.\scalebox{0.75}{0}} & \cellcolor{mygray!26.66}{61.\scalebox{0.75}{6}} & \cellcolor{mygray!41.85}{77.\scalebox{0.75}{3}} & \cellcolor{myred!32.94}{11.\scalebox{0.75}{7}} \\
 \texttt{G-Retriever} \scalebox{0.68}{(\mycite{he2024g}~\textit{NeurIPS'24})} & \cellcolor{mygray!10.31}{42.\scalebox{0.75}{9}} & \cellcolor{mygray!24.97}{52.\scalebox{0.75}{6}} & \cellcolor{mygray!24.11}{53.\scalebox{0.75}{3}} & \cellcolor{mygray!22.51}{65.\scalebox{0.75}{5}} & \cellcolor{mygray!32.94}{21.\scalebox{0.75}{6}} & \cellcolor{mygray!34.42}{28.\scalebox{0.75}{8}} & \cellcolor{mygray!37.92}{38.\scalebox{0.75}{8}} & \cellcolor{mygray!30.18}{45.\scalebox{0.75}{1}} & \cellcolor{mygray!23.92}{54.\scalebox{0.75}{3}} & \cellcolor{mygray!23.84}{62.\scalebox{0.75}{3}} & \cellcolor{mygray!25.49}{60.\scalebox{0.75}{8}} & \cellcolor{mygray!25.74}{67.\scalebox{0.75}{8}} & \cellcolor{myred!19.22}{15.\scalebox{0.75}{6}} \\
 \texttt{LightRAG} \scalebox{0.68}{(\mycite{guo2024lightrag}~\textit{arXiv'24})} & \cellcolor{mygray!2}{39.\scalebox{0.75}{7}} & \cellcolor{mygray!2}{42.\scalebox{0.75}{5}} & \cellcolor{mygray!2}{38.\scalebox{0.75}{8}} & \cellcolor{mygray!2}{54.\scalebox{0.75}{7}} & \cellcolor{mygray!2}{12.\scalebox{0.75}{0}} & \cellcolor{mygray!2}{20.\scalebox{0.75}{6}} & \cellcolor{mygray!2}{24.\scalebox{0.75}{8}} & \cellcolor{mygray!2}{34.\scalebox{0.75}{7}} & \cellcolor{mygray!3.8}{40.\scalebox{0.75}{9}} & \cellcolor{mygray!10.7}{54.\scalebox{0.75}{9}} & \cellcolor{mygray!2.58}{45.\scalebox{0.75}{1}} & \cellcolor{mygray!10.99}{59.\scalebox{0.75}{1}} & \cellcolor{myred!2}{20.\scalebox{0.75}{5}} \\
 \texttt{HippoRAG} \scalebox{0.68}{(\mycite{jimenez2024hipporag}~\textit{NeurIPS'24})} & \cellcolor{mygray!21.73}{47.\scalebox{0.75}{3}} & \cellcolor{mygray!43.62}{60.\scalebox{0.75}{8}} & \cellcolor{mygray!34.48}{60.\scalebox{0.75}{1}} & \cellcolor{mygray!47.21}{78.\scalebox{0.75}{5}} & \cellcolor{mygray!46.15}{25.\scalebox{0.75}{7}} & \cellcolor{mygray!53}{33.\scalebox{0.75}{5}} & \cellcolor{mygray!44.08}{41.\scalebox{0.75}{2}} & \cellcolor{mygray!52.12}{53.\scalebox{0.75}{2}} & \cellcolor{mygray!40.13}{65.\scalebox{0.75}{1}} & \cellcolor{mygray!60.77}{83.\scalebox{0.75}{1}} & \cellcolor{mygray!36.58}{68.\scalebox{0.75}{4}} & \cellcolor{mygray!58.3}{87.\scalebox{0.75}{0}} & \cellcolor{myred!43.73}{8.\scalebox{0.75}{6}} \\
 \texttt{HippoRAG} \texttt{2} \scalebox{0.68}{(\mycite{gutierrez2025rag}~\textit{ICML'25})} & \cellcolor{mygray!63.51}{\underline{~63.\scalebox{0.75}{4}~}} & \cellcolor{mygray!64.54}{70.\scalebox{0.75}{0}} & \cellcolor{mygray!65.58}{80.\scalebox{0.75}{5}} & \cellcolor{mygray!65.44}{88.\scalebox{0.75}{1}} & \cellcolor{mygray!60.98}{30.\scalebox{0.75}{3}} & \cellcolor{mygray!62.49}{35.\scalebox{0.75}{9}} & \cellcolor{mygray!58.97}{47.\scalebox{0.75}{0}} & \cellcolor{mygray!61.6}{56.\scalebox{0.75}{7}} & \cellcolor{mygray!67.9}{\underline{~83.\scalebox{0.75}{6}~}} & \cellcolor{mygray!66.8}{86.\scalebox{0.75}{5}} & \cellcolor{mygray!66.5}{\underline{~88.\scalebox{0.75}{9}~}} & \cellcolor{mygray!63.56}{90.\scalebox{0.75}{1}} & \cellcolor{myred!63}{3.\scalebox{0.75}{1}} \\
 \texttt{SubgraphRAG} \scalebox{0.68}{(\mycite{li2025simple}~\textit{ICLR'25})} & \cellcolor{mygray!24.06}{48.\scalebox{0.75}{2}} & \cellcolor{mygray!33.84}{56.\scalebox{0.75}{5}} & \cellcolor{mygray!36.61}{61.\scalebox{0.75}{5}} & \cellcolor{mygray!36.76}{73.\scalebox{0.75}{0}} & \cellcolor{mygray!51.95}{27.\scalebox{0.75}{5}} & \cellcolor{mygray!47.07}{32.\scalebox{0.75}{0}} & \cellcolor{mygray!46.39}{42.\scalebox{0.75}{1}} & \cellcolor{mygray!41.55}{49.\scalebox{0.75}{3}} & \cellcolor{mygray!38.93}{64.\scalebox{0.75}{3}} & \cellcolor{mygray!55.8}{80.\scalebox{0.75}{3}} & \cellcolor{mygray!39.94}{70.\scalebox{0.75}{7}} & \cellcolor{mygray!55.76}{85.\scalebox{0.75}{5}} & \cellcolor{myred!41.4}{9.\scalebox{0.75}{3}} \\
 \texttt{PropRAG} \scalebox{0.68}{(\mycite{wang2025proprag}~\textit{EMNLP'25})} & \cellcolor{mygray!63.51}{\underline{~63.\scalebox{0.75}{4}~}} & \cellcolor{mygray!66.13}{70.\scalebox{0.75}{7}} & \cellcolor{mygray!67.71}{\underline{~81.\scalebox{0.75}{9}~}} & \cellcolor{mygray!65.25}{88.\scalebox{0.75}{0}} & \cellcolor{mygray!62.91}{\underline{~30.\scalebox{0.75}{9}~}} & \cellcolor{mygray!66.44}{\underline{~36.\scalebox{0.75}{9}~}} & \cellcolor{mygray!60.76}{\underline{~47.\scalebox{0.75}{7}~}} & \cellcolor{mygray!64.85}{\underline{~57.\scalebox{0.75}{9}~}} & \cellcolor{mygray!65.5}{82.\scalebox{0.75}{0}} & \cellcolor{mygray!65.92}{86.\scalebox{0.75}{0}} & \cellcolor{mygray!65.04}{87.\scalebox{0.75}{9}} & \cellcolor{mygray!63.56}{90.\scalebox{0.75}{1}} & \cellcolor{myred!63.87}{\underline{2.\scalebox{0.75}{8}}} \\
 \texttt{GFM-RAG} \scalebox{0.68}{(\mycite{luo2025gfm}~\textit{NeurIPS'25})} & \cellcolor{mygray!56.5}{60.\scalebox{0.75}{7}} & \cellcolor{mygray!66.59}{\underline{~70.\scalebox{0.75}{9}~}} & \cellcolor{mygray!58.11}{75.\scalebox{0.75}{6}} & \cellcolor{mygray!68.29}{\underline{~89.\scalebox{0.75}{6}~}} & \cellcolor{mygray!59.36}{29.\scalebox{0.75}{8}} & \cellcolor{mygray!58.93}{35.\scalebox{0.75}{0}} & \cellcolor{mygray!49.98}{43.\scalebox{0.75}{5}} & \cellcolor{mygray!64.04}{57.\scalebox{0.75}{6}} & \cellcolor{mygray!51.39}{72.\scalebox{0.75}{6}} & \cellcolor{mygray!68.76}{\underline{~87.\scalebox{0.75}{6}~}} & \cellcolor{mygray!52.2}{79.\scalebox{0.75}{1}} & \cellcolor{mygray!67.46}{92.\scalebox{0.75}{4}} & \cellcolor{myred!60.95}{3.\scalebox{0.75}{7}} \\
    \cmidrule(r{1mm}){1-1}  \cmidrule(l{0.5mm}r{1mm}){2-5} \cmidrule(l{0.5mm}r{1mm}){6-9} \cmidrule(l{0.5mm}r{1mm}){10-13} \cmidrule(l{0.5mm}){14-14}
    \texttt{FLARE} \scalebox{0.68}{(\mycite{jiang2023active}~\textit{EMNLP'23})} & \cellcolor{mygray!45.34}{56.\scalebox{0.75}{4}} & \cellcolor{mygray!53.17}{65.\scalebox{0.75}{0}} & \cellcolor{mygray!54.3}{73.\scalebox{0.75}{1}} & \cellcolor{mygray!52.53}{81.\scalebox{0.75}{3}} & \cellcolor{mygray!58.72}{29.\scalebox{0.75}{6}} & \cellcolor{mygray!62.49}{35.\scalebox{0.75}{9}} & \cellcolor{mygray!52.04}{44.\scalebox{0.75}{3}} & \cellcolor{mygray!57.27}{55.\scalebox{0.75}{1}} & \cellcolor{mygray!36.98}{63.\scalebox{0.75}{0}} & \cellcolor{mygray!35.73}{69.\scalebox{0.75}{0}} & \cellcolor{mygray!34.69}{67.\scalebox{0.75}{1}} & \cellcolor{mygray!34.73}{73.\scalebox{0.75}{1}} & \cellcolor{myred!20.39}{6.\scalebox{0.75}{8}} \\
 \texttt{Adaptive-RAG} \scalebox{0.68}{(\mycite{jeong2024adaptive}~\textit{NAACL'24})} & \cellcolor{mygray!23.28}{47.\scalebox{0.75}{9}} & \cellcolor{mygray!44.07}{61.\scalebox{0.75}{0}} & \cellcolor{mygray!35.85}{61.\scalebox{0.75}{0}} & \cellcolor{mygray!43.22}{76.\scalebox{0.75}{4}} & \cellcolor{mygray!35.84}{22.\scalebox{0.75}{5}} & \cellcolor{mygray!29.28}{27.\scalebox{0.75}{5}} & \cellcolor{mygray!28.43}{35.\scalebox{0.75}{1}} & \cellcolor{mygray!29.09}{44.\scalebox{0.75}{7}} & \cellcolor{mygray!5}{41.\scalebox{0.75}{7}} & \cellcolor{mygray!17.45}{58.\scalebox{0.75}{7}} & \cellcolor{mygray!2}{44.\scalebox{0.75}{7}} & \cellcolor{mygray!14.89}{61.\scalebox{0.75}{4}} & \cellcolor{myred!50.15}{15.\scalebox{0.75}{3}} \\
 \texttt{BM25} + \texttt{IRCoT} \scalebox{0.68}{(\mycite{trivedi2023interleaving}~\textit{ACL'23})} & \cellcolor{mygray!20.69}{46.\scalebox{0.75}{9}} & \cellcolor{mygray!40.66}{59.\scalebox{0.75}{5}} & \cellcolor{mygray!42.86}{65.\scalebox{0.75}{6}} & \cellcolor{mygray!48.16}{79.\scalebox{0.75}{0}} & \cellcolor{mygray!25.2}{19.\scalebox{0.75}{2}} & \cellcolor{mygray!33.63}{28.\scalebox{0.75}{6}} & \cellcolor{mygray!26.12}{34.\scalebox{0.75}{2}} & \cellcolor{mygray!29.09}{44.\scalebox{0.75}{7}} & \cellcolor{mygray!23.17}{53.\scalebox{0.75}{8}} & \cellcolor{mygray!35.38}{68.\scalebox{0.75}{8}} & \cellcolor{mygray!26.08}{61.\scalebox{0.75}{2}} & \cellcolor{mygray!38.97}{75.\scalebox{0.75}{6}} & \cellcolor{myred!27.1}{13.\scalebox{0.75}{3}} \\
 \texttt{Contriever} + \texttt{IRCoT} \scalebox{0.68}{(\mycite{trivedi2023interleaving}~\textit{ACL'23})} & \cellcolor{mygray!17.83}{45.\scalebox{0.75}{8}} & \cellcolor{mygray!50.21}{63.\scalebox{0.75}{7}} & \cellcolor{mygray!43.32}{65.\scalebox{0.75}{9}} & \cellcolor{mygray!53.09}{81.\scalebox{0.75}{6}} & \cellcolor{mygray!40.35}{23.\scalebox{0.75}{9}} & \cellcolor{mygray!37.58}{29.\scalebox{0.75}{6}} & \cellcolor{mygray!38.69}{39.\scalebox{0.75}{1}} & \cellcolor{mygray!49.41}{52.\scalebox{0.75}{2}} & \cellcolor{mygray!9.66}{44.\scalebox{0.75}{8}} & \cellcolor{mygray!14.07}{56.\scalebox{0.75}{8}} & \cellcolor{mygray!12.07}{51.\scalebox{0.75}{6}} & \cellcolor{mygray!18.96}{63.\scalebox{0.75}{8}} & \cellcolor{myred!30.89}{12.\scalebox{0.75}{3}} \\
 \texttt{ColBERTv2} + \texttt{IRCoT} \scalebox{0.68}{(\mycite{trivedi2023interleaving}~\textit{ACL'23})} & \cellcolor{mygray!34.96}{52.\scalebox{0.75}{4}} & \cellcolor{mygray!56.81}{66.\scalebox{0.75}{6}} & \cellcolor{mygray!46.37}{67.\scalebox{0.75}{9}} & \cellcolor{mygray!53.85}{82.\scalebox{0.75}{0}} & \cellcolor{mygray!43.25}{24.\scalebox{0.75}{8}} & \cellcolor{mygray!50.63}{32.\scalebox{0.75}{9}} & \cellcolor{mygray!45.37}{41.\scalebox{0.75}{7}} & \cellcolor{mygray!53.47}{53.\scalebox{0.75}{7}} & \cellcolor{mygray!27.82}{56.\scalebox{0.75}{9}} & \cellcolor{mygray!33.78}{67.\scalebox{0.75}{9}} & \cellcolor{mygray!30.31}{64.\scalebox{0.75}{1}} & \cellcolor{mygray!36.93}{74.\scalebox{0.75}{4}} & \cellcolor{myred!45.48}{8.\scalebox{0.75}{1}} \\
 \texttt{HippoRAG} + \texttt{IRCoT} \scalebox{0.68}{(\mycite{trivedi2023interleaving}~\textit{ACL'23})} & \cellcolor{mygray!46.64}{56.\scalebox{0.75}{9}} & \cellcolor{mygray!58.4}{67.\scalebox{0.75}{3}} & \cellcolor{mygray!45}{67.\scalebox{0.75}{0}} & \cellcolor{mygray!55.75}{83.\scalebox{0.75}{0}} & \cellcolor{mygray!50.99}{27.\scalebox{0.75}{2}} & \cellcolor{mygray!61.7}{35.\scalebox{0.75}{7}} & \cellcolor{mygray!54.6}{45.\scalebox{0.75}{3}} & \cellcolor{mygray!64.04}{57.\scalebox{0.75}{6}} & \cellcolor{mygray!46.43}{69.\scalebox{0.75}{3}} & \cellcolor{mygray!66.45}{86.\scalebox{0.75}{3}} & \cellcolor{mygray!47.38}{75.\scalebox{0.75}{8}} & \cellcolor{mygray!70}{\textbf{93.\scalebox{0.75}{9}}} & \cellcolor{myred!57.45}{4.\scalebox{0.75}{7}} \\
    \cmidrule(r{1mm}){1-1}  \cmidrule(l{0.5mm}r{1mm}){2-5} \cmidrule(l{0.5mm}r{1mm}){6-9} \cmidrule(l{0.5mm}r{1mm}){10-13} \cmidrule(l{0.5mm}){14-14}
    \textbf{\modelname} (ours) & \cellcolor{mygray!70}{\textbf{65.\scalebox{0.75}{9}}} & \cellcolor{mygray!70}{\textbf{72.\scalebox{0.75}{4}}} & \cellcolor{mygray!70}{\textbf{83.\scalebox{0.75}{4}}} & \cellcolor{mygray!70}{\textbf{90.\scalebox{0.75}{5}}} & \cellcolor{mygray!70}{\textbf{33.\scalebox{0.75}{1}}} & \cellcolor{mygray!70}{\textbf{37.\scalebox{0.75}{8}}} & \cellcolor{mygray!70}{\textbf{51.\scalebox{0.75}{3}}} & \cellcolor{mygray!70}{\textbf{59.\scalebox{0.75}{8}}} & \cellcolor{mygray!70}{\textbf{85.\scalebox{0.75}{0}}} & \cellcolor{mygray!70}{\textbf{88.\scalebox{0.75}{3}}} & \cellcolor{mygray!70}{\textbf{91.\scalebox{0.75}{3}}} & \cellcolor{mygray!69.83}{\underline{~93.\scalebox{0.75}{8}~}} & \cellcolor{myred!70}{\textbf{1.\scalebox{0.75}{1}}} \\
    \bottomrule
    \end{tabular}%
  }
  \label{tab:res_retrieve}%
\end{table*}%

\begin{table*}[!t]
\renewcommand{\arraystretch}{0.85}
\setlength{\tabcolsep}{4pt}
  \centering
  \caption{Results of multi-hop question answering (QA) performance. We report Exact Match (EM), F1 score, Precision (P), and Recall (R), all reported as percentages (\%). Best results are in \textbf{bold} and runner-ups are \underline{underlined}. The darker the cell, the better.}
  \vspace{-0.3cm}
  \resizebox{\textwidth}{!}{
    \begin{tabular}{lccccccccccccr}
    \toprule
    \textbf{Dataset} & \multicolumn{4}{c}{\texttt{HotpotQA}}  & \multicolumn{4}{c}{\texttt{MuSiQue}}   & \multicolumn{4}{c}{\texttt{2WikiMultiHopQA}} & \multirow{2}[2]{*}{\textbf{Avg. Rank}} \\
\cmidrule(r{1mm}){1-1}  \cmidrule(l{0.5mm}r{1mm}){2-5} \cmidrule(l{0.5mm}r{1mm}){6-9} \cmidrule(l{0.5mm}r{1mm}){10-13} \textbf{Method} & ~~~EM~~ & ~~F1~~ & ~~P~~ & ~~R~~ & ~~~EM~~ & ~~F1~~ & ~~P~~ & ~~R~~ & ~~~EM~~ & ~~F1~~ & ~~P~~ & ~~R~~ & \\
    \cmidrule(r{1mm}){1-1}  \cmidrule(l{0.5mm}r{1mm}){2-5} \cmidrule(l{0.5mm}r{1mm}){6-9} \cmidrule(l{0.5mm}r{1mm}){10-13} \cmidrule(l{0.5mm}){14-14}
    \texttt{GPT-4o-mini} \scalebox{0.68}{(\mycite{robertson1994some}~\textit{arXiv'24})} & \cellcolor{mygray!2}{40.\scalebox{0.75}{0}} & \cellcolor{mygray!2}{53.\scalebox{0.75}{2}} & \cellcolor{mygray!2}{55.\scalebox{0.75}{4}} & \cellcolor{mygray!2}{72.\scalebox{0.75}{2}} & \cellcolor{mygray!2}{19.\scalebox{0.75}{5}} & \cellcolor{mygray!8.16}{23.\scalebox{0.75}{6}} & \cellcolor{mygray!8.3}{32.\scalebox{0.75}{3}} & \cellcolor{mygray!8.23}{41.\scalebox{0.75}{2}} & \cellcolor{mygray!10.36}{46.\scalebox{0.75}{9}} & \cellcolor{mygray!9.39}{57.\scalebox{0.75}{9}} & \cellcolor{mygray!9.81}{51.\scalebox{0.75}{8}} & \cellcolor{mygray!9.44}{61.\scalebox{0.75}{9}} & \cellcolor{myred!2}{19.\scalebox{0.75}{3}} \\
    \cmidrule(r{1mm}){1-1}  \cmidrule(l{0.5mm}r{1mm}){2-5} \cmidrule(l{0.5mm}r{1mm}){6-9} \cmidrule(l{0.5mm}r{1mm}){10-13} \cmidrule(l{0.5mm}){14-14}
    \texttt{Contriever} \scalebox{0.68}{(\mycite{izacard2022unsupervised}~\textit{TMLR'22})} & \cellcolor{mygray!12.17}{34.\scalebox{0.75}{9}} & \cellcolor{mygray!20.98}{51.\scalebox{0.75}{2}} & \cellcolor{mygray!19.95}{51.\scalebox{0.75}{8}} & \cellcolor{mygray!20.93}{52.\scalebox{0.75}{2}} & \cellcolor{mygray!10.53}{16.\scalebox{0.75}{3}} & \cellcolor{mygray!2}{21.\scalebox{0.75}{6}} & \cellcolor{mygray!2}{21.\scalebox{0.75}{9}} & \cellcolor{mygray!2}{22.\scalebox{0.75}{0}} & \cellcolor{mygray!2}{24.\scalebox{0.75}{3}} & \cellcolor{mygray!2}{33.\scalebox{0.75}{9}} & \cellcolor{mygray!2}{34.\scalebox{0.75}{1}} & \cellcolor{mygray!2}{34.\scalebox{0.75}{8}} & \cellcolor{myred!5.58}{18.\scalebox{0.75}{5}} \\
 \texttt{GTR} \scalebox{0.68}{(\mycite{ni2022large}~\textit{EMNLP'22})} & \cellcolor{mygray!9.68}{33.\scalebox{0.75}{8}} & \cellcolor{mygray!22.56}{51.\scalebox{0.75}{9}} & \cellcolor{mygray!21.43}{52.\scalebox{0.75}{5}} & \cellcolor{mygray!22.47}{52.\scalebox{0.75}{9}} & \cellcolor{mygray!7.83}{15.\scalebox{0.75}{1}} & \cellcolor{mygray!10.87}{25.\scalebox{0.75}{2}} & \cellcolor{mygray!11.07}{25.\scalebox{0.75}{5}} & \cellcolor{mygray!10.86}{25.\scalebox{0.75}{7}} & \cellcolor{mygray!13.73}{33.\scalebox{0.75}{7}} & \cellcolor{mygray!14.47}{42.\scalebox{0.75}{5}} & \cellcolor{mygray!15.06}{42.\scalebox{0.75}{8}} & \cellcolor{mygray!14.6}{43.\scalebox{0.75}{6}} & \cellcolor{myred!9.16}{16.\scalebox{0.75}{9}} \\
 \texttt{ColBERTv2} \scalebox{0.68}{(\mycite{santhanam2022colbertv2}~\textit{NAACL'22})} & \cellcolor{mygray!31.37}{43.\scalebox{0.75}{4}} & \cellcolor{mygray!35.66}{57.\scalebox{0.75}{7}} & \cellcolor{mygray!33.89}{58.\scalebox{0.75}{4}} & \cellcolor{mygray!35.45}{58.\scalebox{0.75}{8}} & \cellcolor{mygray!8.73}{15.\scalebox{0.75}{5}} & \cellcolor{mygray!13.83}{26.\scalebox{0.75}{4}} & \cellcolor{mygray!14.09}{26.\scalebox{0.75}{7}} & \cellcolor{mygray!13.97}{27.\scalebox{0.75}{0}} & \cellcolor{mygray!13.35}{33.\scalebox{0.75}{4}} & \cellcolor{mygray!15.63}{43.\scalebox{0.75}{3}} & \cellcolor{mygray!16.26}{43.\scalebox{0.75}{6}} & \cellcolor{mygray!15.6}{44.\scalebox{0.75}{3}} & \cellcolor{myred!19.89}{14.\scalebox{0.75}{8}} \\
 \texttt{RAPTOR} \scalebox{0.68}{(\mycite{sarthi2024raptor}~\textit{ICLR'24})} & \cellcolor{mygray!42.21}{48.\scalebox{0.75}{2}} & \cellcolor{mygray!39.05}{59.\scalebox{0.75}{2}} & \cellcolor{mygray!37.06}{59.\scalebox{0.75}{9}} & \cellcolor{mygray!38.75}{60.\scalebox{0.75}{3}} & \cellcolor{mygray!13.45}{17.\scalebox{0.75}{6}} & \cellcolor{mygray!19.99}{28.\scalebox{0.75}{9}} & \cellcolor{mygray!20.39}{29.\scalebox{0.75}{2}} & \cellcolor{mygray!19.96}{29.\scalebox{0.75}{5}} & \cellcolor{mygray!9.86}{30.\scalebox{0.75}{6}} & \cellcolor{mygray!13.74}{42.\scalebox{0.75}{0}} & \cellcolor{mygray!14.31}{42.\scalebox{0.75}{3}} & \cellcolor{mygray!13.88}{43.\scalebox{0.75}{1}} & \cellcolor{myred!27.05}{13.\scalebox{0.75}{3}} \\
    \cmidrule(r{1mm}){1-1}  \cmidrule(l{0.5mm}r{1mm}){2-5} \cmidrule(l{0.5mm}r{1mm}){6-9} \cmidrule(l{0.5mm}r{1mm}){10-13} \cmidrule(l{0.5mm}){14-14}
    \texttt{GraphRAG} \scalebox{0.68}{(\mycite{edge2024local}~\textit{arXiv'24})} & \cellcolor{mygray!13.07}{35.\scalebox{0.75}{3}} & \cellcolor{mygray!28.66}{54.\scalebox{0.75}{6}} & \cellcolor{mygray!27.34}{55.\scalebox{0.75}{3}} & \cellcolor{mygray!28.41}{55.\scalebox{0.75}{6}} & \cellcolor{mygray!4.02}{13.\scalebox{0.75}{4}} & \cellcolor{mygray!21.46}{29.\scalebox{0.75}{5}} & \cellcolor{mygray!21.9}{29.\scalebox{0.75}{8}} & \cellcolor{mygray!21.39}{30.\scalebox{0.75}{1}} & \cellcolor{mygray!6.99}{28.\scalebox{0.75}{3}} & \cellcolor{mygray!20.85}{46.\scalebox{0.75}{9}} & \cellcolor{mygray!21.66}{47.\scalebox{0.75}{2}} & \cellcolor{mygray!21.04}{48.\scalebox{0.75}{1}} & \cellcolor{myred!16.32}{14.\scalebox{0.75}{3}} \\
 \texttt{G-Retriever} \scalebox{0.68}{(\mycite{he2024g}~\textit{NeurIPS'24})} & \cellcolor{mygray!8.33}{33.\scalebox{0.75}{2}} & \cellcolor{mygray!18.94}{50.\scalebox{0.75}{3}} & \cellcolor{mygray!17.42}{50.\scalebox{0.75}{6}} & \cellcolor{mygray!19.39}{51.\scalebox{0.75}{5}} & \cellcolor{mygray!14.34}{18.\scalebox{0.75}{0}} & \cellcolor{mygray!12.59}{25.\scalebox{0.75}{9}} & \cellcolor{mygray!12.58}{26.\scalebox{0.75}{1}} & \cellcolor{mygray!13.01}{26.\scalebox{0.75}{6}} & \cellcolor{mygray!24.46}{42.\scalebox{0.75}{3}} & \cellcolor{mygray!18.96}{45.\scalebox{0.75}{6}} & \cellcolor{mygray!19.71}{45.\scalebox{0.75}{9}} & \cellcolor{mygray!19.18}{46.\scalebox{0.75}{8}} & \cellcolor{myred!12.74}{15.\scalebox{0.75}{7}} \\
 \texttt{LightRAG} \scalebox{0.68}{(\mycite{guo2024lightrag}~\textit{arXiv'24})} & \cellcolor{mygray!16.46}{36.\scalebox{0.75}{8}} & \cellcolor{mygray!14.43}{48.\scalebox{0.75}{3}} & \cellcolor{mygray!13.83}{48.\scalebox{0.75}{9}} & \cellcolor{mygray!14.32}{49.\scalebox{0.75}{2}} & \cellcolor{mygray!14.57}{18.\scalebox{0.75}{1}} & \cellcolor{mygray!16.54}{27.\scalebox{0.75}{5}} & \cellcolor{mygray!16.86}{27.\scalebox{0.75}{8}} & \cellcolor{mygray!16.61}{28.\scalebox{0.75}{1}} & \cellcolor{mygray!27.95}{45.\scalebox{0.75}{1}} & \cellcolor{mygray!24.62}{49.\scalebox{0.75}{5}} & \cellcolor{mygray!25.57}{49.\scalebox{0.75}{8}} & \cellcolor{mygray!24.91}{50.\scalebox{0.75}{8}} & \cellcolor{myred!23.47}{14.\scalebox{0.75}{4}} \\
 \texttt{HippoRAG} \scalebox{0.68}{(\mycite{jimenez2024hipporag}~\textit{NeurIPS'24})} & \cellcolor{mygray!27.75}{41.\scalebox{0.75}{8}} & \cellcolor{mygray!29.56}{55.\scalebox{0.75}{0}} & \cellcolor{mygray!28.19}{55.\scalebox{0.75}{7}} & \cellcolor{mygray!29.29}{56.\scalebox{0.75}{0}} & \cellcolor{mygray!17.04}{19.\scalebox{0.75}{2}} & \cellcolor{mygray!22.2}{29.\scalebox{0.75}{8}} & \cellcolor{mygray!22.65}{30.\scalebox{0.75}{1}} & \cellcolor{mygray!22.35}{30.\scalebox{0.75}{5}} & \cellcolor{mygray!29.82}{46.\scalebox{0.75}{6}} & \cellcolor{mygray!39.12}{59.\scalebox{0.75}{5}} & \cellcolor{mygray!40.88}{60.\scalebox{0.75}{0}} & \cellcolor{mygray!39.65}{61.\scalebox{0.75}{1}} & \cellcolor{myred!34.21}{11.\scalebox{0.75}{6}} \\
 \texttt{HippoRAG} \texttt{2} \scalebox{0.68}{(\mycite{gutierrez2025rag}~\textit{ICML'25})} & \cellcolor{mygray!62.77}{\underline{~57.\scalebox{0.75}{3}~}} & \cellcolor{mygray!62.54}{69.\scalebox{0.75}{6}} & \cellcolor{mygray!58.6}{70.\scalebox{0.75}{1}} & \cellcolor{mygray!63.4}{71.\scalebox{0.75}{5}} & \cellcolor{mygray!50.03}{33.\scalebox{0.75}{9}} & \cellcolor{mygray!48.57}{40.\scalebox{0.75}{5}} & \cellcolor{mygray!49.6}{40.\scalebox{0.75}{8}} & \cellcolor{mygray!48.93}{41.\scalebox{0.75}{6}} & \cellcolor{mygray!65.76}{\underline{~75.\scalebox{0.75}{4}~}} & \cellcolor{mygray!66.23}{78.\scalebox{0.75}{2}} & \cellcolor{mygray!68.95}{78.\scalebox{0.75}{7}} & \cellcolor{mygray!67.14}{80.\scalebox{0.75}{3}} & \cellcolor{myred!55.68}{3.\scalebox{0.75}{8}} \\
 \texttt{SubgraphRAG} \scalebox{0.68}{(\mycite{li2025simple}~\textit{ICLR'25})} & \cellcolor{mygray!52.15}{52.\scalebox{0.75}{6}} & \cellcolor{mygray!53.96}{65.\scalebox{0.75}{8}} & \cellcolor{mygray!50.36}{66.\scalebox{0.75}{2}} & \cellcolor{mygray!54.6}{67.\scalebox{0.75}{5}} & \cellcolor{mygray!48.01}{33.\scalebox{0.75}{0}} & \cellcolor{mygray!45.36}{39.\scalebox{0.75}{2}} & \cellcolor{mygray!46.33}{39.\scalebox{0.75}{5}} & \cellcolor{mygray!45.58}{40.\scalebox{0.75}{2}} & \cellcolor{mygray!60.27}{71.\scalebox{0.75}{0}} & \cellcolor{mygray!63.33}{76.\scalebox{0.75}{2}} & \cellcolor{mygray!65.95}{76.\scalebox{0.75}{7}} & \cellcolor{mygray!64.27}{78.\scalebox{0.75}{3}} & \cellcolor{myred!48.53}{6.\scalebox{0.75}{7}} \\
 \texttt{PropRAG} \scalebox{0.68}{(\mycite{wang2025proprag}~\textit{EMNLP'25})} & \cellcolor{mygray!62.54}{57.\scalebox{0.75}{2}} & \cellcolor{mygray!63.67}{70.\scalebox{0.75}{1}} & \cellcolor{mygray!57.54}{69.\scalebox{0.75}{6}} & \cellcolor{mygray!62.3}{71.\scalebox{0.75}{0}} & \cellcolor{mygray!61.25}{\underline{~38.\scalebox{0.75}{9}~}} & \cellcolor{mygray!50.54}{41.\scalebox{0.75}{3}} & \cellcolor{mygray!49.1}{40.\scalebox{0.75}{6}} & \cellcolor{mygray!46.06}{40.\scalebox{0.75}{4}} & \cellcolor{mygray!62.76}{73.\scalebox{0.75}{0}} & \cellcolor{mygray!68.7}{\underline{~79.\scalebox{0.75}{9}~}} & \cellcolor{mygray!68.5}{78.\scalebox{0.75}{4}} & \cellcolor{mygray!66.42}{79.\scalebox{0.75}{8}} & \cellcolor{myred!59.26}{4.\scalebox{0.75}{0}} \\
 \texttt{GFM-RAG} \scalebox{0.68}{(\mycite{luo2025gfm}~\textit{NeurIPS'25})} & \cellcolor{mygray!49.89}{51.\scalebox{0.75}{6}} & \cellcolor{mygray!56.45}{66.\scalebox{0.75}{9}} & \cellcolor{mygray!53.53}{67.\scalebox{0.75}{7}} & \cellcolor{mygray!56.14}{68.\scalebox{0.75}{2}} & \cellcolor{mygray!41.72}{30.\scalebox{0.75}{2}} & \cellcolor{mygray!48.32}{40.\scalebox{0.75}{4}} & \cellcolor{mygray!49.6}{40.\scalebox{0.75}{8}} & \cellcolor{mygray!48.21}{41.\scalebox{0.75}{3}} & \cellcolor{mygray!58.77}{69.\scalebox{0.75}{8}} & \cellcolor{mygray!65.51}{77.\scalebox{0.75}{7}} & \cellcolor{mygray!68.2}{78.\scalebox{0.75}{2}} & \cellcolor{mygray!66.42}{79.\scalebox{0.75}{8}} & \cellcolor{myred!52.11}{5.\scalebox{0.75}{8}} \\
    \cmidrule(r{1mm}){1-1}  \cmidrule(l{0.5mm}r{1mm}){2-5} \cmidrule(l{0.5mm}r{1mm}){6-9} \cmidrule(l{0.5mm}r{1mm}){10-13} \cmidrule(l{0.5mm}){14-14}
    \texttt{FLARE} \scalebox{0.68}{(\mycite{jiang2023active}~\textit{EMNLP'23})} & \cellcolor{mygray!43.34}{48.\scalebox{0.75}{7}} & \cellcolor{mygray!42.21}{60.\scalebox{0.75}{6}} & \cellcolor{mygray!40.01}{61.\scalebox{0.75}{3}} & \cellcolor{mygray!42.05}{61.\scalebox{0.75}{8}} & \cellcolor{mygray!10.3}{16.\scalebox{0.75}{2}} & \cellcolor{mygray!18.75}{28.\scalebox{0.75}{4}} & \cellcolor{mygray!19.13}{28.\scalebox{0.75}{7}} & \cellcolor{mygray!19}{29.\scalebox{0.75}{1}} & \cellcolor{mygray!29.95}{46.\scalebox{0.75}{7}} & \cellcolor{mygray!47.67}{65.\scalebox{0.75}{4}} & \cellcolor{mygray!49.58}{65.\scalebox{0.75}{8}} & \cellcolor{mygray!48.38}{67.\scalebox{0.75}{2}} & \cellcolor{myred!41.37}{10.\scalebox{0.75}{0}} \\
 \texttt{Adaptive-RAG} \scalebox{0.68}{(\mycite{jeong2024adaptive}~\textit{NAACL'24})} & \cellcolor{mygray!36.11}{45.\scalebox{0.75}{5}} & \cellcolor{mygray!39.95}{59.\scalebox{0.75}{6}} & \cellcolor{mygray!37.9}{60.\scalebox{0.75}{3}} & \cellcolor{mygray!39.63}{60.\scalebox{0.75}{7}} & \cellcolor{mygray!4.92}{13.\scalebox{0.75}{8}} & \cellcolor{mygray!11.86}{25.\scalebox{0.75}{6}} & \cellcolor{mygray!12.07}{25.\scalebox{0.75}{9}} & \cellcolor{mygray!11.82}{26.\scalebox{0.75}{1}} & \cellcolor{mygray!32.69}{48.\scalebox{0.75}{9}} & \cellcolor{mygray!43.9}{62.\scalebox{0.75}{8}} & \cellcolor{mygray!45.68}{63.\scalebox{0.75}{2}} & \cellcolor{mygray!44.52}{64.\scalebox{0.75}{5}} & \cellcolor{myred!37.79}{11.\scalebox{0.75}{8}} \\
 \texttt{ColBERTv2} + \texttt{IRCoT} \scalebox{0.68}{(\mycite{trivedi2023interleaving}~\textit{ACL'23})} & \cellcolor{mygray!36.11}{45.\scalebox{0.75}{5}} & \cellcolor{mygray!37.24}{58.\scalebox{0.75}{4}} & \cellcolor{mygray!35.37}{59.\scalebox{0.75}{1}} & \cellcolor{mygray!36.99}{59.\scalebox{0.75}{5}} & \cellcolor{mygray!16.81}{19.\scalebox{0.75}{1}} & \cellcolor{mygray!23.93}{30.\scalebox{0.75}{5}} & \cellcolor{mygray!24.41}{30.\scalebox{0.75}{8}} & \cellcolor{mygray!24.03}{31.\scalebox{0.75}{2}} & \cellcolor{mygray!15.85}{35.\scalebox{0.75}{4}} & \cellcolor{mygray!18.24}{45.\scalebox{0.75}{1}} & \cellcolor{mygray!18.96}{45.\scalebox{0.75}{4}} & \cellcolor{mygray!18.46}{46.\scalebox{0.75}{3}} & \cellcolor{myred!30.63}{11.\scalebox{0.75}{9}} \\
 \texttt{HippoRAG} + \texttt{IRCoT} \scalebox{0.68}{(\mycite{trivedi2023interleaving}~\textit{ACL'23})} & \cellcolor{mygray!36.56}{45.\scalebox{0.75}{7}} & \cellcolor{mygray!39.05}{59.\scalebox{0.75}{2}} & \cellcolor{mygray!37.06}{59.\scalebox{0.75}{9}} & \cellcolor{mygray!38.75}{60.\scalebox{0.75}{3}} & \cellcolor{mygray!23.1}{21.\scalebox{0.75}{9}} & \cellcolor{mygray!30.83}{33.\scalebox{0.75}{3}} & \cellcolor{mygray!31.72}{33.\scalebox{0.75}{7}} & \cellcolor{mygray!30.73}{34.\scalebox{0.75}{0}} & \cellcolor{mygray!31.2}{47.\scalebox{0.75}{7}} & \cellcolor{mygray!43.76}{62.\scalebox{0.75}{7}} & \cellcolor{mygray!45.53}{63.\scalebox{0.75}{1}} & \cellcolor{mygray!44.23}{64.\scalebox{0.75}{3}} & \cellcolor{myred!44.95}{9.\scalebox{0.75}{3}} \\
 \texttt{GFM-RAG} + \texttt{IRCoT} \scalebox{0.68}{(\mycite{trivedi2023interleaving}~\textit{ACL'23})} & \cellcolor{mygray!59.83}{56.\scalebox{0.75}{0}} & \cellcolor{mygray!67.51}{\underline{~71.\scalebox{0.75}{8}~}} & \cellcolor{mygray!59.86}{70.\scalebox{0.75}{7}} & \cellcolor{mygray!60.54}{70.\scalebox{0.75}{2}} & \cellcolor{mygray!56.09}{36.\scalebox{0.75}{6}} & \cellcolor{mygray!70}{\textbf{49.\scalebox{0.75}{2}}} & \cellcolor{mygray!66.98}{\underline{~47.\scalebox{0.75}{7}~}} & \cellcolor{mygray!62.58}{\underline{~47.\scalebox{0.75}{3}~}} & \cellcolor{mygray!62.14}{72.\scalebox{0.75}{5}} & \cellcolor{mygray!70}{\textbf{80.\scalebox{0.75}{8}}} & \cellcolor{mygray!66.55}{77.\scalebox{0.75}{1}} & \cellcolor{mygray!63.84}{78.\scalebox{0.75}{0}} & \cellcolor{myred!66.42}{3.\scalebox{0.75}{4}} \\
    \cmidrule(r{1mm}){1-1}  \cmidrule(l{0.5mm}r{1mm}){2-5} \cmidrule(l{0.5mm}r{1mm}){6-9} \cmidrule(l{0.5mm}r{1mm}){10-13} \cmidrule(l{0.5mm}){14-14}
    \textbf{\modelname} (ours) & \cellcolor{mygray!58.48}{55.\scalebox{0.75}{4}} & \cellcolor{mygray!64.35}{70.\scalebox{0.75}{4}} & \cellcolor{mygray!63.66}{\underline{~72.\scalebox{0.75}{5}~}} & \cellcolor{mygray!65.6}{\underline{~72.\scalebox{0.75}{5}~}} & \cellcolor{mygray!52.05}{34.\scalebox{0.75}{8}} & \cellcolor{mygray!51.03}{41.\scalebox{0.75}{5}} & \cellcolor{mygray!53.38}{42.\scalebox{0.75}{3}} & \cellcolor{mygray!51.08}{42.\scalebox{0.75}{5}} & \cellcolor{mygray!64.39}{74.\scalebox{0.75}{3}} & \cellcolor{mygray!67.1}{78.\scalebox{0.75}{8}} & \cellcolor{mygray!69.55}{\underline{~79.\scalebox{0.75}{1}~}} & \cellcolor{mygray!67.28}{\underline{~80.\scalebox{0.75}{4}~}} & \cellcolor{myred!62.84}{\underline{3.\scalebox{0.75}{0}}} \\
 \textbf{\modelname} (ours) + \texttt{IRCoT} \scalebox{0.68}{(\mycite{trivedi2023interleaving}~\textit{ACL'23})} & \cellcolor{mygray!70}{\textbf{60.\scalebox{0.75}{5}}} & \cellcolor{mygray!70}{\textbf{72.\scalebox{0.75}{9}}} & \cellcolor{mygray!70}{\textbf{75.\scalebox{0.75}{5}}} & \cellcolor{mygray!70}{\textbf{74.\scalebox{0.75}{5}}} & \cellcolor{mygray!70}{\textbf{42.\scalebox{0.75}{8}}} & \cellcolor{mygray!68.77}{\underline{~48.\scalebox{0.75}{7}~}} & \cellcolor{mygray!70}{\textbf{48.\scalebox{0.75}{9}}} & \cellcolor{mygray!70}{\textbf{50.\scalebox{0.75}{4}}} & \cellcolor{mygray!70}{\textbf{78.\scalebox{0.75}{8}}} & \cellcolor{mygray!68.41}{79.\scalebox{0.75}{7}} & \cellcolor{mygray!70}{\textbf{79.\scalebox{0.75}{4}}} & \cellcolor{mygray!70}{\textbf{82.\scalebox{0.75}{3}}} & \cellcolor{myred!70}{\textbf{1.\scalebox{0.75}{3}}} \\
    \bottomrule
    \end{tabular}%
  }
  \label{tab:res_qa}%
\end{table*}%

\textbf{Results.}
As shown in Figure~\ref{fig:radar}, \modelname~consistently delivers the best cross-domain performance on both retrieval and QA. It outperforms all baselines on document retrieval across seven datasets, with particularly strong gains on biomedical and customer-support benchmarks such as \texttt{PubMedQA}, \texttt{DelucionQA}, and \texttt{HAGRID}, indicating robust generalization under domain shift. In contrast, \texttt{PropRAG} and \texttt{HippoRAG} \texttt{2} exhibit clear performance drops, reflecting sensitivity to domain-specific indexing. On generative QA, \modelname~also achieves the highest ROUGE-L scores, showing that its cross-domain retrieval advantages directly translate to better answer generation. Additional domain-specific fine-tuning results are provided in Appendix~\ref{app:additional_res_1}.

\begin{figure*}[!t]
\centering
    \begin{minipage}{0.49\textwidth}
        \centering
        \includegraphics[height=4.3cm]{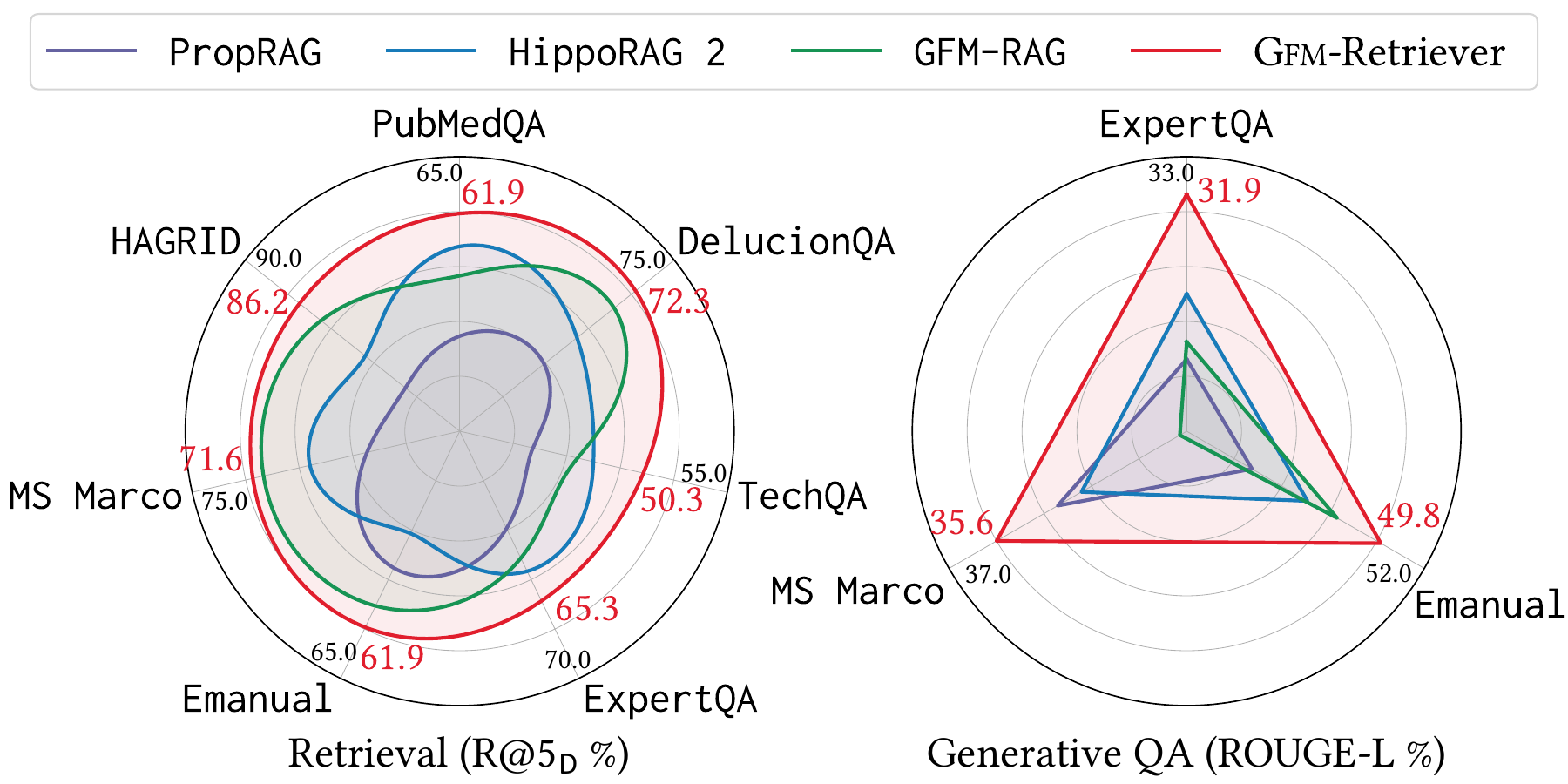}
        \vspace{-0.3cm}
        \caption{Cross-domain generalizability analysis.}
        \label{fig:radar}
    \end{minipage}
    \hfill
    \begin{minipage}{0.49\textwidth}
        \centering
        \includegraphics[height=4.3cm]{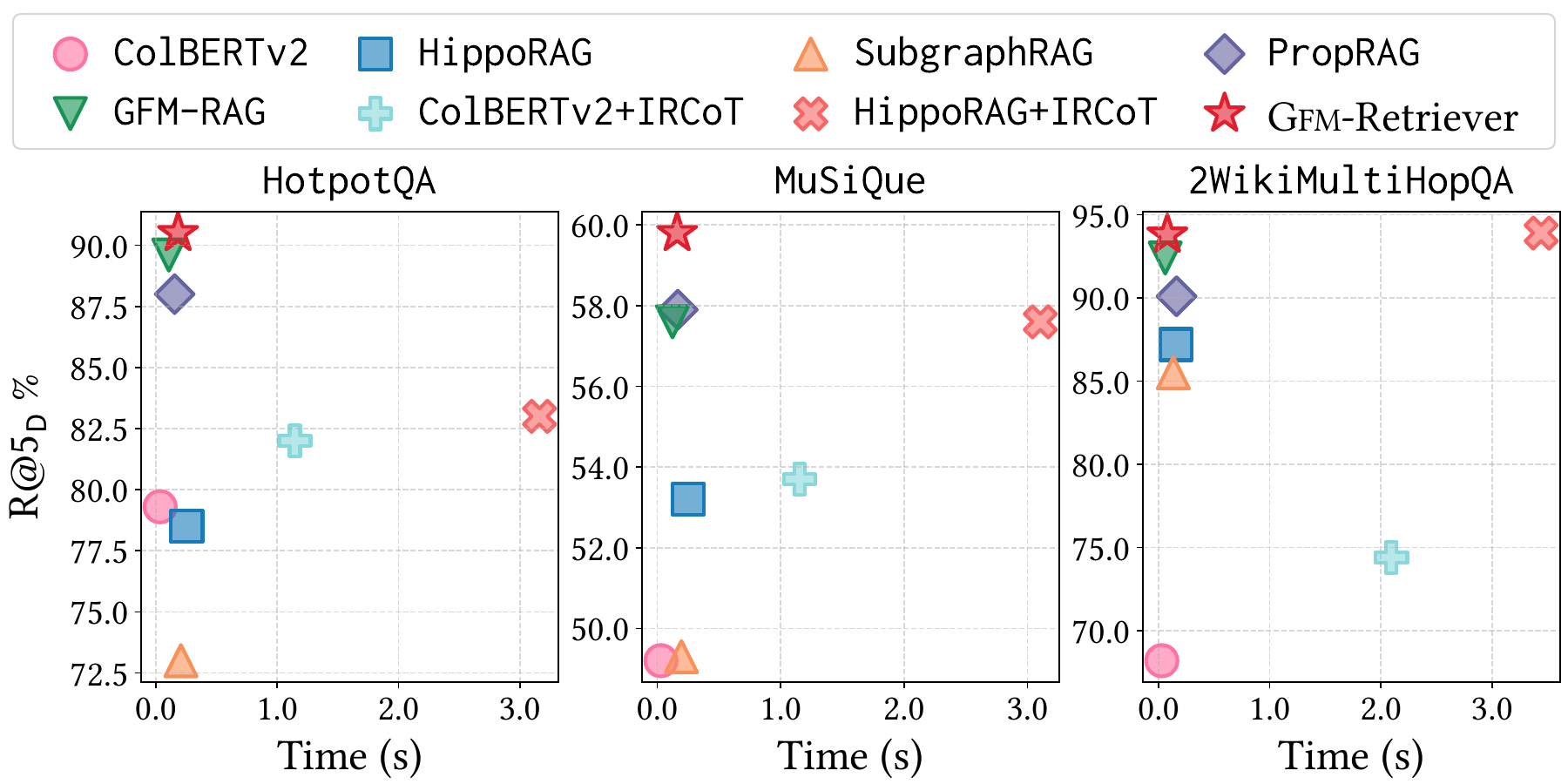}
        \vspace{-0.3cm}
        \caption{Retrieving time efficiency and effectiveness.}
        \label{fig:efficiency}
    \end{minipage}
\end{figure*}
\begin{figure*}[!t]
\centering
    \begin{minipage}{0.49\textwidth}
        \centering
        \includegraphics[height=4.3cm]{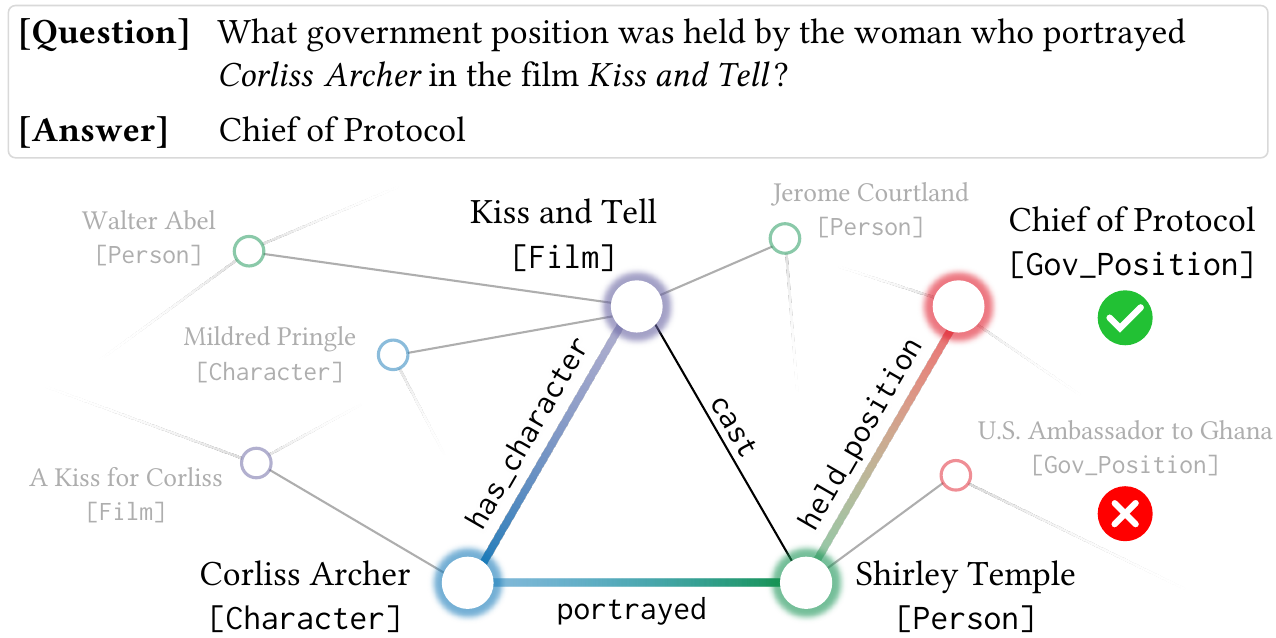}
        \vspace{-0.3cm}
        \caption{Visualizations of the retrieved subgraph.}
        \label{fig:case}
    \end{minipage}
    \hfill
    \begin{minipage}{0.49\textwidth}
        \centering
        \includegraphics[height=4.3cm]{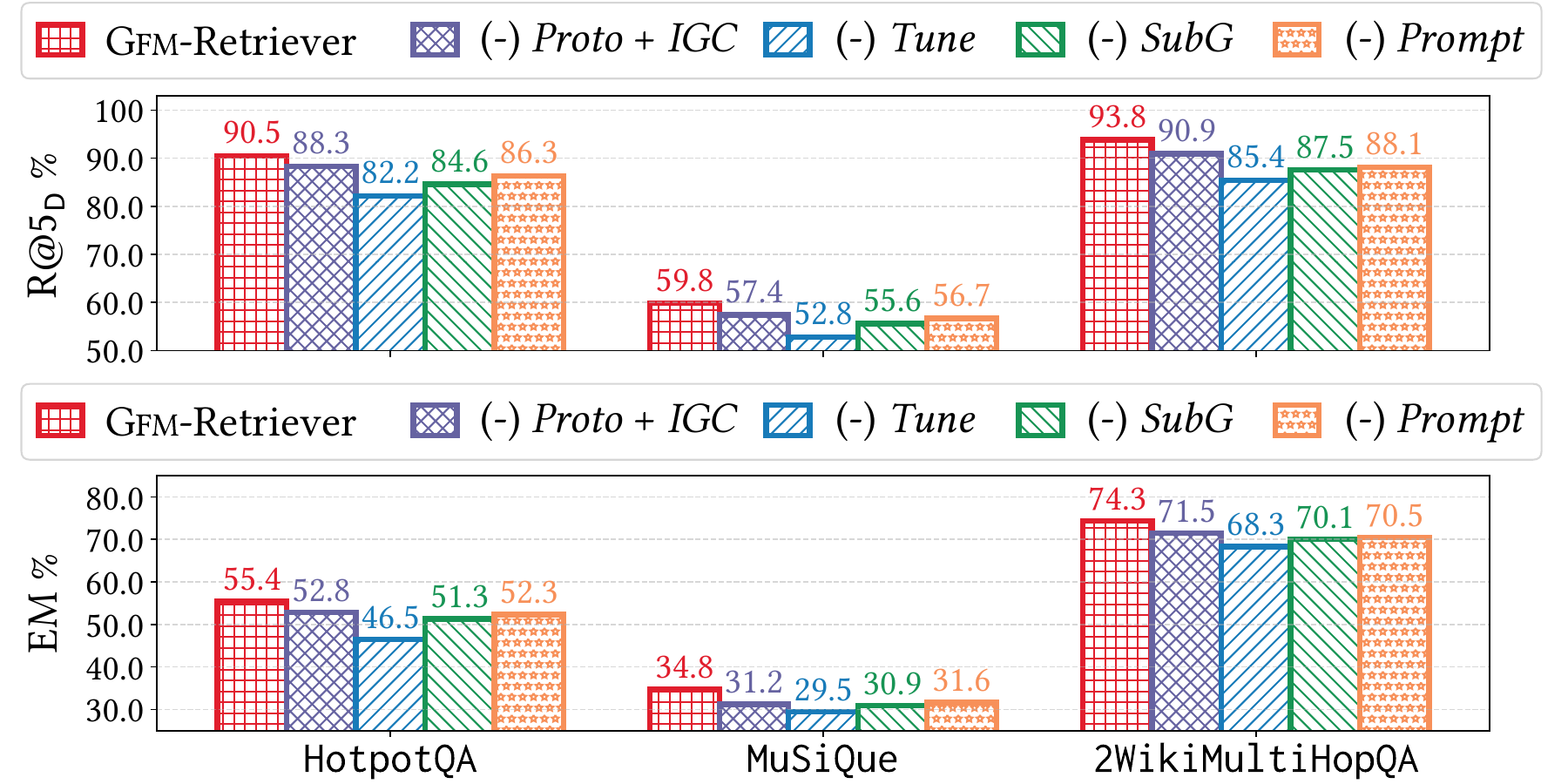}
        \vspace{-0.3cm}
        \caption{Ablation studies on retrieving and multi-hop QA.}
        \label{fig:ablation}
    \end{minipage}
\end{figure*}
\begin{figure*}[!t]
    \centering
    \includegraphics[width=\linewidth]{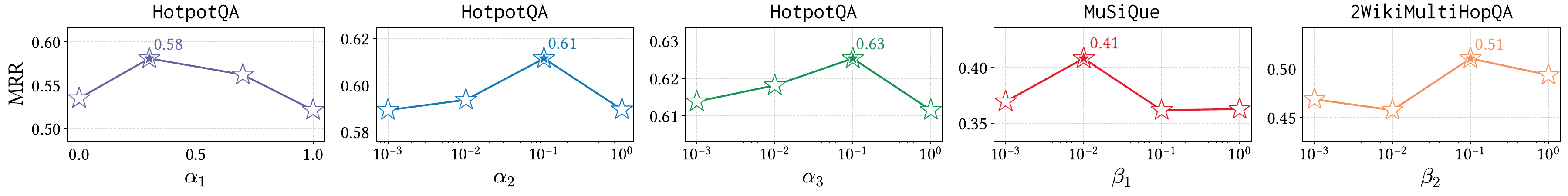}
    \vspace{-0.75cm}
    \caption{Sensitivity analysis of hyper-parameters. Results are reported with the MRR metric.}
    \label{fig:hyper}
\vspace{-0.2cm}
\end{figure*}

\subsection{\textit{RQ4:} Efficiency and Effectiveness Analysis}
\label{rq4:efficiency}
We compare retrieval latency and document-level recall (R@5$_\textsf{D}$) across representative baselines.

\textbf{Results.}
Figure~\ref{fig:efficiency} illustrates a clear efficiency-effectiveness trade-off across retrieval paradigms. Single-step dense retrievers (\eg, \texttt{ColBERTv2}) achieve low latency but suffer from limited recall, while multi-step methods such as \texttt{HippoRAG}+\texttt{IRCoT} improve effectiveness at the cost of substantial latency. In contrast, \modelname~consistently lies in the upper-left region across all datasets, achieving the highest or near-highest R@5$_\textsf{D}$ with sub-second retrieval time. Compared to graph-based baselines like \texttt{SubgraphRAG} and \texttt{GFM-RAG}, it delivers stronger retrieval quality without iterative graph traversal, benefiting from the IB-optimized subgraph selector that extracts a compact query-conditioned subgraph in a single forward pass.

\subsection{\textit{RQ5:} Subgraph Visualization and Case Study}
\label{rq5:case}
Figure~\ref{fig:case} illustrates a representative multi-hop QA example from \texttt{HotpotQA}. 
\modelname~retrieves a compact subgraph that forms a clear reasoning chain from ``Corliss Archer'' to ``Kiss and Tell'', ``Shirley Temple'', and the correct government role ``Chief of Protocol''.
Importantly, entities that are semantically related but irrelevant to the target reasoning path are excluded. 
This case illustrates that \modelname~identifies a minimal, query-aligned subgraph whose explicit relational paths directly support multi-hop inference, rather than aggregating loosely related evidence.

\subsection{\textit{RQ6:} Ablation and Sensitivity Analysis}
\label{rq6:ablation}

\paragraph{\textnormal{\textbf{Ablation.}}}
We ablate \textbf{five} core modules of \modelname: prototype alignment (\textit{Proto}, Eq.~\eqref{eq:proto}), information gain regularizer (\textit{IGC}, Eq.~\eqref{eq:igc}, fine-tuning (\textit{Tune}, Eq.~\eqref{eq:ftn}, the subgraph selector (\textit{SubG}, Section~\ref{sec:method_subgraph}), and the in-context relational prompting strategy (\textit{Prompt}, Section~\ref{sec:method_prompt}). As shown in Figure~\ref{fig:ablation}, removing any component consistently degrades performance on both retrieval and QA.
Specifically, removing \textit{Proto+IGC} leads to clear drops in R@5$_\textsf{D}$, degrading into the retriever in \texttt{GFM-RAG}, sacrificing cross-domain alignment. Disabling \textit{Tune} causes degradation, especially on QA, highlighting the importance of adapting to query-conditioned subgraph selection. Removing \textit{SubG} also results in substantial declines, corresponding to replacing the learned subgraph with a general retrieval strategy. Removing \textit{Prompt} further reduces QA performance, whereas retrieval remains less affected, confirming that explicitly organizing relational paths into in-context prompts is important for translating structural evidence into effective multi-hop reasoning.

\textbf{Sensitivity.}
We study \textbf{five} weighting hyper-parameters: $\alpha_1$, $\alpha_2$, and $\alpha_3$ in Eq.~\eqref{eq:pre}, and $\beta_1$, $\beta_2$ in Eq.~\eqref{eq:bound_2}.
Figure~\ref{fig:hyper} reports the results measured by MRR. Overall, performance is sensitive to different values of the hyper-parameters and contains a reasonable range. 
\vspace{0.33cm}
\section{Conclusion}
\label{sec:conclusion}
In this work, we rethink graph-based retrieval from a structural perspective and propose \modelname, which directly learns query-specific subgraphs as reasoning evidence. By repurposing a pre-trained GFM as a cross-domain retriever and introducing a label-free information bottleneck for optimizing minimal yet sufficient subgraphs, our framework enables efficient and interpretable multi-hop reasoning. Extensive experiments demonstrate strong retrieval quality, QA performance, generalization, and efficiency.

\newpage
\bibliographystyle{ACM-Reference-Format}
\bibliography{ref}

\newpage
\appendix
\counterwithin{table}{section}
\counterwithin{figure}{section}
\counterwithin{equation}{section}

\section{Notations}
\label{app:notatios}
\vspace{0.5cm}

\begin{minipage}{\textwidth}
\label{tab:notation}%
  \centering
  \resizebox{0.85\textwidth}{!}{
    \begin{tabular}{ll}
    \toprule
    \textbf{Notations} & \textbf{Descriptions} \\
    \midrule
    \multicolumn{2}{l}{\textit{Knowledge Graph and Corpus}}\\
    $\mathcal{G}=(\mathcal{V},\mathcal{R},\mathcal{E})$ & Knowledge graph with entity set $\mathcal{V}$, relation set $\mathcal{R}$, and triple set $\mathcal{E}$.\\
    $(u,r,v)\in\mathcal{E}$ & Relational triple from entity $u$ to $v$ via relation $r$.\\
    $\{\mathcal{D}\}$, $d\in{\mathcal{D}}$ & Set of knowledge domains and a domain identifier.\\
    $\operatorname{Dom}_d(e)$ & Predicate indicating entity $e$ belongs to domain $d$.\\
    $\mathcal{C}=\{c_i\}$ & Corpus of document chunks.\\
    $\phi(c_i),\ \phi^{-1}(e)$ & Mapping between document chunks and mentioned entities.\\
    \midrule
    \multicolumn{2}{l}{\textit{Query, Answer, and RAG Interface}}\\
    $\mathbf{q}$, ${\mathcal{V}\mid \mathbf{q}}$ & Input query, and entities explicitly mentioned in the query.\\
    $\mathrm{LLM}(\mathbf{q},\mathcal{G}_{\mathbf{q}},\mathcal{C}_{\mathbf{q}})$, $y$ & The answer generator, and the generated answer.\\
    $\mathcal{C}_{\mathbf{q}}$ & Supporting documents induced by retrieved entities.\\
    \midrule
    \multicolumn{2}{l}{\textit{Subgraph Retrieval}}\\
    $\mathcal{G}_{\mathbf{q}}=(\mathcal{V}_{\mathbf{q}},\mathcal{R}_{\mathbf{q}},\mathcal{E}_{\mathbf{q}})$ & Query-specific retrieved subgraph.\\
    $\mathcal{V}_{\mathbf{q}}$, $\mathcal{R}_{\mathbf{q}}$, $\mathcal{E}_{\mathbf{q}}$ & Entity, relation, and edge sets of the retrieved subgraph.\\
    \midrule
    \multicolumn{2}{l}{\textit{Graph Foundation Model (GFM)}}\\
    $f_{\boldsymbol{\theta}}(\cdot)$, $f_{\boldsymbol{\theta}}^\star(\cdot)$, $\boldsymbol{\theta}$ & The graph foundation model and its pre-trained variant with parameters.\\
    $\mathbf{Z}_e^{(l)}$, $\mathbf{Z}_r^{(l)}$ & Entity and relation embeddings at the GFM layer $l$.\\
    $\mathcal{N}(v)$, $\mathbf{m}_v^{(l+1)}$ & Neighborhood of entity $v$ and aggregated message.\\
    $\textsc{Msg}(\cdot)$, $g(\cdot)$, $\textsc{Agg}(\cdot)$, $\textsc{Update}(\cdot)$ & Learnable GFM message passing and update functions.\\
    $l$, $L$ & GNN layer index and total number of layers.\\
    \midrule
    \multicolumn{2}{l}{\textit{Pre-training and Prototype Learning}}\\
    $(e,r,?)$, $(?,r,e)$ & Masked triples used in KG completion pre-training.\\
    $\mathbb{P}_{\mathbf{q}}(e)$ & Relevance probability of entity $e$ for query $\mathbf{q}$.\\
    $\mathcal{V}_{\mathbf{q}}^+$, $\mathcal{V}_{\mathbf{q}}^-$ & Positive and negative entity sets.\\
    $\mathcal{L}_{\text{BCE}}$, $\mathcal{L}_{\text{rank}}$ & Binary cross-entropy and ranking loss.\\
    $\mathcal{Q}_d$, $(\mathbf{q}_i,e_i)$ & Query-entity pairs sampled from domain $d$.\\
    $\mathbf{c}_d$, $\{\mathbf{c}_d\}$ & Domain prototype embedding and all prototypes.\\
    $\mathcal{L}_{\text{proto}}$, $\mathcal{L}_{\text{IGC}}$ & Prototype contrastive loss and information gain contrast regularizer.\\
    $\mathcal{L}_{\text{pre}}$, $\alpha_1$, $\alpha_2$, $\alpha_3$ & Overall pre-training objective and loss weights.\\
    \midrule
    \multicolumn{2}{l}{\textit{Subgraph Selection and Information Bottleneck}}\\
    $\mathbf{p}$ & Query-conditioned prompt.\\
    $\mathbf{Z}_e^\star$, $\widetilde{\mathbf{Z}}_e$ & Frozen entity embedding, and query-conditioned entity representation.\\
    $\mathbf{W}_1$, $\mathbf{W}_2$ & Learnable projection matrices.\\
    $S_e(\mathbf{q})$, $g_e$, $\epsilon$ & Soft entity selection probability, Gumbel noise, and selection threshold.\\
    $I(\cdot~;~\cdot)$, $H(\cdot\mid\cdot)$ & Mutual information and conditional entropy.\\
    $\mathbf{S}_e(\mathbf{q})$, $\mathbf{L}$ & Soft selection vector and normalized graph Laplacian.\\
    $\beta$, $\beta_1$, $\beta_2$ & Information Bottleneck regularization coefficients.\\
    $\mathcal{L}_{\text{IB}}$, $\mathcal{L}_{\text{LIB}}$ & Information Bottleneck objective and its label-free variant.\\
    $\mathbf{G}_{\mathbf{q}}$, $\mathcal{L}_{\text{NCE}}$ & Pooled subgraph representation and contrastive lower-bound loss.\\
    $\mathcal{L}_{\text{size}}$, $\mathcal{L}_{\text{con}}$ & Subgraph size and connectivity regularization terms.\\
    $\mathcal{L}_{\text{ftn}}$ & Overall fine-tuning objective.\\
    \midrule
    \multicolumn{2}{l}{\textit{Path Reasoning and Prompt Construction}}\\
    $\mathbf{M}$ & Binary entity-document incidence matrix.\\
    $w_e$, $R_c$ & Entity importance weight and document relevance score.\\
    $\Pi_{\mathbf{q}}$, $\pi$ & Set of reasoning paths and a single relational path.\\
    $S_r(\pi)$, $\phi_{\text{path}}(e)$, $\omega(r)$ & Path relevance score, path-based entity centrality, and relation depth penalty.\\
    \bottomrule
    \end{tabular}%
}
\end{minipage}
\clearpage
\section{Algorithm and Complexity Analysis}
\label{app:algo}
The pre-training, fine-tuning, and inference (QA task) pipeline of \modelname~is illustrated in Algorithm~\ref{alg:pretrain}, Algorithm~\ref{alg:finetune}, and Algorithm~\ref{alg:qa}. We next analyze their computational complexity.

\begin{algorithm}[t]
\caption{Pre-training of the GFM Retriever $f_{\boldsymbol{\theta}}(\cdot)$.}
\label{alg:pretrain}
\KwIn{
The multi-domain indexed KG $\mathcal{G}=(\mathcal{V},\mathcal{R},\mathcal{E})$ over domains $\mathcal{D}$;
query-conditioned GFM $f_{\boldsymbol{\theta}}(\cdot)$ with $L$ layers;
hyper-parameters $\alpha_{1}$, $\alpha_{2}$, $\alpha_{3}$, $\tau$.}
\KwOut{Pre-trained GFM retriever $f_{\boldsymbol{\theta}}^{\star}(\cdot)$.}
Initialize model parameters $\boldsymbol{\theta}$\;
\textcolor{gray}{\tcp{\# Phase I: KG completion pre-training}}       
\For{each pre-training epoch}{
    \For{each triple $(e,r,?)$ or $(?,r,e)$ sampled from $\mathcal{G}$}{
        Construct pseudo-query $\mathbf{q}$ from triplets\;
        Initialize embeddings: $\mathbf{Z}_{e}^{(0)} \gets$ Eq.~\eqref{eq:init}\;
        \For{$l=0$ to $L-1$}{
            Message passing: $\mathbf{m}_v^{(l+1)} \gets$ Eq.~\eqref{eq:message}\;
            Entity update: $\mathbf{Z}_v^{(l+1)} \gets$ Eq.~\eqref{eq:update}\;
        }
        Predict entity relevance $\mathbb{P}_{\mathbf{q}}(e) \gets$ Eq.~\eqref{eq:relevance}\;
        Loss: $\mathcal{L}_{\text{BCE}} \gets$ Eq.~\eqref{eq:bce}, $\mathcal{L}_{\text{rank}} \gets$ Eq.~\eqref{eq:rank}\;
        $\mathcal{L}_{\text{Phase I}}=\alpha_{1}\mathcal{L}_{\text{BCE}}+(1-\alpha_{1})\mathcal{L}_{\text{rank}}$ (Eq.~\eqref{eq:pre})\;
    }
}
\textcolor{gray}{\tcp{\# Phase II: Continuous pre-training}}       
\For{each pre-training epoch}{
    Sample pairs $(\mathbf{q}_i,e_i)\in\mathcal{Q}_d$ from each domain $d\in\mathcal{D}$\;
    Compute domain prototype $\mathbf{c}_d$\;
    Compute prototype contrastive loss: $\mathcal{L}_{\text{proto}} \gets$ Eq.~\eqref{eq:proto}\;
    Compute information gain contrast $\mathcal{L}_{\text{IGC}} \gets$ Eq.~\eqref{eq:igc}\;
    $\mathcal{L}_{\text{Phase II}}=\alpha_{2}\mathcal{L}_{\text{proto}}+\alpha_{3}\mathcal{L}_{\text{IGC}}$ (Eq.~\eqref{eq:pre}).\;
}
\end{algorithm}

\subsection{Computational Complexity of the Pre-training Stage}
\label{app:complexity_pretrain}
The pre-training stage consists of two phases: 
\textbf{(1)} mixed-domain KG completion pre-training (Phase~I), and 
\textbf{(2)} prototype-driven continuous pre-training (Phase~II).
Unless otherwise stated, we focus on the asymptotic complexity with respect to graph size, and omit constant factors related to embedding dimensionality and the internal implementations of neural modules (\eg, $\textsc{Msg}(\cdot)$, $\textsc{Update}(\cdot)$, $\textsc{Agg}(\cdot)$, $\textsc{MLP}(\cdot)$, \etc).

\textbf{Phase I: Mixed-domain KG Completion Pre-training.}
Phase~I performs KG completion over the multi-domain indexed graph $\mathcal{G}=(\mathcal{V},\mathcal{R},\mathcal{E})$ using query-conditioned GFM inference. For each masked triple, a pseudo-query $\mathbf{q}$ is constructed.
\begin{itemize}[leftmargin=*]
    \item \textbf{Embedding initialization.}
    Initializing entity and relation embeddings for all $e\in\mathcal{V}$ and $r\in\mathcal{R}$ requires $\mathcal{O}(|\mathcal{V}|+|\mathcal{R}|)$.
    \item \textbf{Query-conditioned message passing.}
    For each GFM layer, messages are computed over all edges $(u,r,v)\in\mathcal{E}$ and aggregated at each node. The per-layer complexity is $\mathcal{O}(|\mathcal{E}|+|\mathcal{V}|)$, and stacking $L$ layers yields $\mathcal{O}(L(|\mathcal{E}|+|\mathcal{V}|))$.
    \item \textbf{Entity relevance scoring.}
    Computing $\mathbb{P}_{\mathbf{q}}(e)$ for all entities requires a forward pass of an MLP, which requires $\mathcal{O}(|\mathcal{V}|)$.
    \item \textbf{KG completion loss.}
    Both $\mathcal{L}_{\text{BCE}}$ and $\mathcal{L}_{\text{rank}}$ involve summations over positive and negative entity sets whose union equals $\mathcal{V}$. Hence, their computation costs $\mathcal{O}(|\mathcal{V}|)$.
\end{itemize}
Summing up, time complexity for Phase~I is $\mathcal{O}(L|\mathcal{E}| + L|\mathcal{V}| + |\mathcal{R}| + |\mathcal{V}|)$.
Since $|\mathcal{E}|\geqslant|\mathcal{V}|$ and $L\geqslant 1$, it can be simplified to $\mathcal{O}(L|\mathcal{E}| + |\mathcal{V}| + |\mathcal{R}|)$
with $\mathcal{O}(L|\mathcal{E}|)$ being the dominant term.
If masking over all triples in $\mathcal{E}$ once per epoch, the per-epoch complexity becomes:
\begin{equation}
    \mathcal{O}\big(L|\mathcal{E}|^2 + |\mathcal{E}||\mathcal{V}| + |\mathcal{E}||\mathcal{R}|\big),
\end{equation}
which is dominated by $\mathcal{O}(L|\mathcal{E}|^2)$.

\textbf{Phase II: Prototype-driven Continuous Pre-training.}
Phase~II performs semantic alignment using a subset of query-entity pairs $\bigcup_{d\in\mathcal{D}}\mathcal{Q}_d$ sampled from each domain.
\begin{itemize}[leftmargin=*]
    \item \textbf{Prototype construction.}
    Computing domain prototypes needs averaging entity representations, yielding $\mathcal{O}(\sum_{d\in\mathcal{D}}|\mathcal{Q}_d|)$.
    \item \textbf{Prototype contrastive loss.}
    For each pair, the softmax denominator sums over all domains, yielding $\mathcal{O}((\sum_{d\in\mathcal{D}}|\mathcal{Q}_d|)\cdot|\mathcal{D}|)$.
    \item \textbf{Information gain contrast regularization.}
    The KL-divergence is calculated over the entity distribution $\mathbb{P}_{\mathbf{q}}(e)\in\mathbb{R}^{|\mathcal{V}|\times1}$.
    For each query, this requires $\mathcal{O}(|\mathcal{V}|)$ time, yielding $\mathcal{O}((\sum_{d\in\mathcal{D}}|\mathcal{Q}_d|)\cdot|\mathcal{V}|)$.
    \item \textbf{GFM forward inference.}
    Since $\mathbb{P}_{\mathbf{q}}(e)$ depends on the query-conditioned GFM output, each query requires an $L$-layer message-passing forward pass, incurring $\mathcal{O}(L(|\mathcal{E}|+|\mathcal{V}|))$.
    Over all sampled queries, this contributes $\mathcal{O}((\sum_{d\in\mathcal{D}}|\mathcal{Q}_d|)\cdot L|\mathcal{E}|)$,
    which is the dominant cost in Phase~II.
\end{itemize}
Combining all, the per-epoch complexity of Phase~II is:
\begin{equation}
    \mathcal{O}\Big(
\big(\sum_{d\in\mathcal{D}}|\mathcal{Q}_d|\big)
\cdot (L|\mathcal{E}| + |\mathcal{V}| + |\mathcal{D}|)
\Big).
\end{equation}

\textbf{Overall Pre-training Complexity.}
Combining Phase~I and Phase~II, the total per-epoch time complexity of pre-training is:
\begin{equation}
    \mathcal{O}\big(L|\mathcal{E}|^{2} + |\mathcal{E}||\mathcal{V}| + |\mathcal{E}||\mathcal{R}|\big) + \mathcal{O}\Big(\sum_{d\in\mathcal{D}}|\mathcal{Q}_d| (L|\mathcal{E}| + |\mathcal{V}| + |\mathcal{D}|)\Big).
\end{equation}
Since Phase~II is performed on a subset and Phase~I dominates in large-scale knowledge graphs, the overall complexity is asymptotically governed by $\mathcal{O}(L|\mathcal{E}|^2)$, which highlights that the computational cost of pre-training is primarily determined by repeated query-conditioned message passing over the entire indexed graph, which motivates our later use of parameter-efficient fine-tuning and subgraph-level optimization in the downstream stages.

\begin{algorithm}[t]
\caption{Fine-tuning with the Subgraph Selector.}
\label{alg:finetune}
\KwIn{
The multi-domain indexed KG $\mathcal{G}=(\mathcal{V},\mathcal{R},\mathcal{E})$;
frozen entity embeddings $\mathbf{Z}_e^{\star}$ from $f_{\boldsymbol{\theta}}^{\star}(\cdot)$; training queries $\{(\mathbf{q},\mathcal{V}_{\mathbf{q}}^{+})\}$;
hyper-parameters $\alpha_1$, $\beta_1$, $\beta_2$, $\tau$, $\epsilon$.}
\KwOut{Trained subgraph selector producing $\mathcal{G}_{\mathbf{q}}$.}
\For{each mini-batch $\mathcal{B}$ of queries}{
    \For{each query $\mathbf{q}\in\mathcal{B}$}{
        Construct query prompt $\mathbf{p}=\operatorname{MLP}(\mathbf{q})$\;
        \For{each entity $e\in\mathcal{V}$}{
            \textcolor{gray}{\tcp{Learn query-conditioned subgraph}}    
            Query-conditioned entity: $\widetilde{\mathbf{Z}}_e \gets$ Eq.~\eqref{eq:query_e}\;
            Sample selection gate: $S_e(\mathbf{q}) \gets$ Eq.~\eqref{eq:prob}\;
        }
        Induce subgraph $\mathcal{G}_{\mathbf{q}}$ by thresholding $\epsilon$\;
        \textcolor{gray}{\tcp{Label-free IB regularization}} 
        Retrieval loss: $\mathcal{L}_{\text{BCE}} \gets$ Eq.~\eqref{eq:bce}, $\mathcal{L}_{\text{rank}} \gets$ Eq.~\eqref{eq:rank}\;
        Pooled subgraph representation: $\mathbf{G}_{\mathbf{q}}=\sum_{e\in\mathcal{V}_{\mathbf{q}}} S_e(\mathbf{q})\cdot \widetilde{\mathbf{Z}}_e$\;
        Lower bound of $I(\mathbf{q};\mathcal{G}_{\mathbf{q}})$: $\mathcal{L}_{\text{NCE}} \gets$ Eq.~\eqref{eq:bound_1}\;
        Penalties: $\mathcal{L}_{\text{size}} \gets$ Eq.~\eqref{eq:size_con}, $\mathcal{L}_{\text{con}} \gets$ Eq.~\eqref{eq:size_con}\;
        Combine the IB surrogate: $\lceil\mathcal{L}_{\text{LIB}}\rceil\gets$ Eq.~\eqref{eq:lib_t}\;         
        Calculate the final fine-tuning loss: $\mathcal{L}_{\text{ftn}}\gets$ Eq.~\eqref{eq:ftn}\;
    }
    Update selector parameters via gradient descent\;
}
\end{algorithm}

\subsection{Computational Complexity of the Fine-tuning Stage}
\label{app:complexity_finetune}

The fine-tuning stage optimizes the label-free IB objective (Eq.~\eqref{eq:label_free}) via tractable surrogate using Proposition~\ref{prop:lower} (Eq.~\eqref{eq:bound_1}) and Proposition~\ref{prop:upper} (Eq.~\eqref{eq:bound_2}), together with supervised retrieval loss (Eq.~\eqref{eq:bce} and Eq.~\eqref{eq:rank}).
Follow the same protocal in Appendix~\ref{app:complexity_pretrain}, we omit constant factors related to hidden dimensions and the internal implementations of neural modules (\eg, $\operatorname{MLP}(\cdot)$), focusing on asymptotic dependence on graph and batch sizes.

\textbf{Per-query Complexity (mini-batch $\mathcal{B}$).}
For a query $\mathbf{q}$, the fine-tuning stage consists of the following computation modules.
\begin{itemize}[leftmargin=*]
    \item \textbf{Query prompt construction.}
    Prompt $\mathbf{p}=\operatorname{MLP}(\mathbf{q})$ is computed once per query.
    Treating $\operatorname{MLP}(\cdot)$ as a constant-cost operator \textit{w.r.t.} graph size, this step requires $\mathcal{O}(1)$.
    \item \textbf{Query-conditioned entity embedding construction.}
    For every entity $e\in\mathcal{V}$, we form $\widetilde{\mathbf{Z}}_e=\big[\mathbf{Z}_e^{\star}\ \|\ \mathbf{W}_{1}\mathbf{p}\ \|\ \mathbf{W}_{2}\mathbf{q}\big]$, which is a per-entity operation. Thus, the complexity is $\mathcal{O}(|\mathcal{V}|)$.
    \item \textbf{Differentiable subgraph gate sampling.}
    For each $e\in\mathcal{V}$, the Gumbel-Sigmoid gate $S_e(\mathbf{q})$ in Eq.~\eqref{eq:prob} is computed once. This is also linear in the number of entities: $\mathcal{O}(|\mathcal{V}|)$.
    \item \textbf{Subgraph induction by thresholding.}
    Constructing $\mathcal{V}_{\mathbf{q}}$ requires one pass over all entities that yields $\mathcal{O}(|\mathcal{V}|)$.
    \item \textbf{Subgraph generation.} Inducing $\mathcal{R}_{\mathbf{q}}$ and $\mathcal{E}_{\mathbf{q}}$ from $\mathcal{V}_{\mathbf{q}}$ can be done by scanning edges, which is at most $\mathcal{O}(|\mathcal{E}|)$. We state explicitly below as an optional dominant term if full edge scanning is used.
    \item \textbf{Supervised retrieval loss.}
    Both $\mathcal{L}_{\text{BCE}}$ and $\mathcal{L}_{\text{rank}}$ sum over positives $\mathcal{V}_{\mathbf{q}}^{+}$ and negatives $\mathcal{V}_{\mathbf{q}}^{-}$.
    In the worst case, they are computed over all entities and yield $\mathcal{O}(|\mathcal{V}|)$, which matches the dependence of Eq.~\eqref{eq:bce} and Eq.~\eqref{eq:rank} on entity-level distributions.
    \item \textbf{Pooled subgraph representation.}
    The pooled $\mathbf{G}_{\mathbf{q}}$ requires iterating over the selected entity set $\mathcal{V}_{\mathbf{q}}$: $\mathcal{O}(|\mathcal{V}_{\mathbf{q}}|)\leqslant \mathcal{O}(|\mathcal{V}|)$.
    \item \textbf{Noise-contrastive estimation term.}
    Eq.~\eqref{eq:bound_1} computes a denominator summing over negative pairs $(\mathbf{G}_{\mathbf{q}'},\mathbf{q}')$ for $\mathbf{q}'\in\mathcal{B}$. Assuming $\mathbf{G}_{\mathbf{q}'}$ is precomputed for all $\mathbf{q}'\in\mathcal{B}$ (as in Algorithm~\ref{alg:finetune}), computing the softmax-normalized score per query costs $\mathcal{O}(|\mathcal{B}|)$.
    (If $\mathbf{G}_{\mathbf{q}'}$ is not cached, the cost would inherit $\mathcal{O}(|\mathcal{V}_{\mathbf{q}'}|)$ construction, which is accounted for by the pooled representation.)
    \item \textbf{Size penalty.}
    The size term $\mathcal{L}_{\textnormal{size}}$ costs $\mathcal{O}(|\mathcal{V}_{\mathbf{q}}|)\leqslant \mathcal{O}(|\mathcal{V}|)$.
    \item \textbf{Connectivity penalty.}
    The connectivity term $\mathcal{L}_{\textnormal{con}}$ involves multiplication by the normalized Laplacian $\mathbf{L}$.
    Computing $\mathbf{L}\mathbf{S}_e(\mathbf{q})$ can be implemented as the sparse matrix-vector multiplication with sparsity pattern given by $\mathcal{E}$, which costs $\mathcal{O}(|\mathcal{E}|)$, which is typically the dominant term among the regularizers.
\end{itemize}

\textbf{Per-batch Complexity.}
Summing up, a single query $\mathbf{q}$ gives:
\begin{align}
    \underbrace{\mathcal{O}(|\mathcal{V}|)}_{\text{Eq.~\eqref{eq:query_e}}} + \underbrace{\mathcal{O}(|\mathcal{V}|)}_{\text{Eq.~\eqref{eq:prob}}} + \underbrace{\mathcal{O}(|\mathcal{V}|)}_{\text{thresholding}} + \underbrace{\mathcal{O}(|\mathcal{V}|)}_{\mathcal{L}_{\text{BCE}}\text{,}~\mathcal{L}_{\text{rank}}} \notag \\
    + \underbrace{\mathcal{O}(|\mathcal{V}_{\mathbf{q}}|)}_{\mathbf{G}_{\mathbf{q}}\text{,}~\mathcal{L}_{\text{size}}} + \underbrace{\mathcal{O}(|\mathcal{B}|)}_{\mathcal{L}_{\text{NCE}}} + \underbrace{\mathcal{O}(|\mathcal{E}|)}_{\mathcal{L}_{\text{con}}}.
\end{align}
Using $|\mathcal{V}_{\mathbf{q}}|\leqslant |\mathcal{V}|$ and grouping terms, we obtain $\mathcal{O}(|\mathcal{V}| + |\mathcal{E}| + |\mathcal{B}|)$.

For a mini-batch $\mathcal{B}$, the total cost is $\mathcal{O}(|\mathcal{B}|\cdot(|\mathcal{V}| + |\mathcal{E}| + |\mathcal{B}|))$.
Expanding the product yields $\mathcal{O}(|\mathcal{B}||\mathcal{V}| + |\mathcal{B}||\mathcal{E}| + |\mathcal{B}|^2)$.
In typical large KGs, $|\mathcal{E}|$ dominates $|\mathcal{V}|$ and $|\mathcal{B}|\ll |\mathcal{E}|$.
Thus, the fine-tuning is asymptotically dominated by the connectivity regularizer, giving $\mathcal{O}(|\mathcal{B}||\mathcal{E}|)$, where the per-query dominant cost is $\mathcal{O}(|\mathcal{E}|)$.

Notably, if $\mathcal{E}_{\mathbf{q}}$ and $\mathcal{R}_{\mathbf{q}}$ are explicitly induced by scanning the full edge set each iteration, that additional cost is at most $\mathcal{O}(|\mathcal{E}|)$ per query and is already subsumed by the $\mathcal{O}(|\mathcal{E}|)$ term above.
In contrast, if one uses edge-indexing restricted to the selected node set $\mathcal{V}_{\mathbf{q}}$, the practical runtime can be reduced, but the worst-case asymptotic bound remains unchanged.

\subsection{Computational Complexity of the Inference Stage (QA)}
\label{app:complexity_inference}

Given a query $\mathbf{q}$ on the target knowledge graph $\mathcal{G}^T=(\mathcal{V}^T,\mathcal{R}^T,\mathcal{E}^T)$, the inference stage consists of
\textbf{(1)} subgraph retrieval via the trained selector (Eq.~\eqref{eq:prob}), and 
\textbf{(2)} relational paths induced in-context prompting, including entity-to-document mapping and truncated DFS path extraction.
We omit constant factors related to hidden dimensions and neural module internals.

\textbf{Step-wise Complexity Analysis.}
For a single query $\mathbf{q}$, Algorithm~\ref{alg:qa} includes the following modules.
\begin{itemize}[leftmargin=*]
    \item \textbf{Selection probability.}
    The selector computes $S_e(\mathbf{q})$ (Eq.~\eqref{eq:prob}), which requires one pass over all entities, hence $O(|\mathcal{V}^T|)$.
    \item \textbf{Subgraph induction by thresholding.}
    Constructing $\mathcal{V}_{\mathbf{q}}$ costs $\mathcal{O}(|\mathcal{V}^T|)$.
    If $\mathcal{E}_{\mathbf{q}}$ is explicitly induced from $\mathcal{V}_{\mathbf{q}}$ by scanning all edges in $\mathcal{E}^T$, the worst-case additional cost is $\mathcal{O}(|\mathcal{E}^T|)$.
    \item \textbf{Entity weighting.}
    For each entity $e\in\mathcal{V}_{\mathbf{q}}$, inference computes $w_e$ (Eq.~\eqref{eq:weight}), which is linear in the size of the selected entity set: $\mathcal{O}(|\mathcal{V}_{\mathbf{q}}|)$.
    (The term $\sum_c \mathbf{M}[e,c]$ can be precomputed per entity from the fixed mapping matrix $\mathbf{M}$.)
    \item \textbf{Document relevance computation.}
    The document scores are given by $R_c=\mathbf{M}^{\top}w_e$.
    Treating $\mathbf{M}\in\{0,1\}^{|\mathcal{E}_{\mathbf{q}}|\times|\mathcal{C}_{\mathbf{q}}|}$ as a sparse matrix, computing $\mathbf{M}^{\top}w_e$ is a sparse matrix-vector multiplication.
    Its time complexity is proportional to the number of non-zeros ($\mathrm{nnz}$) in $\mathbf{M}$ restricted to the selected entities as $\mathcal{O}(\mathrm{nnz}(\mathbf{M}_{\mathcal{V}_{\mathbf{q}}}))$, where $\mathbf{M}_{\mathcal{V}_{\mathbf{q}}}$ denotes the submatrix induced by $\mathcal{V}_{\mathbf{q}}$.
    In the worst case, this is upper-bounded by scanning all entity-to-chunk links in the query-restricted support, \ie, $\mathcal{O}(|\mathcal{E}_{\mathbf{q}}|  |\mathcal{C}_{\mathbf{q}}|)$,
    while in practice it is typically much smaller due to sparsity.
    \item \textbf{Top-$K$ document selection.}
    Selecting the top-$K$ documents from $\{R_c\}$ can be done in $\mathcal{O}(|\mathcal{C}_{\mathbf{q}}|)$ (\eg, linear-time selection),
    or $\mathcal{O}(|\mathcal{C}_{\mathbf{q}}|\log|\mathcal{C}_{\mathbf{q}}|)$ if full sorting is used.
    \item \textbf{Relational path extraction.}
    In the worst case, enumerating all paths ($\leqslant l$) may grow exponentially due to branching in DFS.
    Denoting the maximum degree in the induced subgraph $\mathcal{G}_{\mathbf{q}}$ by $\Delta$, the worst-case number of DFS expansions is $\mathcal{O}(\Delta^{l})$.
    \item \textbf{Path scoring.}
    For each extracted path $\pi$, inference computes $S_r(\pi)$ (Eq.~\eqref{eq:sr}), which costs $\mathcal{O}(k)\leqslant \mathcal{O}(l)$ per path, hence over all extracted paths requires $\mathcal{O}(|\Pi_{\mathbf{q}}|\cdot l)$.
    \item \textbf{Path ranking and subset selecting.}
    Ranking paths by $\{S_r(\pi)\}$ costs $\mathcal{O}(|\Pi_{\mathbf{q}}|\log|\Pi_{\mathbf{q}}|)$, or $\mathcal{O}(|\Pi_{\mathbf{q}}|)$ with top-$K$ selection.
    \item \textbf{LLM answer generation.}
    The answer generation $\hat{y}=\operatorname{LLM}(\cdot)$ depends on the LLM inference cost and prompt length. We omit it for brevity.
\end{itemize}

\begin{algorithm}[t]
\caption{Inference with In-context Prompting.}
\label{alg:qa}
\KwIn{
Query $\mathbf{q}$; target KG $\mathcal{G}^T$;
trained selector producing $S_e(\mathbf{q})$;
entity-to-document mapping matrix $\mathbf{M}$;
DFS search hop limit $l$; top-$K$ documents $\mathcal{C}_{\mathbf{q}}^{(K)}$.
}
\KwOut{Generated answer $\hat{y}$.}
\textcolor{gray}{\tcp{Subgraph retrieval}}
Selection probabilities $S_e(\mathbf{q}) \gets$ Eq.~\eqref{eq:prob} for all $e\in\mathcal{V}^T$\;
Induce subgraph $\mathcal{G}_{\mathbf{q}}=(\mathcal{V}_{\mathbf{q}},\mathcal{R}_{\mathbf{q}},\mathcal{E}_{\mathbf{q}})$\;
\textcolor{gray}{\tcp{Entity-to-document scoring}}
\For{each entity $e\in\mathcal{V}_{\mathbf{q}}$}{
    Compute importance weight $w_e \gets$ Eq.~\eqref{eq:weight}\;
}
Compute relevance $R_c=\mathbf{M}^{\top}w_e$ and select top-$K$ documents\;
\textcolor{gray}{\tcp{Reasoning path extraction.}}
Extract relational paths $\Pi_{\mathbf{q}}$ from $\mathcal{G}_{\mathbf{q}}$ via $l$-hop truncated DFS\;
\For{each path $\pi$}{
    Compute path score $S_r(\pi) \gets$ Eq.~\eqref{eq:sr}.
}
Select top-ranked paths $\Pi_{\mathbf{q}}^{\star}$\;
\textcolor{gray}{\tcp{In-context prompting and generation}}
Construct prompt from $\Pi_{\mathbf{q}}^{\star}$ and retrieved documents\;
Generate answer $\hat{y}=\operatorname{LLM}(\mathbf{q},\Pi_{\mathbf{q}}^{\star},\mathcal{C}_{\mathbf{q}}^{(K)})$\;
\end{algorithm}

\textbf{Overall Inference Complexity.}
Excluding the LLM generation cost, the per-query inference complexity is:
\begin{align}
    \mathcal{O}(|\mathcal{V}^T|) + \!\!\!\!\!\!\!\!\!\!\!\!\underbrace{\mathcal{O}(|\mathcal{E}^T|)}_{\text{(optional, if inducing $\mathcal{E}_{\mathbf{q}}$ by edge scan)}}\!\!\!\!\!\!\!\!\!\!\!\!\!\!  + \mathcal{O}(|\mathcal{V}_{\mathbf{q}}|) + \mathcal{O}\big(\mathrm{nnz}(\mathbf{M}_{\mathcal{V}_{\mathbf{q}}})\big) \notag \\
    + \mathcal{O}(|\mathcal{C}_{\mathbf{q}}|) + \mathcal{O}(\Delta^{l}) + \mathcal{O}(|\Pi_{\mathbf{q}}|\cdot l) + \mathcal{O}(|\Pi_{\mathbf{q}}|\log|\Pi_{\mathbf{q}}|).
\end{align}
As $|\mathcal{V}_{\mathbf{q}}|\leqslant|\mathcal{V}^T|$, we obtain the simplified bound:
\begin{equation}
    \mathcal{O}\Big(|\mathcal{V}^T| + |\mathcal{E}^T| +\mathrm{nnz}(\mathbf{M}_{\mathcal{V}_{\mathbf{q}}}) + \Delta^{l} + |\Pi_{\mathbf{q}}|\log|\Pi_{\mathbf{q}}|\Big).
\end{equation}
In typical settings, the dominant cost is the truncated DFS path extraction, which can be exponential in the hop budget $l$ in the worst case.

\section{Proofs}
\label{app:proof}
\subsection{Proof of Proposition~\ref{prop:expressivity}}
\label{app:proof_expressivity}
We first restate Proposition~\ref{prop:expressivity} for reference.
\begin{mathbox}
\textsc{Proposition~\ref{prop:expressivity}} (\textbf{\textsc{Multi-domain logical expressivity of query-conditioned GFM}})
\textit{Let $\mathcal{G}$ be a knowledge graph indexed from multiple domains. $\{\operatorname{Dom}_d(e)\}$ denotes unary predicates indicate domain attributes of entity $e$. For a query $q$, the query-conditioned GFM can learn a rule $\textnormal{\textsf{R}}(\textnormal{$\mathbf{q}$},e)$ if and only if $\textnormal{\textsf{R}}(\textnormal{$\mathbf{q}$},e)$ is a formula in graded modal logic over $\mathcal{G}$ with constant \textnormal{$\mathbf{q}$}, \ie, $\exists^{\geqslant N} e^\prime(\textnormal{\textsf{R}}(e^\prime,e)\wedge\operatorname{Dom}_d(e^\prime)\wedge\psi(e^\prime))$, where $\psi(\cdot)$ a subformula.}
\end{mathbox}

\begin{proof}
    Let $\mathcal{G} = (\mathcal{V}, \mathcal{R}, \mathcal{E})$ be the KG indexed from the multi-domain documents. $\mathcal{D}: \mathcal{V} \to \{0, 1\}^{|\mathbb{D}|}$ maps each entity node to its domain attributes. We denote the unary predicate $\operatorname{Dom}_d(v)$ as true if the $d$-th entry of $\mathcal{D}(v)$ is 1. We assume the query-conditioned GFM $f_{\boldsymbol{\theta}}(\cdot)$ follows the message passing scheme defined in Eq.~\eqref{eq:message} and Eq.~\eqref{eq:update}.
    To prove Proposition~\ref{prop:expressivity}, we establish the equivalence between the logical formulas learnable by this GFM and the Graded Modal Logic (CML)~\cite{de2000note, otto2019graded, qiu2024understanding} extended with domain predicates.

    \textbf{(1) Sufficiency (Constructive Proof).}
    We prove that for any formula $\varphi$ in the multi-domain CML logic, there exists a parameterization of the GFM that learns $\varphi$. We proceed by structural induction on the formula construction. The bad case is the atomic formulas (\eg, $\operatorname{Dom}_d(e)$) are directly contained in the initial entity embeddings $\mathbf{Z}_e^{(0)}$. A simple linear projection (1-layer GFM) can encode these embeddings, so that the base predicates are learnable.
    For the inductive step, assume a sub-formula $\psi(e^\prime)$ is learnable by a GFM with $l-1$ layers, such that the dimension $j$ of $\mathbf{Z}_{e^\prime}^{(l-1)}$ acts as an indication $\mathbf{Z}_{e^\prime}^{(l-1)}[j] = \mathbb{I}(\mathcal{G}, e^\prime \models \psi)$, where $\mathbb{I}(\cdot)$ is the indicator. 
    
    We next show the formula $\Phi(e) = \exists^{\geqslant N} e^\prime(\textnormal{\textsf{R}}(e^\prime,e)\wedge\operatorname{Dom}_d(e^\prime)\wedge\psi(e^\prime))$ is learnable by the $l$-th layer.
    The semantics of $\Phi(e)$ require counting neighbors $e^\prime$ connected by relation $\textnormal{\textsf{R}}$ that satisfy both the domain constraint $\operatorname{Dom}_d(\cdot)$ and the property $\psi$. We construct the $l$-th layer aggregation following Lemma C.3~\cite{qiu2024understanding}:

    \textbf{Message Construction.}
    Let the message function $\textsc{Msg}_{\textnormal{\textsf{R}}}^{(l)}(\cdot)$ select and combine the relevant features. Since $\operatorname{Dom}_d(e^\prime)$ is a unary predicate available in the input features (and preserved or propagated to $(l-1)$-th layer), let $\mathbf{Z}_{e^\prime}^{(l-1)}[k]$ correspond to $\operatorname{Dom}_d(e^\prime)$. We define the message to output a scalar value 1 only if both conditions are met, essentially implementing a logical ``AND'' ($\wedge$) via soft continuous approximation (\eg, using summing components or MLP approximation in the limit):
    \begin{equation}
        \mathbf{m}_{e', e} \doteq \mathbb{I}\big(\mathbf{Z}_{e^\prime}^{(l-1)}[k]=1 \wedge \mathbf{Z}_{e^\prime}^{(l-1)}[j]=1\big).
    \end{equation}
    Alternatively, in a standard linear GNN, $\mathbf{W}_{\textsc{Msg}}$ assigns a unit weight to the $\psi$ dimension conditioned on $e^\prime$ being masked or filtered by domain $d$. A more robust construction uses the universal approximation theorem of MLPs (contained in $\textsc{Msg}(\cdot)$) to map $\mathbf{Z}_{e^\prime}^{(l-1)}$ to 1 if the domain and $\psi$ bits are active, and 0 otherwise.

    \textbf{Aggregation.}
    The GFM sums the messages from neighbors:
    \begin{equation}
        S_e = \sum_{e^\prime \in \mathcal{N}_{\textnormal{\textsf{R}}}(e)} \mathbf{m}_{e^\prime, e} = \left|\{e^\prime \mid \textnormal{\textsf{R}}(e^\prime, e) \wedge  \operatorname{Dom}_d(e^\prime) \wedge  \psi(e')\}\right|.
    \end{equation}
    This sum $S_e$ exactly counts the number of valid logical witnesses in the specific domain.

    \textbf{Thresholding (Quantifier Realization).}
    To realize the quantifier $\exists^{\geqslant N}$, we set the bias term $\mathbf{b}^{(l)}$ and use a step-like activation function $\sigma$ (\eg, a steep Sigmoid). Assume $\mathbf{b} = -N + \delta$, then:
    \begin{equation}
        \mathbf{Z}_e^{(l)} = \sigma(S_e - N + \delta),\quad \delta>0.
    \end{equation}
    If $S_e \geqslant N$, the input to $\sigma$ is positive, outputting 1 (True). If $S_e < N$, the output is 0 (False).

    Thus, the GFM can structurally implement the formula $\Phi(e)$.

    \textbf{(2) Necessity (Expressivity Limit).}
    Deriving from Theorem C.2 and Theorem C.7 in~\citet{qiu2024understanding}, we prove any rule learnable by the query-conditioned GFM corresponds to a CML formula over $\mathcal{G}$. We invoke the established correspondence between MPNNs and the 1-dimensional Weisfeiler-Leman (1-WL) test~\cite{huang2023theory}. It is known that the discriminative power of standard MPNNs is bounded by 1-WL. Furthermore, the stable coloring of 1-WL on a graph is equivalent to the logical equivalence classes defined by CML with counting quantifiers~\cite{barcelo2020logical, morris2019weisfeiler}. In our specific case:

    \textbf{Graph Augmentation.}
    The multi-domain graph $\mathcal{G}$ can be viewed as a standard colored graph where the initial node colors $C(v)$ are tuples $(\mathbb{I}(v=\textnormal{$\mathbf{q}$}), \operatorname{Dom}_{d_1}(v), \cdots, \operatorname{Dom}_{d_k}(v), \cdots)$.

    \textbf{Bisimulation Invariance.}
    GFM is permutation invariant. If two entities $u$ and $v$ are indistinguishable by CML formulas (\ie, bisimilar on the augmented graph), they will receive identical message aggregates at every step and thus identical final embeddings.

    In conclusion, if the GFM learns a classifier that distinguishes entity $e$ based on $\textnormal{\textsf{R}}(\textnormal{$\mathbf{q}$}, e)$, there must exist a distinguishing structure in the computation tree. Since the computation tree is isomorphic to the unrolled logic formula in CML, the decision boundary learned by the GFM must be describable by a CML formula.
    Specifically, the ``multi-domain'' property simply expands the set of atomic unary predicates in the logic. If a rule relies on checking $\operatorname{Dom}_d$, this is valid in CML as checking an atomic proposition. If the rule cannot be expressed in CML (\eg, it requires counting cycles or checking connectivity beyond tree-unrolling), the 1-WL test would fail to distinguish the relevant subgraphs, and the GFM would consequently fail to learn the rule.
    Therefore, a rule is learnable only if it is expressible in CML with domain predicates.
\end{proof}

\subsection{Proof of Proposition~\ref{prop:error}}
\label{app:proof_error}
We first restate Proposition~\ref{prop:error} for reference.
\begin{mathbox}
\textsc{Proposition~\ref{prop:error}} (\textbf{\textsc{Error Bound of $\mathcal{L}_{\textnormal{LIB}}$ Approximation}})
\textit{Let $\mathcal{G}_\mathbf{q}$ be the subgraph, $y$ the ground-truth answer, and $\mathbf{q}$ the query. Assume the Markov property $\langle y \leftrightarrow \mathbf{q} \leftrightarrow \mathcal{G}_\mathbf{q} \rangle$ exists, the approximation error of $\mathcal{L}_{\textnormal{IB}}$ by $\mathcal{L}_{\textnormal{LIB}}$ is upper-bounded by the conditional entropy of $\mathbf{q}$ given $y$:
\begin{equation}
    \left | I(y;\mathcal{G}_\mathbf{q}) - I(\mathbf{q};\mathcal{G}_\mathbf{q}) \right| \leqslant H(\mathbf{q}\mid y),
\notag
\end{equation}
where $H(\mathbf{q}\mid y)$ is a data-dependent constant.}
\end{mathbox}

\begin{proof}
    We formally analyze the approximation error introduced by the label-free objective using the interaction information and the chain rule of mutual information. Let the random variables $y$, $\mathbf{q}$, and $\mathcal{G}_\mathbf{q}$ represent the ground-truth answer, the query, and the retrieved subgraph, respectively.
    Consider the joint mutual information $I(y, \mathbf{q}; \mathcal{G}_\mathbf{q})$. By symmetry, we can decompose this term in two equivalent ways based on the chain rule:
    \begin{align}
        &I(y,\mathbf{q};\mathcal{G}_\mathbf{q})=I(y;\mathcal{G}_\mathbf{q}) + I(\mathbf{q}\mid y;\mathcal{G}_\mathbf{q}),
        \label{eq:first}\\
        &I(y,\mathbf{q};\mathcal{G}_\mathbf{q})=I(\mathbf{q};\mathcal{G}_\mathbf{q}) + I(y\mid \mathbf{q};\mathcal{G}_\mathbf{q}).
        \label{eq:second}
    \end{align}
    Subtracting Eq.~\eqref{eq:second} from Eq.~\eqref{eq:first} allows us to isolate the discrepancy between the supervised objective $I(y;\mathcal{G}_\mathbf{q})$ and our proposed label-free proxy $I(\mathbf{q};\mathcal{G}_\mathbf{q})$:
    \begin{equation}
        I(y;\mathcal{G}_\mathbf{q})-I(\mathbf{q};\mathcal{G}_\mathbf{q})=I(y\mid \mathbf{q};\mathcal{G}_\mathbf{q})-I(\mathbf{q}\mid y;\mathcal{G}_\mathbf{q}).
    \label{eq:temp}
    \end{equation}
     To derive an upper bound, we recall the property that mutual information is bounded by the entropy of its variables. Specifically, for conditional mutual information, we have:
    \begin{equation}
        I(X; Y\mid Z) = H(X\mid Z) - H(X\mid Y,Z).
    \end{equation}
    Since entropy is non-negative and conditioning reduces entropy, or leaves it unchanged, we know that:
    \begin{equation}
        I(X; Y\mid Z) \leqslant H(Y\mid Z),
    \end{equation}
    where $H(\cdot)$ denotes the Shannon entropy. Applying this inequality to both terms on the RHS of Eq.~\eqref{eq:temp}:
    \begin{align}
        &I(y\mid \mathbf{q};\mathcal{G}_\mathbf{q}) \leqslant H(y\mid \mathbf{q}),
        \label{eq:temp1}\\
        &I(\mathbf{q}\mid y;\mathcal{G}_\mathbf{q}) \leqslant H(\mathbf{q} \mid y).
        \label{eq:temp2}
    \end{align}
    Substituting these bounds back into the absolute difference error:
    \begin{align}
        \left | I(y;\mathcal{G}_\mathbf{q}) - I(\mathbf{q};\mathcal{G}_\mathbf{q}) \right| & \leqslant \max \big( H(y\mid \mathbf{q}) + H(\mathbf{q}\mid y) \big) \notag \\
        &\leqslant H(y\mid \mathbf{q}) + H(\mathbf{q}\mid y).
    \end{align}
    In the context of retrieval where the query intends to specify the answer (\ie, $\mathbf{q}$ is not independent of $y$), if we assume the retrieved subgraph $\mathcal{G}_\mathbf{q}$ only contains information about $y$ via the semantics of $\mathbf{q}$ (Markov property $\langle y \!\leftrightarrow\! \mathbf{q} \!\leftrightarrow\! \mathcal{G}_\mathbf{q}\rangle$), the bound tightens to:
    \begin{equation}
        \left | I(y;\mathcal{G}_\mathbf{q}) - I(\mathbf{q};\mathcal{G}_\mathbf{q}) \right| \leqslant H(\mathbf{q}\mid y).
    \end{equation}
    While $H(\mathbf{q}\mid y)$ represents the intrinsic diversity of queries targeting the same answer, it is crucial to note that this term is determined solely by the data distribution $\mathbb{P}(\mathbf{q}, y)$ and remains constant with respect to the learnable parameters in the subgraph selector. Therefore, minimizing the upper bound is equivalent to minimizing the proxy gap up to a constant additive factor. The gradient directions for optimizing sufficiency are essentially unaffected by this term.

    In conclusion, Proposition~\ref{prop:error} suggests maximizing the mutual information with query $I(\mathcal{G}_\mathbf{q}; \mathbf{q})$ is an effective proxy for $I(\mathcal{G}_\mathbf{q}; y)$, provided that the query ambiguity $H(y\mid\mathcal{G}_\mathbf{q})$ is low. Since the query is designed to point towards the answer (low ambiguity), minimizing the label-free loss effectively minimizes the supervised loss within a controllable error margin.
\end{proof}

\subsection{Proof of Proposition~\ref{prop:lower}}
\label{app:proof_lower}
We first restate Proposition~\ref{prop:lower} for reference.
\begin{mathbox}
\textsc{Proposition~\ref{prop:lower}} (\textsc{\textbf{Lower Bound of $I(\mathbf{q};\mathcal{G}_\mathbf{q})$}})
\textit{Denote the average-pooled subgraph representation as $\mathbf{G}_{\mathbf{q}}=\sum_{e\in\mathcal{V}_\mathbf{q}} S_e(\mathbf{q})\cdot \widetilde{\mathbf{Z}}_e$. Deriving from~\textnormal{\citet{alemi2017deep}}, the lower bound is:
\begin{equation}
    I(\mathbf{q};\mathcal{G}_\mathbf{q}) \geqslant -\mathcal{L}_\textnormal{NCE}=\log\frac{\exp\big( \operatorname{sim}(\mathbf{G}_{\mathbf{q}},\mathbf{q})/\tau \big)}{\sum_{\mathbf{q}^\prime \in\mathcal{B}}\exp\big( \operatorname{sim}(\mathbf{G}_{\mathbf{q}^\prime},\mathbf{q}^\prime)/\tau \big)},
\notag
\end{equation}
where $(\mathbf{G}_{\mathbf{q}^\prime},\mathbf{q}^\prime)$ is the negative sample pair.}
\end{mathbox}

\begin{proof}
    The primary goal is to maximize the mutual information $I(\mathbf{q}; \mathcal{G}_\mathbf{q})$ to ensure the retrieved subgraph $\mathcal{G}_\mathbf{q}$ retains sufficient semantic information relevant to the query $\mathbf{q}$. Since the exact computation involves high-dimensional integration over the joint distribution, we derive a tractable lower bound via variational estimation.

    \textbf{Variational Lower Bound Derivation.}
    By definition, mutual information is the reduction in uncertainty of $\mathbf{q}$ given $\mathcal{G}_\mathbf{q}$: 
    \begin{equation}
        I(\mathbf{q}; \mathcal{G}_\mathbf{q}) = H(\mathbf{q}) - H(\mathbf{q} \mid \mathcal{G}_\mathbf{q}).
    \end{equation}
    Since the entropy of the query distribution $H(\mathbf{q})$ is constant with respect to the subgraph selector, maximizing mutual information is equivalent to minimizing conditional entropy $H(\mathbf{q} \mid \mathcal{G}_\mathbf{q})$.
    However, the true posterior distribution $\mathbb{P}(\mathbf{q} \mid \mathcal{G}_\mathbf{q})$ is unknown. We introduce a variational distribution $\mathbb{Q}(\mathbf{q} \mid \mathcal{G}_\mathbf{q})$ to approximate this posterior. Leveraging the non-negativity of the Kullback-Leibler divergence, $\operatorname{KL}(\mathbb{Q}(\mathbf{p}\mid\mathcal{G}_\mathbf{q})~||~\mathbb{Q}(\mathbf{q}\mid\mathcal{G}_\mathbf{q})) \geqslant 0$, we establish the following bound:
    \begin{align}
        I(\mathbf{q}; \mathcal{G}_\mathbf{q}) 
        &= \mathbb{E}_{\mathbb{P}(\mathbf{q}, \mathcal{G}_\mathbf{q})} \left[ \log \frac{\mathbb{P}(\mathbf{q} \mid \mathcal{G}_\mathbf{q})}{\mathbb{P}(\mathbf{q})} \right] \notag \\
        &= \mathbb{E}_{\mathbb{P}(\mathbf{q}, \mathcal{G}_\mathbf{q})} \left[ \log \mathbb{Q}(\mathbf{q} \mid \mathcal{G}_\mathbf{q}) \right] + \operatorname{KL}(\mathbb{P}~||~\mathbb{Q}) \notag \\
        &\geqslant \mathbb{E}_{\mathbb{P}(\mathbf{q}, \mathcal{G}_\mathbf{q})} \left[ \log \mathbb{Q}(\mathbf{q} \mid \mathcal{G}_\mathbf{q}) \right],
        \label{eq:var_bound}
    \end{align}
    which implies that maximizing the log-likelihood of the variational distribution is a valid proxy for mutual information.
    
    \textbf{Modeling Semantic Alignment.}
    To bridge the gap between the query and the graph structure, we model the variational distribution $\mathbb{Q}$ using an energy-based formulation. The probability is proportional to the semantic similarity between the pooled subgraph representation $\mathbf{G}_\mathbf{q}$ and the query embedding $\mathbf{q}$:
    \begin{equation}
        \mathbb{Q}(\mathbf{q} \mid \mathcal{G}_\mathbf{q}) = \frac{\exp(\operatorname{sim}(\mathbf{G}_\mathbf{q}, \mathbf{q}) / \tau)}{\Pi(\mathcal{G}_\mathbf{q})},
    \end{equation}
    where $\tau$ is a temperature hyper-parameter and $\Pi(\cdot)$ is the partition function. This formulation effectively encourages aligning the embedding spaces of relevant subgraphs and queries.
    
    \textbf{Tractable NCE Approximation.}
    Directly computing the partition $\Pi(\mathcal{G}_\mathbf{q})$ requires summing over the entire query space, which is computationally infeasible. Following the InfoNCE principle~\cite{oord2018representation, he2020momentum}, we approximate this term using a batch of negative samples $\mathcal{B}$. The learning problem is thus cast as a discrimination task: identifying the true query $\mathbf{q}$ from a set of distractors (negative samples):
    \begin{equation}
        \mathbb{E} \left[\log \mathbb{Q}(\mathbf{q} \mid \mathcal{G}_\mathbf{q}) 
        \right]\doteq \mathbb{E}_{\mathcal{B}} \left[ \log \frac{\exp(\operatorname{sim}(\mathbf{G}_\mathbf{q}, \mathbf{q}) / \tau)}{\sum_{\mathbf{q}' \in \mathcal{B}} \exp(\operatorname{sim}(\mathbf{G}_{\mathbf{q}'}, \mathbf{q}') / \tau)} \right] 
        = -\mathcal{L}_{\text{NCE}}.
    \end{equation}

    We conclude the proof.
\end{proof}

\subsection{Proof of Proposition~\ref{prop:upper}}
\label{app:proof_upper}
We first restate Proposition~\ref{prop:upper} for reference.
\begin{mathbox}
    \textsc{Proposition~\ref{prop:upper}} (\textbf{\textsc{Upper Bound of $I(\mathcal{G};\mathcal{G}_\mathbf{q})$}})
    \textit{Denote the subgraph size and connectivity penalizers as:
    \begin{equation}
        \mathcal{L}_\textnormal{size}=\sum\nolimits_{e\in\mathcal{V}}S_e(\mathbf{q}), \;\;\mathcal{L}_\textnormal{con}=\operatorname{Tr}\big( \mathbf{S}_e(\mathbf{q})^\top\mathbf{L}\mathbf{S}_e(\mathbf{q}) \big),
    \notag
    \end{equation}
    where $S_e(\mathbf{q})\in[\textnormal{0},\textnormal{1}]$ is entity selection probability, $\mathbf{S}_e(\mathbf{q}) \in \mathbb{R}^{|\mathcal{V}|}$ is the soft selection vector, $\mathbf{L}$ is the normalized graph Laplacian, and $\operatorname{Tr}(\cdot)$ calculates the trace of a matrix.
    The upper bound is:
    \begin{equation}
        I(\mathcal{G};\mathcal{G}_\mathbf{q}) \leqslant \beta_{\textnormal{1}} \mathcal{L}_\textnormal{size} + \beta_{\textnormal{2}}\mathcal{L}_\textnormal{con},
    \notag
    \end{equation}
    where $\beta_{\textnormal{1}}$ and $\beta_{\textnormal{2}}$ are hyper-parameters.}
\end{mathbox}

\begin{proof}
    This compression term prevents the subgraph selector from retrieving the entire graph and encourages the identification of a ``minimal'' subgraph. We prove that minimizing the two proposed structural penalizers is equivalent to minimizing an upper bound of this mutual information.

    \textbf{Variational Upper Bound Derivation.}
    Let $\mathbb{P}(\mathcal{G}_\mathbf{q} \mid \mathcal{G})$ denote the encoding distribution (\ie, the subgraph selection process). We construct an upper bound by introducing an arbitrary prior distribution $r(\mathcal{G}_\mathbf{q})$ over the space of possible subgraphs:
    \begin{align}
        I(\mathcal{G}; \mathcal{G}_\mathbf{q}) 
        &= \mathbb{E}_{\mathbb{P}(\mathcal{G})} \left[ \operatorname{KL}(\mathbb{P}(\mathcal{G}_\mathbf{q} \mid \mathcal{G})~||~\mathbb{P}(\mathcal{G}_\mathbf{q})) \right] \notag \\
        &\leqslant \mathbb{E}_{\mathbb{P}(\mathcal{G})} \left[ \operatorname{KL}(\mathbb{P}(\mathcal{G}_\mathbf{q} \mid \mathcal{G})~||~r(\mathcal{G}_\mathbf{q})) \right].
    \end{align}
    Minimizing KL divergence penalizes the learned subgraph distribution to conform to properties specified by the prior $r(\mathcal{G}_\mathbf{q})$.
    
    \textbf{Gibbs Prior for Graph Structure.}
    To inject the inductive biases that valid reasoning paths should be dense and connected, we explicitly define the prior $r(\mathcal{G}_\mathbf{q})$ as a Gibbs distribution (or Boltzmann distribution). The energy function $E(\mathcal{G}_\mathbf{q})$ encodes the cost of a subgraph configuration:
    \begin{align}
        & r(\mathcal{G}_\mathbf{q}) = \frac{1}{Z} \exp \left( -E(\mathcal{G}_\mathbf{q}) \right), \notag \\
        \text{where}\quad & E(\mathcal{G}_\mathbf{q}) = \beta_1 E_\text{size} + \beta_2 E_\text{con}.
    \end{align}
    The constant $Z$ denotes Zustandssumme (German for ``sum over states''). Its mathematical definition is the normalized sum of the exponential of the energy over all possible states:
    \begin{equation}
        Z = \sum_{\mathcal{G}' \in \Omega} \exp(-E(\mathcal{G}')).
    \end{equation}
    The specific forms of the energy terms are derived from two priors:

    \textbf{(1) Sparsity via Laplace Prior.}
    Assuming the edge selection follows a Laplace distribution centered at zero, its negative log-likelihood corresponds to the $L_1$-norm of the selection probabilities, yielding $E_\text{size} = \sum S_e(\mathbf{q})$, denoting as $\mathcal{L}_{\text{size}}$.
    
    \textbf{(2) Connectivity via GMRF Prior.}
    Assuming the selection variables follow a Gaussian Markov Random Field (GMRF) over the structure, the energy is defined by the graph Laplacian quadratic form, yielding $E_\text{con} = \operatorname{Tr}(\mathbf{S}_e(\mathbf{q})^\top\mathbf{L}\mathbf{S}_e(\mathbf{q}))$, denoting as $\mathcal{L}_{\text{con}}$.
    
    \textbf{Equivalence to Loss Minimization.}
    We substitute the Gibbs prior back into the KL divergence term. Note that the entropy of the encoder $H(\mathcal{G}_\mathbf{q} \mid \mathcal{G})$ is determined by the subgraph selector's confidence. Assuming a deterministic or low-variance selection process, this entropy term can be treated as a constant. The optimization objective simplifies as follows:
    \begin{align}
        \min \operatorname{KL}(\mathbb{P}(\cdot\mid\mathcal{G})~||~r(\cdot)) 
        &\Leftrightarrow \min \mathbb{E}_{\mathbb{P}(\mathcal{G}_\mathbf{q} \mid \mathcal{G})} \left[ -\log r(\mathcal{G}_\mathbf{q}) \right] \notag \\
        &\Leftrightarrow \min \mathbb{E}_{\mathbb{P}(\mathcal{G}_\mathbf{q} \mid \mathcal{G})} \left[ -\log \left( \frac{1}{Z} \exp(-E(\mathcal{G}_\mathbf{q})) \right) \right] \notag \\
        &\Leftrightarrow \min \mathbb{E}_{\mathbb{P}(\mathcal{G}_\mathbf{q} \mid \mathcal{G})} \left[ \beta_1 \mathcal{L}_{\text{size}} + \beta_2 \mathcal{L}_{\text{con}} + \log Z \right] \notag \\
        &\Leftrightarrow \min \left( \beta_1 \mathcal{L}_{\text{size}} + \beta_2 \mathcal{L}_{\text{con}} \right).
    \end{align}
    We conclude the proof.
\end{proof}
\refstepcounter{section}
\begin{table*}[!t]
  \centering
  \caption{Statistics of the general multi-hop QA datasets.}
  \vspace{-0.3cm}
  \resizebox{0.75\textwidth}{!}{
    \begin{tabular}{lrrrrrr}
    \toprule
    \textbf{Dataset} & $\#$ \textbf{Q-doc Pair} & $\#$ \textbf{Document} & $\#$ \textbf{KG} & $\#$ \textbf{Entity} & $\#$ \textbf{Relation} & $\#$ \textbf{Triple} \\
    \midrule
    \texttt{HotpotQA}~\cite{yang2018hotpotqa} & 20,000 & 204,822 & 20 & 1,930,362 & 967,218 & 6,393,342\\
    \texttt{MuSiQue}~\cite{trivedi2022musique} & 20,000 & 410,380 & 20 & 1,544,966 & 900,338 & 4,848,715 \\
    \texttt{2WikiMultiHopQA}~\cite{ho2020constructing} & 20,000 & 122,108 & 20 & 916,907 & 372,554 & 2,883,006 \\
    \midrule
    \textbf{Total} & 60,000 & 737,310 & 60 & 4,392,235 & 2,240,110 & 14,125,063 \\
    \bottomrule
    \end{tabular}%
    }
  \label{tab:dataset_general_qa}%
\end{table*}%
\begin{table*}[t]
\setlength{\tabcolsep}{5pt}
  \centering
  \caption{Statistics of the domain-specific QA datasets.}
  \vspace{-0.3cm}
  \resizebox{0.75\textwidth}{!}{
    \begin{tabular}{llrrrrr}
    \toprule
    \textbf{Dataset} & \textbf{Domain} & $\#$ \textbf{Test} & $\#$ \textbf{Document} & $\#$ \textbf{Entity} & $\#$ \textbf{Relation} & $\#$ \textbf{Triple} \\
    \midrule
    \texttt{HotpotQA}~\cite{yang2018hotpotqa}   & Multi-hop         & 1,000 & 9,221     & 87,768  & 45,112    & 279,112 \\
    \texttt{MuSiQue}~\cite{trivedi2022musique}    & Multi-hop         & 1,000 & 11,656    & 100,853 & 55,944    & 319,618 \\
    \texttt{2WikiMultiHopQA}~\cite{ho2020constructing}      & Multi-hop         & 1,000 & 6,119     & 48,779  & 20,748    & 160,950 \\
    \midrule
    \texttt{PubMedQA}~\cite{jin2019pubmedqa}   & Biomedical        & 2,450 & 5,932     & 42,389  & 20,952    & 149,782 \\
    \midrule
    \texttt{DelucionQA}~\cite{sadat2023delucionqa} & Customer Support  & 184   & 235       & 2,669   & 2,298     & 6,183   \\
    \texttt{TechQA}~\cite{castelli2020techqa}     & Customer Support  & 314   & 769       & 10,221  & 4,606     & 57,613  \\
    \texttt{ExpertQA}~\cite{malaviya2024expertqa}   & Customer Support  & 203   & 808       & 11,079  & 6,810     & 16,541  \\
    \texttt{EManual}~\cite{nandy2021question}    & Customer Support  & 132   & 102       & 695     & 586       & 1,329   \\
    \midrule
    \texttt{MS} \texttt{Marco}~\cite{nguyen2016ms}   & General Knowledge & 423   & 3,481     & 24,740  & 17,042    & 63,995  \\
    \texttt{HAGRID}~\cite{kamalloo2023hagrid}     & General Knowledge & 1,318 & 1,975     & 23,484  & 18,653    & 48,969  \\ 
    \bottomrule
    \end{tabular}%
    }
  \label{tab:dataset_domain_qa}%
\end{table*}%

\addtocounter{section}{-1}
\section{Experiment Details}
\label{app:exp_detail}
In this section, we provide additional experimental details.

\subsection{Datasets Details}
\label{app:exp_detail_data}
\subsubsection{General Multi-hop QA Datasets}
Following prior work~\cite{luo2025gfm}, we conduct experiments on \textbf{three} benchmark datasets for the multi-hop question answering task, namely \texttt{HotpotQA}~\cite{yang2018hotpotqa}, \texttt{MuSiQue}~\cite{trivedi2022musique}, and \texttt{2WikiMultiHopQA}~\cite{ho2020constructing}.
The statistics are listed in Table~\ref{tab:dataset_general_qa}. A brief description of each dataset is provided below.
\begin{itemize}[leftmargin=*]
    \item \texttt{HotpotQA}~\cite{yang2018hotpotqa}: a large-scale multi-hop question answering dataset comprising 97k examples. Each question is paired with up to two gold supporting documents and several distractors. Answering typically requires combining evidence from both documents, testing the ability to perform fine-grained reasoning.
    \item \texttt{MuSiQue}~\cite{trivedi2022musique}: contains 25k carefully curated multi-hop questions that require two to four reasoning steps. The questions are constructed to encourage compositional reasoning over multiple documents, making it a challenging benchmark for evaluating multi-step semantic understanding.
    \item \texttt{2WikiMultiHopQA}~\cite{ho2020constructing}: a Wikipedia-based multi-hop QA dataset with 192k questions. Each question is designed to require reasoning over two or four different Wikipedia articles. The dataset highlights diverse reasoning patterns across factual content, providing a comprehensive testbed.
\end{itemize}

\subsubsection{Domain-specific QA Datasets}
\label{app:domain_data}
To assess the generalization ability of \modelname~across diverse application scenarios, we further evaluate it on \textbf{seven} multi-hop QA benchmarks spanning multiple domains, including: \textbf{(1) Biomedical:} \texttt{PubMedQA}~\cite{jin2019pubmedqa}; \textbf{(2) Customer Support:} \texttt{DelucionQA}~\cite{sadat2023delucionqa}, \texttt{TechQA}~\cite{castelli2020techqa}, \texttt{ExpertQA}~\cite{malaviya2024expertqa}, and \texttt{EManual}~\cite{nandy2021question}; and \textbf{(3) General Knowledge:} \texttt{MS} \texttt{Marco}~\cite{nguyen2016ms} and \texttt{HAGRID}~\cite{kamalloo2023hagrid}. The statistics of the datasets are listed in Table~\ref{tab:dataset_domain_qa}.

\subsection{Baseline Details}
\label{app:exp_detail_baseline}
\begin{table*}[!t]
\renewcommand{\arraystretch}{0.9}
\setlength{\tabcolsep}{2.5pt}
  \centering
  \caption{Comparison of baselines. Filled ($\CIRCLE$), half-filled ($\RIGHTcircle$), and empty circles ($\ocircle$) indicate full, partial, and no support for each feature, respectively. \modelname~is the only method covering all the listed comparing dimensions.}
  \vspace{-0.3cm}
  \resizebox{\linewidth}{!}{
      \begin{tabular}{lccccccccccc}
    \toprule
    \multirow{2}[10]{*}{\textbf{Method}} & \multicolumn{6}{c}{\textbf{Retriever Design}} & \multicolumn{3}{c}{\textbf{Training and Generalization}} & \multicolumn{2}{c}{\textbf{Reasoning}} \\
    \cmidrule(l{0.5mm}r{1mm}){2-7}  \cmidrule(l{0.5mm}r{1mm}){8-10}  \cmidrule(l{0.5mm}){11-12}        & \makecell{external\\[-1.5pt]retrieve} & \makecell{dense\\[-1.5pt]neural\\[-1.5pt]retriver} & \makecell{graph-\\[-1.5pt]augmented\\[-1.5pt]retrieve} & \makecell{structured\\[-1.5pt]output} & \makecell{subgraph\\[-1.5pt]output} & \makecell{adaptive\\[-1.5pt]budget} & pre-train & \makecell{cross-\\[-1.5pt]domain} & fine-tune & \makecell{structure-\\[-1.5pt]aware\\[-1.5pt]prompt} & single-step \\[-1.5pt]
    \cmidrule(r{1mm}){1-1} \cmidrule(l{0.5mm}r{1mm}){2-7}  \cmidrule(l{0.5mm}r{1mm}){8-10}  \cmidrule(l{0.5mm}){11-12} 
    \texttt{BM25}  & $\CIRCLE$ & $\ocircle$ & $\ocircle$ & $\ocircle$ & $\ocircle$ & $\ocircle$ & $\ocircle$ & $\ocircle$ & $\ocircle$ & $\ocircle$ & $\ocircle$ \\
    \texttt{Contriever} & $\CIRCLE$ & $\CIRCLE$ & $\ocircle$ & $\ocircle$ & $\ocircle$ & $\ocircle$ & $\CIRCLE$ & $\RIGHTcircle$ & $\RIGHTcircle$ & $\ocircle$ & $\ocircle$ \\
    \texttt{GTR}   & $\CIRCLE$ & $\CIRCLE$ & $\ocircle$ & $\ocircle$ & $\ocircle$ & $\ocircle$ & $\CIRCLE$ & $\RIGHTcircle$ & $\CIRCLE$ & $\ocircle$ & $\ocircle$ \\
    \texttt{ColBERTv2} & $\CIRCLE$ & $\CIRCLE$ & $\ocircle$ & $\ocircle$ & $\ocircle$ & $\RIGHTcircle$ & $\ocircle$ & $\CIRCLE$ & $\CIRCLE$ & $\ocircle$ & $\ocircle$ \\
    \texttt{RAPTOR} & $\CIRCLE$ & $\CIRCLE$ & $\RIGHTcircle$ & $\CIRCLE$ & $\ocircle$ & $\RIGHTcircle$ & $\ocircle$ & $\ocircle$ & $\ocircle$ & $\RIGHTcircle$ & $\ocircle$ \\
    \texttt{Proposition} & $\CIRCLE$ & $\CIRCLE$ & $\CIRCLE$ & $\CIRCLE$ & $\CIRCLE$ & $\ocircle$ & $\ocircle$ & $\ocircle$ & $\ocircle$ & $\CIRCLE$ & $\ocircle$ \\
    \cmidrule(r{1mm}){1-1} \cmidrule(l{0.5mm}r{1mm}){2-7}  \cmidrule(l{0.5mm}r{1mm}){8-10}  \cmidrule(l{0.5mm}){11-12} 
    \texttt{GraphRAG} & $\CIRCLE$ & $\ocircle$ & $\CIRCLE$ & $\CIRCLE$ & $\ocircle$ & $\ocircle$ & $\ocircle$ & $\RIGHTcircle$ & $\ocircle$ & $\CIRCLE$ & $\CIRCLE$ \\
    \texttt{G-Retriever} & $\CIRCLE$ & $\CIRCLE$ & $\CIRCLE$ & $\CIRCLE$ & $\CIRCLE$ & $\RIGHTcircle$ & $\ocircle$ & $\RIGHTcircle$ & $\RIGHTcircle$ & $\CIRCLE$ & $\CIRCLE$ \\
    \texttt{LightRAG} & $\CIRCLE$ & $\CIRCLE$ & $\CIRCLE$ & $\CIRCLE$ & $\RIGHTcircle$ & $\RIGHTcircle$ & $\ocircle$ & $\RIGHTcircle$ & $\ocircle$ & $\ocircle$ & $\CIRCLE$ \\
    \texttt{HippoRAG} & $\CIRCLE$ & $\RIGHTcircle$ & $\CIRCLE$ & $\CIRCLE$ & $\RIGHTcircle$ & $\RIGHTcircle$ & $\ocircle$ & $\RIGHTcircle$ & $\ocircle$ & $\ocircle$ & $\CIRCLE$ \\
    \texttt{HippoRAG}~\texttt{2} & $\CIRCLE$ & $\RIGHTcircle$ & $\CIRCLE$ & $\CIRCLE$ & $\RIGHTcircle$ & $\RIGHTcircle$ & $\ocircle$ & $\RIGHTcircle$ & $\ocircle$ & $\ocircle$ & $\CIRCLE$ \\
    \texttt{SubgraphRAG} & $\CIRCLE$ & $\CIRCLE$ & $\CIRCLE$ & $\CIRCLE$ & $\CIRCLE$ & $\ocircle$ & $\ocircle$ & $\CIRCLE$ & $\RIGHTcircle$ & $\CIRCLE$ & $\CIRCLE$ \\
    \texttt{PropRAG} & $\CIRCLE$ & $\RIGHTcircle$ & $\CIRCLE$ & $\CIRCLE$ & $\CIRCLE$ & $\ocircle$ & $\ocircle$ & $\ocircle$ & $\ocircle$ & $\ocircle$ & $\CIRCLE$ \\
    \texttt{GFM-RAG} & $\CIRCLE$ & $\CIRCLE$ & $\CIRCLE$ & $\CIRCLE$ & $\ocircle$ & $\ocircle$ & $\CIRCLE$ & $\RIGHTcircle$ & $\CIRCLE$ & $\ocircle$ & $\CIRCLE$ \\
    \cmidrule(r{1mm}){1-1} \cmidrule(l{0.5mm}r{1mm}){2-7}  \cmidrule(l{0.5mm}r{1mm}){8-10}  \cmidrule(l{0.5mm}){11-12} 
    \texttt{FLARE} & $\CIRCLE$ & $\ocircle$ & $\ocircle$ & $\ocircle$ & $\ocircle$ & $\ocircle$ & $\ocircle$ & $\ocircle$ & $\ocircle$ & $\ocircle$ & $\ocircle$ \\
    \texttt{Adaptive-RAG} & $\CIRCLE$ & $\ocircle$ & $\ocircle$ & $\ocircle$ & $\ocircle$ & $\CIRCLE$ & $\ocircle$ & $\ocircle$ & $\RIGHTcircle$ & $\ocircle$ & $\ocircle$ \\
    \texttt{IRCoT} & $\CIRCLE$ & $\ocircle$ & $\ocircle$ & $\ocircle$ & $\ocircle$ & $\ocircle$ & $\ocircle$ & $\CIRCLE$ & $\ocircle$ & $\ocircle$ & $\ocircle$ \\
    \cmidrule(r{1mm}){1-1} \cmidrule(l{0.5mm}r{1mm}){2-7}  \cmidrule(l{0.5mm}r{1mm}){8-10}  \cmidrule(l{0.5mm}){11-12} 
    \textbf{\modelname} & $\CIRCLE$ & $\CIRCLE$ & $\CIRCLE$ & $\CIRCLE$ & $\CIRCLE$ & $\CIRCLE$ & $\CIRCLE$ & $\CIRCLE$ & $\CIRCLE$ & $\CIRCLE$ & $\CIRCLE$ \\
    \bottomrule
    \end{tabular}%
  }
  \label{tab:compare}%
\end{table*}%

We compare with \textbf{18} state-of-the-art baselines (and their combinations), categorized by \textbf{four} groups:
\textbf{(1) Base LLM:} \texttt{GPT-4o-mini}~\cite{hurst2024gpt};
\textbf{(2) Single-step RAGs:} including \texttt{BM25} \cite{robertson1994some}, \texttt{Contriever} \cite{izacard2022unsupervised}, \texttt{GTR} \cite{ni2022large}, \texttt{ColBERTv2} \cite{santhanam2022colbertv2}, \texttt{RAPTOR} \cite{sarthi2024raptor}, \texttt{Proposition} \cite{chen2024dense};
\textbf{(3) Graph-enhanced RAGs:} \texttt{GraphRAG} \cite{edge2024local}, \texttt{G-Retriever} \cite{he2024g}, \texttt{LightRAG} \cite{guo2024lightrag}, \texttt{HippoRAG} \cite{jimenez2024hipporag}, \texttt{HippoRAG} \texttt{2} \cite{gutierrez2025rag}, \texttt{SubgraphRAG} \cite{li2025simple}, \texttt{PropRAG} \cite{wang2025proprag}, \texttt{GFM-RAG} \cite{luo2025gfm};
\textbf{(4) Multi-step RAGs:} including \texttt{IRCoT} \cite{trivedi2023interleaving}, \texttt{FLARE} \cite{jiang2023active},  \texttt{Adaptive-RAG} \cite{jeong2024adaptive}.
A brief introduction is below.

\textbf{(1) Base LLM:}
uses a pre-trained large language model without fine-tuning or retrieval augmentation. It serves as the common generator for all retrieval-based methods, ensuring a fair comparison. We adopt \texttt{GPT-4o-mini}~\cite{hurst2024gpt}, which answers queries solely based on parametric knowledge.

\textbf{(2) Single-step RAGs:}
retrieve in a single pass, and are widely used due to their efficiency and general applicability. This category includes both sparse and dense retrievers:
\begin{itemize}[leftmargin=*]
    \item \texttt{BM25}~\cite{robertson1994some}: a classical probabilistic sparse retriever based on term frequency, document length normalization, and inverse document frequency, serving as a strong zero-shot baseline.
    \item \texttt{Contriever}~\cite{izacard2022unsupervised}: an unsupervised dense retriever trained via contrastive learning, exhibiting strong zero-shot and cross-domain generalization without labeled data.
    \item \texttt{GTR}~\cite{ni2022large}: a T5-based dual encoder that improves cross-domain retrieval by scaling model capacity with dot-product scoring, achieving high data efficiency.
    \item \texttt{ColBERTv2}~\cite{santhanam2022colbertv2}: a late-interaction retriever that performs token-level matching with compressed representations, balancing retrieval quality and memory efficiency.
    \item \texttt{RAPTOR}~\cite{sarthi2024raptor}: a hierarchical retrieval method that organizes documents into recursive summaries, enabling retrieval across multiple abstraction levels for complex queries.
    \item \texttt{Proposition}~\cite{chen2024dense}: a dense retriever that indexes text using fine-grained propositions as atomic units, improving retrieval precision and downstream QA under limited budgets.
\end{itemize}

\textbf{(3) Graph-enhanced RAGs:}
incorporate structured knowledge in the form of graphs to enhance retrieval and reasoning by explicitly modeling entities and relations, enabling multi-hop dependency modeling beyond flat text retrieval.
\begin{itemize}[leftmargin=*]
    \item \texttt{GraphRAG}~\cite{edge2024local}: constructs an entity graph from documents and generates community-level summaries with LLMs, enabling corpus-level question answering beyond standard retrieval.
    \item \texttt{G-Retriever}~\cite{he2024g}: formulates retrieval as a soft Prize-Collecting Steiner Tree problem to identify query-relevant subgraphs in large graphs, improving reasoning while reducing hallucination.
    \item \texttt{LightRAG}~\cite{guo2024lightrag}: integrates graph-aware indexing with flat text retrieval via a dual-level scheme, capturing both fine- and coarse-grained semantic dependencies efficiently.
    \item \texttt{HippoRAG}~\cite{jimenez2024hipporag}: combines knowledge graphs with Personalized PageRank to support efficient one-step retrieval inspired by hippocampal memory, with strong multi-hop QA performance.
    \item \texttt{HippoRAG}~\texttt{2}~\cite{gutierrez2025rag}: extends \texttt{HippoRAG} with deeper passage integration and improved LLM interaction, enhancing factual, associative, and sense-making memory.
    \item \texttt{SubgraphRAG}~\cite{li2025simple}: retrieves compact KG subgraphs using triple scoring and structural distance encoding, but relies on fixed heuristics that limit query-adaptive reasoning.
    \item \texttt{PropRAG}~\cite{wang2025proprag}: replaces triples with context-rich propositions and performs beam search over proposition paths, improving zero-shot multi-hop retrieval.
    \item \texttt{GFM-RAG}~\cite{luo2025gfm}: employs a graph foundation model as a retriever over large heterogeneous graphs, but lacks explicit cross-domain semantic alignment and does not fully exploit KG reasoning structure at inference.
\end{itemize}

\textbf{(4) Multi-step RAGs:}
iteratively interleave retrieval and reasoning to support complex multi-hop questions, gradually constructing reasoning chains. These methods are largely retriever-agnostic and can be combined with different RAG architectures.
\begin{itemize}[leftmargin=*]
    \item \texttt{IRCoT}~\cite{trivedi2023interleaving}: interleaves retrieval with Chain-of-Thought reasoning, where each reasoning step guides the next retrieval. This design reduces hallucination and improves multi-hop QA. Notably, \texttt{IRCoT} is retriever-agnostic and can be paired with existing RAG methods to enable multi-step reasoning.
    \item \texttt{FLARE}~\cite{jiang2023active}: performs forward-looking active retrieval during generation by anticipating low-confidence content and retrieving evidence to refine it, improving factual consistency in long-form generation.
    \item \texttt{Adaptive-RAG}~\cite{jeong2024adaptive}: dynamically selects between single-step and multi-step retrieval strategies based on predicted query complexity, balancing efficiency and reasoning capability across diverse queries.
\end{itemize}

\subsection{Experimental Setting Details}
\label{app:exp_detail_setting}
In this section, we introduce detailed experimental settings\footnote{Since standard benchmark datasets are adopted, and the same base LLM settings are used, the baseline results reported in this paper are taken directly from the original publications unless otherwise stated.}.

\textbf{Detailed Settings for Section~\ref{rq1:retrieve} (\textit{RQ1}).}
We evaluate entity- and document-level retrieval under a unified protocol. Baselines that output ranked lists are evaluated using Recall@2/5 on entities and documents (R@2/5$_\textsf{E}$, R@2/5$_\textsf{D}$). In contrast, \modelname~returns a query-conditioned subgraph $\mathcal{G}_{\mathbf{q}}$ and uses its node set $\mathcal{V}_{\mathbf{q}}$ as retrieved entities, where the retrieval size is determined by the learned selector rather than an explicit top-$K$ cutoff (we set $K=5$ for comparison).
Document retrieval is performed using the same entity-to-chunk inverted index across all methods. For multi-step baselines, we aggregate unique entities and documents across steps and evaluate Recall on the final ranked outputs. This setting decouples retrieval quality from downstream generation while maintaining consistent entity-document linking.

\textbf{Detailed Settings for Section~\ref{rq2:qa} (\textit{RQ2}).}
We evaluate end-to-end QA by feeding the retrieved evidence from each method into the same base LLM, \texttt{GPT-4o-mini}~\cite{hurst2024gpt}. Single-step RAGs use top retrieved document chunks under the same budget as \textit{\textbf{RQ1}}, while graph-enhanced methods serialize retrieved entities or subgraphs following their original designs. Multi-step methods such as \texttt{IRCoT} interleave retrieval and reasoning, and concatenate evidence across steps. For \modelname, the input is constructed from the query-specific minimal sufficient subgraph, including selected entities, relational paths, and linked document chunks. All methods share identical decoding settings and are evaluated without QA fine-tuning, using EM, F1, Precision, and Recall on test splits.

\textbf{Detailed Settings for Section~\ref{rq3:domain} (\textit{RQ3}).}
The retrieval performance is evaluated on all seven cross-domain QA datasets, while generative QA is reported on three datasets that provide free-form answer annotations. For retrieval, we use document-level Recall@5 (R@5$_\textsf{D}$), reflecting a practical retrieval budget and better capturing evidence coverage under domain shift. For generative QA, we adopt ROUGE-L instead of EM or F1, as cross-domain answers are often longer and paraphrased, where sequence-level overlap more reliably reflects semantic faithfulness.

\textbf{Detailed Settings for Section~\ref{rq4:efficiency} (\textit{RQ4}).}
We evaluate retriever-side efficiency by measuring end-to-end retrieval latency together with document-level Recall@5 (R@5$_\textsf{D}$). Latency is measured from query input to document scoring completion, excluding LLM generation. For multi-step methods (\eg, \texttt{IRCoT}-based), latency includes all iterative retrieval steps, while for graph-based methods, offline indexing is excluded and online traversal is counted. All results are averaged over 1,000 runs to ensure fair comparison across different retrieval paradigms.

\textbf{Detailed Settings for Section~\ref{rq5:case} (\textit{RQ5}).}
We conduct a case study on a multi-hop question from \texttt{HotpotQA}. The question asks for the government position held by the actress who portrayed ``Corliss Archer'' in the film ``Kiss and Tell''. The supporting evidence spans multiple documents, including the film ``Kiss and Tell (1945)'', the actress ``Shirley Temple'', and biographical descriptions of her later public service role. We apply \modelname~to this instance and visualize the query-conditioned subgraph produced by the subgraph selector. The visualization highlights the selected entities, relations, and explicit reasoning paths, and contrasts them with semantically related but query-irrelevant entities present in the original context. This case enables qualitative inspection of whether the retrieved subgraph is both minimal and sufficient for supporting multi-hop reasoning, with full metadata provided in Figure~\ref{fig:case_x}.

\textbf{Detailed Settings for Section~\ref{rq6:ablation} (\textit{RQ6}).}
For ablation studies, to analyze the contribution of the core components in \modelname, we construct \textbf{five} ablated variants (and their combinations) by removing the specific modules while keeping other settings unchanged. The variants are:
\begin{itemize}[leftmargin=*]
    \item \textbf{\modelname~(\textit{w/o Proto}):} Removes the prototype-based semantic alignment loss $\mathcal{L}_{\text{proto}}$ in Eq.~\eqref{eq:proto}, disabling cross-domain prototype supervision during retriever training.
    \item \textbf{\modelname~(\textit{w/o IGC}):} Removes the information gain contrast loss $\mathcal{L}_{\text{IGC}}$ in Eq.~\eqref{eq:igc}, preventing the retriever from being regularized toward informative semantic prototypes.
    \item \textbf{\modelname~(\textit{w/o Tune}):} Disables the fine-tuning stage defined in Eq.~\eqref{eq:ftn}, where the retriever is adapted to query-specific subgraph selection.
    \item \textbf{\modelname~(\textit{w/o SubG}):} Removes the entire subgraph selector module in Section~\ref{sec:method_subgraph}, reducing the model to entity-level retrieval without learning a minimal sufficient subgraph.
    \item \textbf{\modelname~(\textit{w/o Prompt}):} Removes the in-context relational prompting strategy described in Section~\ref{sec:method_prompt}. Retrieved entities and documents are still used, but relational paths are not explicitly reorganized into structured prompts for the LLM.
\end{itemize}

All variants are evaluated under the same settings.
For hyper-parameter sensitivity analysis, we conduct sensitivity analysis on \textbf{five} key weighting hyper-parameters:
\begin{itemize}[leftmargin=*]
    \item ${\alpha_{1}}$\textbf{:} controls weight of the binary cross-entropy retrieval $\mathcal{L}_{\text{BCE}}$ in Eq.~\eqref{eq:proto}, which helps relevance prediction for query-entity pairs.
    \item ${\alpha_{2}}$\textbf{:} controls the contribution of the ranking-based retrieval loss $\mathcal{L}_{\text{rank}}$ in Eq.~\eqref{eq:proto}, enforcing a margin between relevant and irrelevant entities to improve retrieval robustness.
    \item ${\alpha_{3}}$\textbf{:} weights the prototype-based semantic alignment loss $\mathcal{L}_{\text{proto}}$ in Eq.~\eqref{eq:proto}, regulating the strength of cross-domain semantic consistency during training.
    \item ${\beta_{1}}$\textbf{:} controls the minimality regularization term that penalizes the size of the selected subgraph in Eq.~\eqref{eq:ftn}.
    \item ${\beta_{2}}$\textbf{:} controls the connectivity regularization term that encourages structural coherence of the selected subgraph in Eq.~\eqref{eq:ftn}.
\end{itemize}
Each hyper-parameter is varied independently, and performance is reported using MRR on the validation set.

\begin{center}
\begin{tcolorbox}[
    float*=t,
    width=0.88\textwidth,
    enhanced,
    colback=white,
    colframe=black,
    boxrule=0.7pt,
    rounded corners,
    fonttitle=\bfseries,
    title=Metadata (Python Dict),
    boxsep=2pt,
    left=2pt,
    right=2pt,
    top=3pt,
    bottom=3pt,
    breakable
]
\begin{spacing}{0.88}
{
\small
\begin{Verbatim}[breaklines=true,breakanywhere=true]
{
  "_id": "5a8c7595554299585d9e36b6",
  "answer": "Chief of Protocol",
  "question": "What government position was held by the woman who portrayed Corliss Archer in the film Kiss and Tell?",
  "supporting_facts": [
    ["Kiss and Tell (1945 film)", 0],
    ["Shirley Temple", 0],
    ["Shirley Temple", 1],
  ],
  "context": [
    [
      "A Kiss for Corliss",
      [
        "A Kiss for Corliss is a 1949 American comedy film directed by Richard Wallace and written by Howard Dimsdale.",
        " It stars Shirley Temple in her final starring role as well as her final film appearance.",
        ' It is a sequel to the 1945 film "Kiss and Tell".',
        ' "A Kiss for Corliss" was retitled "Almost a Bride" before release and this title appears in the title sequence.',
        " The film was released on November 25, 1949, by United Artists.",
      ],
    ],
    [
      "Lord High Treasurer",
      [
        "The post of Lord High Treasurer or Lord Treasurer was an English government position and has been a British government position since the Acts of Union of 1707.",
        " A holder of the post would be the third-highest-ranked Great Officer of State, below the Lord High Steward and the Lord High Chancellor.",
      ],
    ],
    [
      "Meet Corliss Archer (TV series)",
      [
        "Meet Corliss Archer is an American television sitcom that aired on CBS (July 13, 1951 - August 10, 1951) and in syndication via the Ziv Company from April to December 1954.",
        " The program was an adaptation of the radio series of the same name, which was based on a series of short stories by F. Hugh Herbert.",
      ],
    ],
    [
      "Village accountant",
      [
        'The Village Accountant (variously known as "Patwari", "Talati", "Patel", "Karnam", "Adhikari", "Shanbogaru","Patnaik" etc.) is an administrative government position found in rural parts of the Indian sub-continent.',
        ' The office and the officeholder are called the "patwari" in Telangana, Bengal, North India and in Pakistan while in Sindh it is called "tapedar".',
        ' The position is known as the "karnam" in Andhra Pradesh, "patnaik" in Orissa or "adhikari" in Tamil Nadu, while it is commonly known as the "talati" in Karnataka, Gujarat and Maharashtra.',
        ' The position was known as the "kulkarni" in Northern Karnataka and Maharashtra.',
        ' The position was known as the "shanbogaru" in South Karnataka.',
      ],
    ],
    [
      "Joseph Kalite",
      [
        "Joseph Kalite (died 24 January 2014) was a Central African politician.",
        " As a government minister he either held the housing or health portfolio.",
        " Kalite, a Muslim, was reported to be killed by anti-balaka outside the Central Mosque in the capital Bangui during the Central African Republic conflict.",
        " He was killed with machetes on the day in Bangui after interim president Catherine Samba-Panza took power.",
        " At the time of the attack Kalite held no government position, nor did he under the Séléka rule.",
        " He was reported to have supported the rule of Séléka leader Michel Djotodia.",
      ],
    ],
    [
      "Charles Craft",
      [
        "Charles Craft (May 9, 1902 – September 19, 1968) was an English-born American film and television editor.",
        " Born in the county of Hampshire in England on May 9, 1902, Craft would enter the film industry in Hollywood in 1927.",
        ' The first film he edited was the Universal Pictures silent film, "Painting the Town".',
        " Over the next 25 years, Craft would edit 90 feature-length films.",
        ' In the early 1950s he would switch his focus to the small screen, his first show being "Racket Squad", from 1951–53, for which he was the main editor, editing 93 of the 98 episodes.',
        ' He would work on several other series during the 1950s, including "Meet Corliss Archer" (1954), "Science Fiction Theatre" (1955–56), and "Highway Patrol" (1955–57).',
        ' In the late 1950s and early 1960s he was one of the main editors on "Sea Hunt", starring Lloyd Bridges, editing over half of the episodes.',
        ' His final film work would be editing "Flipper\'s New Adventure" (1964, the sequel to 1963\'s "Flipper".',
        " When the film was made into a television series, Craft would begin the editing duties on that show, editing the first 28 episodes before he retired in 1966.",
        " Craft died on September 19, 1968 in Los Angeles, California.",
      ],
    ],
    [
      "Meet Corliss Archer",
      [
        "Meet Corliss Archer, a program from radio's Golden Age, ran from January 7, 1943 to September 30, 1956.",
        ' Although it was CBS\'s answer to NBC\'s popular "A Date with Judy", it was also broadcast by NBC in 1948 as a summer replacement for "The Bob Hope Show".',
        " From October 3, 1952 to June 26, 1953, it aired on ABC, finally returning to CBS.",
        " Despite the program's long run, fewer than 24 episodes are known to exist.",
      ],
    ],
    [
      "Janet Waldo",
      [
        "Janet Marie Waldo (February 4, 1920 – June 12, 2016) was an American radio and voice actress.",
        ' She is best known in animation for voicing Judy Jetson, Nancy in "Shazzan", Penelope Pitstop, and Josie in "Josie and the Pussycats", and on radio as the title character in "Meet Corliss Archer".',
      ],
    ],
    [
      "Kiss and Tell (1945 film)",
      [
        "Kiss and Tell is a 1945 American comedy film starring then 17-year-old Shirley Temple as Corliss Archer.",
        " In the film, two teenage girls cause their respective parents much concern when they start to become interested in boys.",
        " The parents' bickering about which girl is the worse influence causes more problems than it solves.",
      ],
    ],
    [
      "Secretary of State for Constitutional Affairs",
      [
        "The office of Secretary of State for Constitutional Affairs was a British Government position, created in 2003.",
        " Certain functions of the Lord Chancellor which related to the Lord Chancellor's Department were transferred to the Secretary of State.",
        " At a later date further functions were also transferred to the Secretary of State for Constitutional Affairs from the First Secretary of State, a position within the government held by the Deputy Prime Minister.",
      ],
    ],
  ],
  "type": "bridge",
  "level": "hard",
}
\end{Verbatim}
}
\end{spacing}
\captionof{figure}{Metadata of the case study from \texttt{HotpotQA}.}
\label{fig:case_x}
\vspace{-0.75cm}
\end{tcolorbox}
\end{center}
\section{Implementation Details}
\label{app:imp_detail}
In this section, we provide the implementation details.

\begin{table}[!t]
  \centering
  \caption{Hyper-parameter configurations.}
  \vspace{-0.3cm}
  \resizebox{\columnwidth}{!}{
    \begin{tabular}{lll}
    \toprule
    \textbf{Stage} & \textbf{Hyper-parameter} & \textbf{Value} \\
    \midrule
    \multirow{4}[0]{*}{\makecell[l]{KG Index\\Construction}} & OpenIE model & \texttt{GPT-4o-mini} \\
     & Entity resolution model & \texttt{ColBERTv2} \\
     & Text embedding dimension & 768 \\
     & Similarity threshold $\tau$ & 0.8 \\
    \midrule
    \multirow{6}[0]{*}{\makecell[l]{GFM\\Architecture}} & Number of layers & 6\\
     & Hidden dimension & 512 \\
     & Sentence encoder & \texttt{all-mpnet-v2} \\
     & Message function $\textsc{Msg}(\cdot)$ & $\operatorname{DistMult}(\cdot)$ \\
     & Aggregation function $\textsc{Agg}(\cdot)$ & $\textsc{Sum}(\cdot)$ \\
     & Relation update $g(\cdot)$ & 2-layer MLP\\
    \midrule
    \multirow{6}[0]{*}{\makecell[l]{Pre-training\\(Phase I)}} & Loss weight $\alpha_1$ & 0.3 \\
          & Optimizer & AdamW \\
          & Learning rate & 5e-4 \\
          & Batch size & 4 \\
          & Number of training steps & 30,000 \\
          & Number of negative samples & 128 \\
    \midrule
    \multirow{7}[0]{*}{\makecell[l]{Pre-training\\(Phase II)}} & Loss weight $\alpha_2$ & 0.1 \\
          & Loss weight $\alpha_3$ & 0.1 \\
          & Temperature $\tau$& 0.1 \\
          & Optimizer & AdamW \\
          & Learning rate & 5e-4 \\
          & Batch size & 4 \\
          & Epochs of continual training & 20 \\
    \midrule
    \multirow{11}[0]{*}{Fine-tuning} & Text embedding dimension & 768 \\
          & Hidden dimension & 1024 \\
          & Query prompt dimension & 512 \\
          & Subgraph selector dimension & 256 \\
          & Regularization weight $\beta_1$ & 0.01 \\
          & Regularization weight $\beta_2$ & 0.1 \\
          & Gumbel-Sigmoid temperature $\tau$ & 0.5 \\
          & InfoNCE temperature $\tau$ & 0.07 \\
          & Optimizer & AdamW \\
          & Learning rate & 5e-4 \\
          & Batch size & 8 \\
          & Number of fine-tuning epochs & 20 \\
    \midrule
    \multirow{3}[0]{*}{Inference (QA)} & Number of top-$K$ documents & 5 \\
          & Maximum DFS depth $l$ & 3 \\
          & Random seed & 1024 \\
    \bottomrule
    \end{tabular}%
  }
  \label{tab:para}%
\end{table}%

\subsection{Implementation Details of \modelname}
\label{app:imp_detail_ours}
\subsubsection{Pre-training Data}
We follow the pre-training data processing pipeline released by~\citet{luo2025gfm} to prepare large-scale pre-training data.
Specifically, a total of 60K pre-training instances are sampled from the training splits of \texttt{HotpotQA}, \texttt{MuSiQue}, and \texttt{2WikiMultiHopQA}, and merge their associated candidate passages to form the document corpus.
To build KG, the OpenIE-based extraction strategy is adopted, using \texttt{GPT-4o-mini} with prompts described in~\citet{jimenez2024hipporag} to extract entities, relations, and triples from the corpus.
Entity resolution is then performed by computing pairwise embedding similarities $s(e_i, e_j)$ with \texttt{ColBERTv2}~\cite{santhanam2022colbertv2}:
\begin{equation}
    s(e_i, e_j) = \textsc{TextEmb}(e_i)^\top \textsc{TextEmb}(e_j),
\label{eq:similarity}
\end{equation}
where a new equivalence relation $(e_i, \equiv, e_j)$ is added when their embedding similarity exceeds a threshold $\tau=0.8$.
To control the scale of each index, the sampled data are partitioned into subsets of approximately 1K questions and 10K documents.
This results in 60 knowledge graph indexes paired with corresponding question-document sets, which are used for large-scale pre-training.

\subsubsection{Model Configurations}
\modelname~is implemented as a query-conditioned GNN with 6 message-passing layers with a hidden size of 512. Entity and relation embeddings are initialized from the frozen \texttt{all-mpnet-v2} sentence encoder~\cite{allmpnet2021sbert} in 768 dimensions. Each layer performs relation-aware message passing using a bilinear interaction function followed by sum aggregation, while relation states are refined through a lightweight 2-layer MLP to capture context-dependent transformations. The pre-trained \modelname~contains approximately 8.1M trainable parameters and serves as a generalized and cross-domain retriever that produces query-conditioned entity embeddings for subgraph selection. For document ranking, we return the top-5 document corpus.

\subsubsection{Hyper-paramters}
We provide hyper-parameter configurations of pre-training, fine-tuning, and inference stages in Table~\ref{tab:para}. 

\subsection{Implementation Details of Baselines}
We provide the baseline methods implementations as follows.
\begin{itemize}[leftmargin=*]
    \item \texttt{GPT-4o-mini}: \url{https://platform.openai.com}.
    \item \texttt{BM25}: \url{https://github.com/jxmorris12/bm25_pt}.
    \item \texttt{Contriever}: \url{https://github.com/facebookresearch/contriever}.
    \item \texttt{GTR}: \url{https://github.com/google-research/t5x_retrieval#dense-retrieval-models}.
    \item \texttt{ColBERTv2}: \url{https://github.com/stanford-futuredata/ColBERT}.
    \item \texttt{RAPTOR}: \url{https://github.com/parthsarthi03/raptor}.
    \item \texttt{Proposition}: \url{https://github.com/chentong0/factoid-wiki}.
    \item \texttt{GraphRAG}: \url{https://github.com/microsoft/graphrag}.
    \item \texttt{G-Retriever}: \url{https://github.com/XiaoxinHe/G-Retriever}.
    \item \texttt{LightRAG}: \url{https://github.com/HKUDS/LightRAG}.
    \item \texttt{HippoRAG}: \url{https://github.com/OSU-NLP-Group/HippoRAG}.
    \item \texttt{HippoRAG}~\texttt{2}: \url{https://github.com/OSU-NLP-Group/HippoRAG}.
    \item \texttt{SubgraphRAG}: \url{https://github.com/Graph-COM/SubgraphRAG}.
    \item \texttt{PropRAG}: \url{https://github.com/ReLink-Inc/PropRAG}.
    \item \texttt{GFM-RAG}: \url{https://github.com/RManLuo/gfm-rag}.
    \item \texttt{IRCoT}: \url{https://github.com/stonybrooknlp/ircot}.
    \item \texttt{FLARE}: \url{https://github.com/jzbjyb/FLARE}.
    \item \texttt{Adaptive-RAG}: \url{https://github.com/starsuzi/Adaptive-RAG}.
\end{itemize}

For all baselines, if the same standard benchmark datasets and the same base LLM settings are used, the results reported in this paper are taken directly from the original
publications. When such settings are unavailable, we adopt the hyper-parameter configurations suggested in the original publications or provided by official implementations to ensure reasonable baseline performance.
To maintain a fair and controlled comparison, we keep other factors such as hidden size, number of layers, and architectural settings consistent with those used in our proposed framework.

\subsection{Hardware and Software Configurations}
\label{app:config}
We conduct the experiments in a Linux-based server environment with the following hardware and software configurations.
\begin{itemize}[leftmargin=*]
    \item \textbf{Operating System:} Ubuntu 20.04 LTS.
    \item \textbf{CPU:} 10 Intel(R) Xeon(R) Platinum 8358 CPU @ 2.60GHz with 1TB DDR4 of Memory.
    \item \textbf{GPU:} 8 NVIDIA Tesla A100 SMX4 with 80GB of Memory.
    \item \textbf{Software:} CUDA 12.4, Python 3.8.12, PyTorch\footnote{\url{https://github.com/pytorch/pytorch}.} 1.9.1, PyTorch Geometric\footnote{\url{https://github.com/pyg-team/pytorch_geometric}.} 2.0.1.
\end{itemize}

\section{Additional Results}
\label{app:additional_res}
\subsection{Impact of Domain-Specific Fine-Tuning}
\label{app:additional_res_1}
To further investigate and compare the cross-domain adaptability of retrievers, we fine-tune the most relevant baseline \texttt{GFM-RAG} and our \modelname~on the training split of its corresponding target dataset and evaluate on the same \textbf{seven} cross-domain benchmarks used in Section~\ref{rq3:domain} (\textit{\textbf{RQ3}}). Table ~\ref{tab:add_res_retrieve} reports document retrieval performance using R@5$_\textsf{D}$, and Table~\ref{tab:add_res_qa} reports generative QA performance using ROUGE-L on the three datasets that support free-form answers. Results marked with ``*'' denote performance after domain-specific fine-tuning, while unmarked results correspond to the zero-shot setting in consistency with that in \textit{\textbf{RQ3}}.

Results show that domain-specific fine-tuning yields clear improvements for both \texttt{GFM-RAG} and \modelname~across retrieval and generative QA, indicating that both benefit from in-domain supervision. However, \modelname~consistently remains stronger than \texttt{GFM-RAG} in both settings. Specifically, for cross-domain document retrieval, \modelname~outperforms \texttt{GFM-RAG} before fine-tuning and retains this advantage after fine-tuning. The same trend holds for the generative QA, where \modelname~achieves higher ROUGE-L both zero-shot and fine-tuned. Overall, the results suggest that \modelname~not only generalizes better for cross-domain retrieval, but also provides a consistently stronger adaptation starting point under domain-specific fine-tuning.
\begin{table}[!t]
\setlength{\tabcolsep}{3.5pt}
  \centering
  \caption{Impact of domain-specific fine-tuning on cross-domain document retrieval (R@5$_\textsf{D}$ \%). Results marked with ``*'' denote the fine-tuned performance.}
  \vspace{-0.3cm}
  \resizebox{\columnwidth}{!}{
    \begin{tabular}{lcccc}
    \toprule
    \textbf{Dataset} & \texttt{PropRAG} & \texttt{HippoRAG}~\texttt{2} & \texttt{GFM-RAG} & \textbf{\modelname} \\
    \midrule
    \texttt{PubMedQA} & 55.2  & 60.1  & 58.5 ~~\;\; \underline{63.6}* & 61.9 ~~\;\; \textbf{64.2}* \\
    \texttt{DelucionQA} & 65.9  & 68.2  & 70.8 ~~\;\; \underline{82.7}* & 72.3 ~~\;\; \textbf{84.3}* \\
    \texttt{TechQA} & 44.1  & 47.5  & 46.6 ~~\;\; 49.5* & \underline{50.3} ~~\;\; \textbf{51.6}* \\
    \texttt{ExpertQA} & 60.9  & 63.5  & 62.7 ~~\;\; 60.8* & \underline{65.3} ~~\;\; \textbf{65.9}* \\
    \texttt{Emanual} & 58.3  & 56.2  & 60.6 ~~\;\; \textbf{75.9}* & 61.9 ~~\;\; \underline{71.7}* \\
    \texttt{MS}~\texttt{Marco} & 65.2  & 68.4  & 71.0 ~~\;\; \underline{77.5}* & 71.6 ~~\;\; \textbf{78.0}* \\
    \texttt{HAGRID} & 78.8  & 81.6  & 84.7 ~~\;\; \underline{86.6}* & 86.2 ~~\;\; \textbf{87.5}* \\
    \bottomrule
    \end{tabular}%
  }
  \label{tab:add_res_retrieve}%
\end{table}%

\begin{table}[!t]
\setlength{\tabcolsep}{3.5pt}
  \centering
  \caption{Impact of domain-specific fine-tuning on cross-domain generative QA (ROUGE-L  \%). Results marked with * denote the fine-tuned performance.}
  \vspace{-0.3cm}
  \resizebox{\columnwidth}{!}{
    \begin{tabular}{lcccc}
    \toprule
    \textbf{Dataset} & \texttt{PropRAG} & \texttt{HippoRAG}~\texttt{2} & \texttt{GFM-RAG} & \textbf{\modelname} \\
    \midrule
    \texttt{ExpertQA}\textcolor{white}{on} & 27.1  & 29.0    & 27.6 ~~\;\; 27.3* & \underline{31.9} ~~\;\; \textbf{32.3}* \\
    \texttt{Emanual} & 43.3  & 46.1  & 47.6 ~~\;\; \underline{50.9}* & 49.8 ~~\;\; \textbf{51.6}* \\
    \texttt{MS Marco} & 33.8  & 33.1  & 30.2 ~~\;\; \underline{36.5}* & 35.6 ~~\;\; \textbf{37.1}* \\
    \bottomrule
    \end{tabular}%
  }
  \label{tab:add_res_qa}%
\end{table}%

\begin{figure*}[t]
    \centering
    \includegraphics[width=1\linewidth]{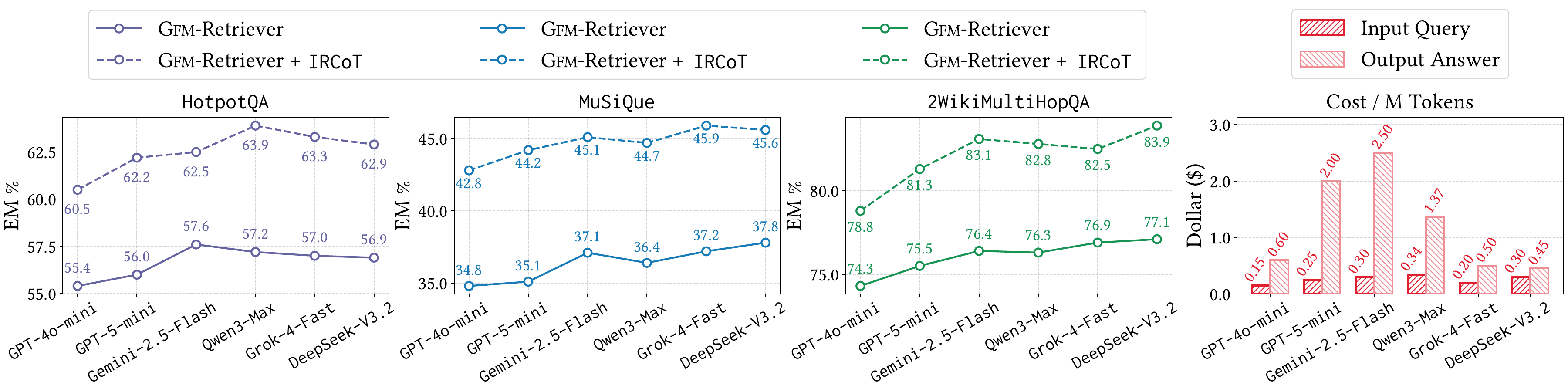}
    \vspace{-0.5cm}
    \caption{QA performance of \modelname~with different base LLMs.}
    \label{fig:llm}
\end{figure*}
\begin{figure*}[!t]
    \centering
    \includegraphics[width=1\linewidth]{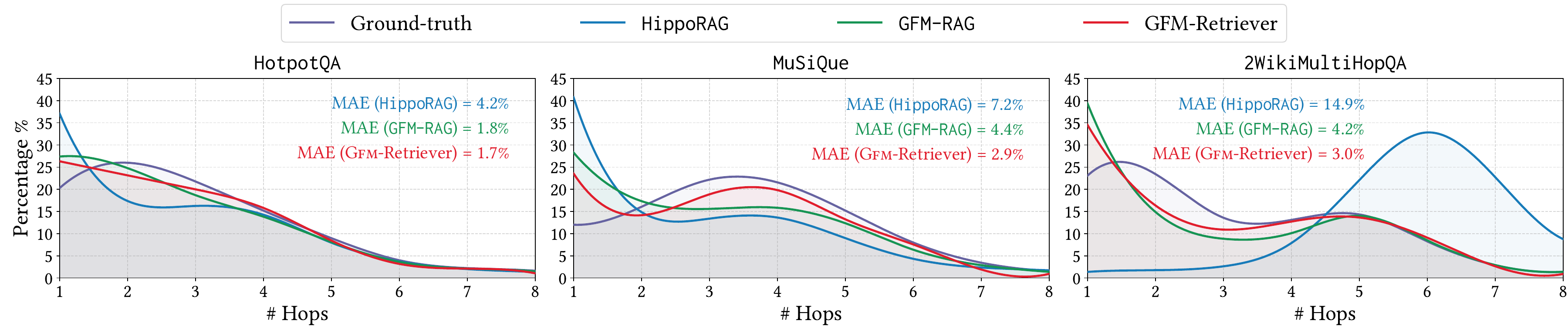}
    \vspace{-0.5cm}
    \caption{Distribution of predicted reasoning hop counts.}
    \label{fig:hop}
\end{figure*}

\subsubsection{Impact of Different Base LLMs}
To evaluate the modularity of \modelname~with respect to the underlying generator, we compare its QA performance when paired with different base LLMs. Since our retriever is LLM-agnostic, we keep the retrieved subgraphs fixed and only replace the generation LLMs. We test \textbf{six} representative LLMs with varying capacities and costs: \texttt{GPT-4o-mini}, \texttt{GPT-5-mini}, \texttt{Gemini-2.5-Flash}, \texttt{Qwen3-Max}, \texttt{Grok-4-Fast}, and \texttt{DeepSeek-V3.2}. We report Exact Match (EM \%) on \textbf{three} multi-hop QA benchmarks, with and without integrating the multi-step reasoning framework \texttt{IRCoT}. Additionally, we compare their token-level monetary cost for both input and output tokens.

Figure~\ref{fig:llm} shows that stronger LLMs consistently improve answer generation while the retrieval component remains unchanged. For all three datasets, EM steadily increases as model capability grows, demonstrating that better generators can more effectively exploit the structured evidence provided by \modelname. When combined with \texttt{IRCoT}, the performance gap between LLMs further widens, indicating that multi-step reasoning benefits more from stronger base LLMs. Notably, even lightweight models achieve competitive performance when guided by the structured subgraph evidence, highlighting the robustness of the retrieval stage. The cost analysis reveals a clear quality-cost trade-off: while larger models such as \texttt{Qwen3-Max} yield the best accuracy, smaller models (\eg, \texttt{GPT-4o-mini}) offer a favorable efficiency-performance balance. Overall, the results confirm that \modelname~is fully pluggable with diverse LLM backbones, and improvements in generation quality can be directly translated into better end-to-end QA without modifying the retriever.

\subsubsection{Distribution of Predicted Reasoning Hops}
To further examine whether the retrieved structures reflect the intrinsic multi-hop complexity of questions, we analyze the distribution of reasoning hop counts implied by different methods. For each question in the test sets of \texttt{HotpotQA}, \texttt{MuSiQue}, and \texttt{2WikiMultiHopQA}, we estimate the number of hops in the ground-truth reasoning chain based on annotated supporting facts. We then compute the hop count of the reasoning paths induced by different retrieval methods and compare their distributions against the ground-truth.

Figure~\ref{fig:hop} shows that \modelname~behaves consistently closer to the ground-truth than those of \texttt{HippoRAG} and \texttt{GFM-RAG}, which is reflected by the lowest mean absolute error (MAE) across all three datasets. In contrast, \texttt{HippoRAG} exhibits a clear bias toward longer reasoning chains, especially on \texttt{2WikiMultiHopQA}, where it overestimates high-hop paths and deviates substantially from the ground-truth distribution. While \texttt{GFM-RAG} better tracks the general trend, its hop distribution remains less aligned than that of \modelname.
The results indicate that \modelname~not only retrieves relevant evidence but also captures the sufficient structural depth required for each query. Instead of relying on shallow or deep reasoning paths, it tends to recover subgraphs that match the reasoning requirements, further supporting the subgraph selector effectively models minimal yet sufficient multi-hop structures.
\section{Discussions}
\label{app:discuss}
In this section, we address these \textit{\textbf{``Key Questions''}} to provide deeper insights into the motivation, theories, and design choices.

\paragraph{\textbf{Q1. Why can \modelname~be regarded as a foundation model rather than a task-specific retriever?}}
\modelname~is pre-trained on large-scale, heterogeneous knowledge graphs constructed from multi-domain corpora, rather than on a single downstream task. It learns generalizable structural reasoning patterns through KG completion, then transfers to structurally different tasks, including subgraph selection and document retrieval under domain shift. Proposition~\ref{prop:expressivity} further shows that a query-conditioned GFM can express graded modal logic over multi-domain KGs, providing formal justification that the model captures domain-aware multi-hop reasoning rules instead of dataset-specific heuristics.

\paragraph{\textbf{Q2. What is the source of the retriever's transferability across domains?}}
Transferability mainly stems from the Prototype-driven Continuous Pre-training (Phase II).
Standard GNNs often overfit to domain-specific structural patterns. In contrast, Phase II training explicitly aligns entity embeddings with domain semantic prototypes ($\mathbf{c}_d$ in Eq.~\eqref{eq:proto}), which regularizes the retriever to learn general domain knowledge while distinguishing domain-invariant semantic differences. The IGC regularizer further suppresses spurious correlations. This ensures the message-passing mechanism captures fundamental reasoning patterns (\eg, transitivity, composition) that hold true regardless of the specific domain.

\paragraph{\textbf{Q3. Why is the Information Bottleneck (IB) formulation termed "label-free"?}}
The term ``label-free'' refers to the unavailability of the ground-truth answer $y$ during the retrieval optimization phase.
We therefore propose a surrogate objective by replacing $y$ with the query $\mathbf{q}$, optimizing $I(\mathbf{q}; \mathcal{G}_\mathbf{q})$ instead. Proposition~\ref{prop:error} theoretically guarantees that the error introduced by this approximation is bounded by the conditional entropy $H(\mathbf{q}\mid y)$, which allows us to optimize the subgraph selector with only query-data pairs.

\paragraph{\textbf{Q4. Why employ Parameter-Efficient Fine-Tuning (PEFT) instead of full fine-tuning?}}
\textbf{(1)} Preserving Generalization: Full fine-tuning on a specific dataset risks catastrophic forgetting. By freezing the GFM backbone and only training the lightweight projection matrices and subgraph selector, we retain the cross-domain reasoning capabilities while adapting to the specific format of the retrieval task.
\textbf{(2)} Task Discrepancy Management: The pre-training task (KG completion) and downstream task (subgraph selection) differ. PEFT acts as a bridge, optimizing the decision boundary without distorting the semantic space that encodes entities.

\paragraph{\textbf{Q5. What is the theoretical performance bottleneck, and how is it mitigated?}}
Theoretically, the bottleneck lies in inference, specifically the path extraction process. While subgraph selection is efficient ($\mathcal{O}(|\mathcal{V}^T|)$), enumerating reasoning paths via DFS has a worst-case exponential complexity $\mathcal{O}(\Delta^l)$, where $\Delta$ is the node degree and $l$ is its depth. We address this by implementing a truncated DFS with a strict hop limit ($l\leqslant 3$) and beam search scoring.

\paragraph{\textbf{Q6. Is the performance gain due to the Subgraph Selector or the In-context Prompter?}}
Both components are essential. The ablation study (Figure~\ref{fig:ablation}) disentangles their contributions.
Removing the subgraph selector (reverting to standard retrieval) causes the most significant drop in retrieval, which proves that the content quality is primarily driven by the retriever's selection capability.
Removing the in-context prompter causes a notable drop in QA but has little effect on retrieval, which indicates that while the selector finds the ``ingredients'', the prompter provides the ``recipe'', explicitly organizing structures to guide the LLM's reasoning process.

\paragraph{\textbf{Q7. How does \modelname~systematically differ from related works?}}
While following standard evaluation protocols (datasets and metrics), \modelname~fundamentally differs in its overall paradigm. We illustrate their comparison in Table~\ref{tab:compare}.

\paragraph{\textbf{Q8. Does constructing the KG index introduce prohibitive costs for real-world applications?}}
While the KG construction incurs an initial overhead, it is a one-time and offline investment. Once indexed, the graph supports infinite queries with high efficiency. As the KG construction is not our main contribution, and as we follow the same experiment protocols in~\citet{jimenez2024hipporag, li2025simple} and~\citet{luo2025gfm}, we directly perform evaluations on their released KGs, which ensures the fair comparisons.
\section{Additional Related Work}
\label{app:related_work}
\subsection{Retrieval-Augmented Generation}
\label{app:related_work_rag}
Retrieval-augmented generation (RAG) is a general paradigm that enhances generative large models by enabling explicit access to external knowledge during inference. Instead of relying solely on information stored in model parameters, RAG systems retrieve relevant evidence conditioned on the input and incorporate it into the generation process. This design has demonstrated strong effectiveness across diverse real-world applications, including financial analysis~\cite{wang2025omnieval}, healthcare systems~\cite{xu2024ram}, legal reasoning tasks~\cite{wiratunga2024cbr}, and education~\cite{miladi2024leveraging}. With the rapid development of LLMs, retrieval augmentation has become increasingly important, as purely parametric models often suffer from hallucinated outputs, outdated knowledge, and limited adaptability. Prior surveys indicate that grounding generation in retrieved evidence substantially improves faithfulness, reduces privacy risks, and enhances robustness under distributional shifts~\cite{zhao2023survey, huang2023survey}. Consequently, RAG has emerged as a widely adopted strategy for extending the knowledge capacity of LLMs without requiring full model retraining.



Most retrieval-augmented generation methods adopt a ``retrieve-then-generate'' paradigm. Retrieved text units are then incorporated into the generator as additional context through concatenation, attention-based fusion, or encoder-decoder conditioning~\cite{fan2024survey}. Compared with keyword-based retrieval, dense retrievers provide higher semantic recall and better alignment with downstream generation. Building on this framework, later RAG variants explore tighter retrieval-generation coupling by operating at different granularities, including sentence-level~\cite{karpukhin2020dense, guu2020retrieval}, chunk-level~\cite{lewis2020retrieval, zhang2025sage}, and token-level~\cite{borgeaud2022improving, xu2024theory}, enabling dynamic evidence integration during decoding. Other approaches further constrain generation to improve grounding and verifiability. Despite architectural differences, a shared principle across RAG systems is to treat retrieval as a modular and updatable component, decoupling external knowledge access from the generator’s parametric capacity.

However, most existing RAGs retrieve unstructured text and treat evidence as an unordered set of contexts. Although effective for many knowledge-intensive tasks, this formulation largely ignores relational dependencies and higher-order structure among retrieved items, motivating graph-based extensions such as GraphRAG.

\subsection{Graph-based Retrieval-Augmented Generation}
\label{app:related_work_graphrag}
Graphs provide an expressive representation for structured relationships and have been widely applied in knowledge representation, social network analysis, and biomedical modeling~\cite{ma2021deep, wu2023survey}. Motivated by the limitations of text-only retrieval, Graph-based retrieval-augmented generation (GraphRAG) has emerged as an extension of RAG that incorporates graph-structured knowledge into the retrieval process~\cite{zhang2025survey}. In GraphRAG, concepts are typically represented as nodes, while edges encode semantic, logical, or relations, enabling retrieval to exploit explicit structure rather than relying solely on unstructured textual similarity~\cite{edge2024local, peng2024graph, zhou2025depth}. Recent studies explore this paradigm in settings involving structured graphs such as knowledge graphs and molecular graphs~\cite{han2024retrieval, peng2024graph}, highlighting its potential for relation-aware retrieval and reasoning.

Beyond retrieving from pre-existing graphs, a growing body of work focuses on constructing graphs from unstructured text. By organizing textual information into a graph, GraphRAGs enable structured access to implicit relational knowledge, benefiting tasks such as document summarization~\cite{edge2024local}, planning~\cite{lin2024graph}, and multi-step reasoning~\cite{han2025reasoning}. Within this line of research, graphs act as intermediate representations that facilitate abstraction, aggregation, and traversal-based evidence selection, complementing dense text retrieval. Existing methods can be grouped into several recurring design patterns. Hierarchical graph construction approaches organize knowledge through tree structures or community-level abstractions, enabling coarse-to-fine retrieval over large corpora, as exemplified by \texttt{RAPTOR}~\cite{sarthi2024raptor} and Microsoft’s \texttt{GraphRAG}~\cite{edge2024local}. Neural graph retrieval methods integrate graph neural encoders with specialized objectives to support multi-hop reasoning and scalable retrieval, including \texttt{G-Retriever}~\cite{he2024g}, \texttt{LightRAG}~\cite{guo2024lightrag}, and \texttt{GFM-RAG}~\cite{luo2025gfm}. Other approaches emphasize dynamic graph construction and adaptive traversal tightly coupled with large language models, allowing retrieval policies and graph structures to evolve during inference, such as \texttt{DALK}~\cite{li2024dalk}, \texttt{KGP}~\cite{wang2024knowledge}, and \texttt{ToG}~\cite{sun2023think}. In addition, \texttt{HippoRAG}~\cite{jimenez2024hipporag} adopts a memory-inspired formulation based on Personalized PageRank to enable efficient single-step multi-hop retrieval, demonstrating that graph-based propagation mechanisms can substantially improve reasoning efficiency.

Despite recent progress, many GraphRAG methods retrieve over-dense graphs, often introducing redundant structure and yielding contexts that are not task-sufficient. Moreover, reliance on fixed graph construction limits generalization across tasks, highlighting the open challenge of identifying compact, task-relevant subgraphs that retain essential relational information.

\subsection{Graph Foundation Models}
\label{app:related_work_gfm}
Graph foundation models aim to learn transferable representations from large-scale graph data via self-supervised pre-training~\cite{liu2022graph, xie2022self}. By leveraging objectives such as attribute corruption, structural reconstruction, and contrastive learning, these models seek to capture generalizable structural and semantic patterns that can be reused across tasks. A broad line of work has explored this paradigm, including \texttt{GCC}~\cite{qiu2020gcc}, \texttt{GPT-GNN}~\cite{hu2020gpt}, \texttt{GPPT}~\cite{sun2022gppt}, \texttt{L2P-GNN}~\cite{lu2021learning}, and subsequent variants~\cite{xu2023better,jiang2021contrastive,yang2022self,chen2022pre,cao2023pre,yin2023train}. After pre-training, the learned representations are typically adapted through fine-tuning or prompting. However, most existing approaches implicitly assume limited distributional discrepancy between pre-training and downstream graphs, often focusing on homogeneous subgraphs~\cite{huang2022few} or narrowly scoped domains~\cite{zhang2021motif}. As a result, their transferability degrades under domain shift.

To alleviate this issue, recent research has investigated cross-domain and multi-domain GFMs, aiming to reduce distribution gaps and improve generalization. One line of work focuses on extracting domain-invariant representations to support transfer across dissimilar domains, as explored in \texttt{GCOPE}~\cite{zhao2024all}, \texttt{OMOG}~\cite{liu2024one}, \texttt{PGPRec}~\cite{yi2023contrastive}, \texttt{UniAug}~\cite{tang2024cross}, and \texttt{UniGraph}~\cite{he2024unigraph}. Another direction incorporates language models to assist cross-domain alignment by mapping graphs into textual space, such as \texttt{OFA}~\cite{liu2024one} and \texttt{HiGPT}~\cite{tang2024higpt}, though such methods are restricted to text-attributed graphs~\cite{wang2024can,tan2024walklm}. Complementary strategies introduce structural or token-level mechanisms to bridge domains, including virtual nodes in \texttt{GCOPE}~\cite{zhao2024all} and domain tokens in \texttt{MDGPT}~\cite{yu2024text}, yet they overlook semantic consistency over heterogeneous domains.

There are some attempts to use GFMs as retrievers in GraphRAG, exemplified by \texttt{GFM-RAG}~\cite{luo2025gfm}. However, GFM pre-training acts as a semantic black box and does not explicitly align semantics across domains, causing retrieval to rely on implicit representation similarity and limiting robustness under domain shift.

\subsection{Critical Subgraph Mining}
\label{app:related_work_subgraph}
Graph structures contain rich relational semantics, yet only a small fraction of them are truly decisive for downstream tasks. Though structures have long been recognized as fundamental to graph representation learning~\cite{yang2018node, alsentzer2020subgraph}, systematic approaches for identifying task-relevant and data-dependent subgraphs remain underexplored.

Early efforts mainly relied on predefined structure decomposition, most notably graph kernels~\cite{kriege2020survey}, which extract specific patterns such as random walks~\cite{gartner2003graph}, shortest paths~\cite{borgwardt2005shortest}, or rooted subtrees~\cite{shervashidze2011weisfeiler}. These methods benefit from strong inductive bias and can achieve competitive results in certain domains. However, the structural patterns encoded by kernel functions are typically handcrafted and fixed, resulting in limited flexibility and poor generalization when task requirements or graph semantics vary. 
Another prominent direction focuses on small recurring local patterns, especially motifs, to preserve fine-grained structural properties~\cite{monti2018motifnet, lee2019graph, peng2020motif}. \texttt{NEST}~\cite{yang2018node} further combines multiple motifs to capture richer subgraph-level interactions. Despite their effectiveness, motif-based approaches fundamentally rely on exhaustive enumeration of small substructures, which requires domain expertise and limits scalability to larger or irregular patterns.

To move beyond rigid neighborhoods, several studies investigate more flexible subgraph modeling strategies~\cite{sun2021sugar, li2023adaptive, lee2025pre, li2025disentangling}. \texttt{SGN}~\cite{xuan2019subgraph} expands the structural feature space via manually specified subgraph selection rules, but such heuristics become less informative as subgraphs grow larger. \texttt{NCAT}~\cite{zuo2021neighbor} assigns learnable weights to combinatorial neighbor patterns, though it still operates under fixed-size constraints. From an information-theoretic perspective, \texttt{SIB}~\cite{yu2020graph, yu2021recognizing} learns edge-wise assignments to identify bottleneck subgraphs, but it does not explicitly preserve subgraph semantics. \texttt{CAL}~\cite{sui2022causal} further explores causal subgraph discovery by separating causal signals from shortcut correlations through attention-based estimation.

Subgraphs are rarely used for retrieval. \texttt{SubgraphRAG}~\cite{li2025simple} retrieves query-centered subgraphs via triple-level scoring under a size budget, but focuses on coverage rather than task-specific sufficiency, resulting in non-minimal retrieved structures and leaving critical subgraph identification unaddressed.
\section{Limitations}
\label{app:limitation}
Despite the promising performance of the proposed \modelname~in cross-domain retrieval and multi-hop reasoning, we acknowledge several limitations that define the scope of our current work and point towards future research directions

\textbf{Dependency on Knowledge Graph Construction.} 
Our framework primarily focuses on the retrieval and reasoning stages, operating on the premise of a pre-constructed knowledge graph. While we utilize advanced OpenIE techniques driven by LLMs for indexing, the upper bound of retrieval performance is naturally constrained by the completeness of this upstream knowledge graph construction. In scenarios where the raw corpora are extremely noisy or the extracted triples suffer from severe sparsity, the semantic expressiveness of the GFM could be theoretically limited by the quality of the underlying structure.

\textbf{Trade-off in Path Extraction Depth.}
To ensure real-time inference efficiency, we employ a truncated Depth-First Search (DFS) strategy with a pre-defined hop limit ($l \leqslant 3$) to extract reasoning paths from the selected subgraph. While our empirical analysis (Appendix~\ref{app:complexity_inference}) confirms that this setting covers the reasoning complexity of most standard benchmarks, it may introduce constraints when handling hyper-complex queries that require extremely long-range dependencies. Extending the path extraction mechanism to support deeper traversal without incurring exponential computational overhead remains a challenge for future optimization.

\textbf{Adaptability to Knowledge Updates.}
\modelname~leverages a ``pretrain-then-finetune'' paradigm, which assumes a relatively static knowledge snapshot during the inference phase. Although the Graph Foundation Model exhibits strong generalization across domains, adapting the indexed knowledge graph to real-time data streams or highly temporal facts without full re-indexing poses a practical challenge. We envision future work incorporating continual learning to address this limitation in time-sensitive applications.

\section{Future Directions}
\label{app:future}
Building upon the structural retrieval paradigm established, we envision several promising avenues to further advance the synergy between GFMs and LLMs:

\textbf{Dynamic and Continual GraphRAG.}
Real-world knowledge is inherently temporal and evolving. A natural extension of this work is to integrate continual graph learning mechanisms into the GFM backbone. Future research could explore efficient inductive updates that enable the retriever to dynamically integrate new entities and relations from streaming data without requiring full-scale re-indexing. This would enable the system to handle time-sensitive queries in scenarios such as breaking news analysis or financial monitoring.

\textbf{Alignment via Graph Instruction Tuning.}
While the proposed \modelname~currently bridges graphs and LLMs via in-context prompting, tighter integration could yield further performance gains. We plan to explore parameter-efficient joint training strategies that align the representation space of the GFM directly with the LLM. By treating subgraph retrieval as a latent step in a unified instruction-tuning pipeline, we aim to construct integrated architectures capable of understanding complex structural instructions natively.

\textbf{GFM-driven Autonomous Agents.}
Beyond question answering tasks, the ability to retrieve minimal and sufficient subgraphs serves as a critical capability for autonomous agents navigating complex environments. We intend to adapt \modelname~as a ``cognitive map'' module for agents, where the retrieved subgraph represents the state space or action constraints. This could empower agents to perform multi-step planning and reasoning in knowledge-intensive environments, such as code repository navigation or scientific discovery.

\end{document}